\theoremstyle{definition}
\newtheorem{theorem}{Theorem}
\newtheorem{proposition}{Proposition}
\newtheorem{lemma}{Lemma}
\newtheorem{observation}{Observation}
\newtheorem{corollary}{Corollary}
\newtheorem*{definition}{Definition}
\theoremstyle{remark}
\newtheorem{remark}{Remark}
\newtheorem{example}{Example}
\DeclareMathOperator*{\argmax}{\arg\max}
\DeclareMathOperator*{\R}{ \mathbbm{R}}
\newcommand{\vst}{\vspace{3mm}}
\title{\vspace{-3em} Greedy Allocations and Equitable Matchings\thanks{I am grateful to Yeon-Koo Che, Federico Echenique, Chris Shannon, and numerous others for discussion and feedback. My thanks also to seminar participants at UNC, NYU, Rochester, UCLA, and U Chicago for helpful comments and conversations.}}
\author{Quitz\'{e} Valenzuela-Stookey\thanks{Department of Economics, UC Berkeley, \texttt{quitze@berkeley.edu}.}}
\date{October 7, 2022}
\begin{document}
\maketitle

\begin{center}
\vspace{-3em}
    \Large \textcolor{blue}{\href{https://drive.google.com/file/d/1IRfgHpEX9bqBCYEj8wD3YJt1S8RjnsR2/view?usp=sharing}{Click here for the latest version}}
\end{center}
\vst

\begin{abstract}
I provide a novel approach to characterizing the set of interim realizable allocations, in the spirit of \cite{matthews1984implementability} and \cite{border1991implementation}. The approach allows me to identify precisely why exact characterizations are difficult to obtain in some settings. The main results of the paper then show how to adapt the approach in order to obtain \textit{approximate} characterizations of the interim realizable set in such settings. 

As an application, I study multi-item allocation problems when agents have capacity constraints. I identify necessary conditions for interim realizability, and show that these conditions are sufficient for realizability when the interim allocation in question is scaled by $\frac{1}{2}$. I then characterize a subset of the realizable polytope which contains all such scaled allocations. This polytope is generated by a majorization relationship between the scaled interim allocations and allocations induced by a certain ``greedy algorithm''. I use these results to study mechanism design with equity concerns and model ambiguity. I also relate optimal mechanisms to the commonly used deferred acceptance and serial dictatorship matching algorithms. For example, I provide conditions on the principal's objective such that by carefully choosing school priorities and running deferred acceptance, the principal can guarantee at least half of the optimal (full information) payoff.   
\end{abstract}

\newpage

In an allocation problem, a designer specifies a rule for choosing among a set of alternatives as a function of agents' types. Specifically, consider a setting with a finite set $I$ of agents, each of whom has a type drawn from a finite set $T_i$. Let $T = T_1\times T_2 \times \dots \times T_I$ be the space of type profiles. An allocation is a function $q: T \rightarrow \mathbb{R}^N$ such that for each $t \in T$, $q(t) \in P^{XP}(t)$. The set-valued function $P^{XP}(\cdot)$ represents feasibility constraints on the assignment of alternatives in $\mathbb{R}^N$ to type profiles. The abstract allocation problem described above nests many problems in social choice, matching, and mechanism design. For example, in a single-item auction environment $q$ describes the probability that each agent receives the item, as a function of every agent's reported value for it.\footnote{This is only a part of the description of an auction; a full description includes the payments made by each agent. In general, however, payments are pinned down by the allocation rule \citep{myerson1981optimal}.} 

The above description makes no mention of incentives: in many settings agents privately observe their types, and may have incentives to misreport these to the designer in order to manipulate the assignment. Since the constraint set $P^{XP}$ is allowed to depend on $t$, the above formulation can capture settings in which ex-post (conditional on $t$) incentive constraints are imposed. Interim incentive constraints, however, cannot be characterized in this way.

Assume that there is a common prior $\mu$ on $T$, and let $\mu_i(\cdot|t_i)$ be the belief of agent $i$ over the types of the other agents, conditional on $i$ having type $t_i$. Interim incentive compatibility requires that agents be willing to report their types truthfully, given that they know only the allocation rule $q$, their own type, and the distribution $\mu_i(\cdot|t_i)$ over the set $T_{-i}$ of other agents' types. Ex-post incentive compatibility is sufficient to guarantee interim incentive compatibility, but is in general a more restrictive requirement.\footnote{There important settings in which ex-post and interim incentive compatibility are in fact equivalent, see \cite{manelli2010bayesian} and \cite{gershkov2013equivalence}. These results are closely related to the substance of the current paper.}  

Interim incentive compatibility for agent $i$ is a property of the \textit{interim allocation rule} $Q_i : T_i \rightarrow \mathbb{R}^N$ induced by the allocation rule $q$, which is defined by
\begin{equation}\label{eq:intro_interim}
    Q_i(\tau) := \mathbb{E}_{t_{-i}}\left[ q(t)| t_i = \tau \right] = \sum_{t_{-i} \in T_{-i}} q(t_{-i},\tau)\mu_i(t_{-i}|\tau).
\end{equation}
The interim allocation rule describes the distribution over outcomes, conditional on agent $i$'s type.   

It is convenient, for the purpose of designing a mechanism, to work directly in the space of interim allocation rules, rather than the space of allocation rules. There are two reasons for this. First, the interim allocation rule is a much simpler object than the allocation rule; the later is defined on the space of type profiles, a potentially large product space. Second, as noted above, interim incentive compatibility is fundamentally a property of the interim allocation rule. Aside from imposing ex-post incentive compatibility, which may be a stronger property than is desired, it is not obvious how to impose interim incentive compatibility directly on an allocation rule $q$.

In order to work directly in the space of interim allocation rules, one must know which allocation rules are in fact legitimate, or \textit{realizable}, in the sense that they are induced as the marginals of some allocation rule as in \cref{eq:intro_interim}. 

This paper makes two main contributions to the large literature dedicated to characterizing the set of realizable interim allocation rules in various settings, beginning with \cite{maskin1984optimal} and \cite{border1991implementation}.\footnote{In this literature, an allocation rule is sometimes referred to as an auction, and an interim allocation rule as a reduced-form auction.} First, I provide a new framework for characterizing interim realizability in a general setting, nesting most of those previously studied. The approach is based on two ways of representing a convex set (in this case, the set of realizable interim allocations) as both \textit{i.} the convex hull of it's extreme points (the $v$-representation), and \textit{ii.} the intersection of the set of half-spaces that contain it (the $h$-representation). Using this approach, I generalize the characterization results of \cite{che2013generalized}, by allowing for the constraints on the ex-post assignment to depend on the type profile (\Cref{thm:exact_submodular}). 

The use of the $h$ and $v$-representations to understand interim realizability is not new, these properties of convex sets have been used in various forms by \cite{border1991implementation}, \cite{border2007reduced}, \cite{gershkov2013equivalence}, and \cite{goeree2022geometric}, among others. While my approach characterizing interim realizability differs from these it is, in a sense, not a major technological innovation. It is valuable, however, for two reasons. First, I provide a ``modular'' approach to characterizing interim realizability. By separating out the key components of the characterization procedure, this approach makes it easy to see exactly when and how a parsimonious characterization of interim realizability is attainable. Second, the approach suggests a way forward when a parsimonious characterization is difficult to obtain.  

This ability to identify when and why a parsimonious characterization cannot be obtained leads to the second, and more significant, contribution of the paper: a method for identifying parsimonious \textit{approximate} characterizations of the set of realizable interim allocations, when a simple exact characterization is not available. To be precise, I adapt the procedure introduced to identify necessary and sufficient conditions for interim realizability, to instead identify conditions that are necessary and \textit{approximately} sufficient (\Cref{theorem:approximation_cover_char}). Approximate sufficiency means that there exists an $\alpha \in (0,1)$ such that if $Q$ satisfies the conditions, $\alpha Q$ is realizable. 

As an application, I study interim realizability in a one-to-one matching setting (many of the results generalize easily to many-to-many matching). \cite{gopalan2018public} show that no computationally tractable exact characterization of realizable interim allocations exists in this setting. Even abandoning computational considerations, theoretically meaningful characterizations of interim reliability with multiple items remain elusive (see \Cref{sec:multi_item} for a discussion of the literature). I identify the fundamental source of this difficulty, and then provide an approximate characterization. I then use this approximate characterization to answer applied matching questions. For example, I provide a partial answer to the question: how well does the commonly used deferred acceptance (DA) algorithm do at maximizing the designers objective? I show that by carefully choosing schools' priority rankings over students, DA guarantees at least half of the principal's full-information payoff (\Cref{thm:DA_guarantee}). I also study problems in which the principal seeks to elicit cardinal preferences.

The main technical result of this section is an approximate extension of Border's theorem to the multi-item setting (\Cref{thm:half_char}). The result also relates interim realizability to majorization by allocations induced by a certain ``greedy algorithm''. I then build on this to better understand the design of optimal allocation mechanisms under various principal objectives, including concerns for equity and robustness to model uncertainty, and relate optimal principal payoffs to that which can be achieved under the good properties approach (\Cref{sec:applications}). For example, I provide a partial answer to the question: how well does the deferred acceptance algorithm do at maximizing the principal's objective? I show that by carefully choosing schools' priority rankings over students, DA guarantees at least half of the principal's full-information payoff (\Cref{thm:DA_guarantee}). I also study problems in which the principal seeks to elicit cardinal preferences.

\section{Model}

I begin with a general description of the interim realizability problem, which nests existing models of single-item allocations with constraints \citep{che2013generalized}, public-goods problems as in \cite{goeree2022geometric}, and the multi-item setting which I focus on in \Cref{sec:multi_item}. 

Let $U = \{1,\dots,|U| \}$ be a finite set of \textit{units}, with typical element $u \in U$ (when it will not cause confusion, I also use $U$ to denote the number of units). Each unit $u$ has a type drawn from finite set $T_u$, with typical element $\tau$.\footnote{In a single item allocation problem, we would think of the units as the agents participating in the mechanism. In a multi-item problem, a unit will be a pair of ``agent'' + ``item'', as illustrated below. For the general formulation, I use the term ``unit'' to avoid confusion.} Let $T = T_1\times\dots\times T_N$ be the set of type profiles, with typical element $t$, where $t_u$ denotes unit $u$'s type in profile $t$. I refer to a realized type profile as a \textit{state}. Types are distributed according to the probability measure $\mu$ on $T$.\footnote{Independence of types will of course play a role when discussing incentives. With a single item, type independence can also be used to simplify the characterization of interim realizability, as in \cite{border2007reduced}. The extent to which this is possible more generally, for example with multiple items, remains an open question.} Let $\mu_u(t_{-u}|\tau)$ be the conditional distribution on $T_{-i}$ given $t_u = \tau$, and let $\mu_u^{\bullet}(\cdot)$ be the marginal distribution over $T_u$. I use the notation $t \sim (u,\tau)$ to denote that $t_u = \tau$.

It is convenient to define the \textit{non-null disjoint union} of types, $\mathcal{T}^* = \{(u,\tau) : \mu_u^{\bullet}(\tau) >0 \}$. That is, $\mathcal{T}^*$ is the set of unit-type pairs that might realize. Finally, for any $A \subseteq \mathcal{T}$ and $t\in T$, let $S(A,t) := \{ u : (u,t_u) \in A \}$.

An \textit{allocation rule}, or simply allocation, is a map $q:U\times T \rightarrow \mathbb{R}_+$ such that $q(\cdot,t) \in P^{XP}(t)$ for some polytope $P^{XP}(t)$. I refer to $P^{XP}(t)$ as the \textit{ex-post assignment polytope}, or just ex-post polytope, in state $t$. I refer to a vector $\rho \in P^{XP}(t)$ as an \textit{assignment} in state $t$. To simplify the exposition, I focus the discussion on the case where $P^{XP}(t) = P^{XP}$ for all $t$. \Cref{sec:P_I} makes it clear that the discussion extends easily to the case in which the ex-post polytope varies with $t$. The results apply to the general case where the ex-post polytope is type dependant. 

The following are examples of settings which can be modeled with this framework. 

\vst
\noindent\textit{Example 1: single-item with set constraints}. There is one unit of an infinitely divisible item, to be allocated among $I$ agents. In this setting each unit is an agent, i.e. $U = I$. The allocation $q(t,u)$ is the quantity of the item assigned to agent $u$ in state $t$. There may be constraints on the ex-post allocation: $P^{XP}(t)$ is the set of $\rho : U \rightarrow \mathbb{R}$ satisfying 
\begin{equation}\label{eq:constrained_allocation}
    L(A,t) \leq \sum_{i \in A} \rho(i) \leq C(A,t) \quad \forall \ A \subset U \text{ and } \forall t\in T,
\end{equation}
where $C(\cdot,t): 2^U \rightarrow \mathbb{R}_+$ and $L(\cdot,t): 2^U \rightarrow \mathbb{R}_+$ are such that $C(\varnothing,t) = L(\varnothing,t) = 0$ for all $t$. I refer to $C$ as the upper-constraint function, and $L$ as the lower-constraint function.
(Equivalently, there is a single indivisible item, and $q(t,u)$ is the probability that $u$ gets the item in state $t$. Under this interpretation, the constraints $L$ and $C$ are imposed not on ex-post allocations, but on the expected allocation conditional on the type profile). \cite{che2013generalized} study the special case in which $C$ and $L$ do not depend on $t$. The classic setting of \cite{border1991implementation}, in which the only constraints come from the unit supply of the item, corresponds to $L(A) = 0$ and $C(A) = 1$ for all $A \subset I$. 

\vst
\noindent\textit{Example 2: multiple items}. There are $N$ items and $I$ agents. In this case, a unit is a pair $(j,n)$ or an agent $j$ and an item $n$. To map this into the current framework, we impose the restriction that $t_{(j,n)} = t_{(j,n')}$ for all $j\in I, n,n' \in N$, and $t \in T$. The ex-post polytope for this setting is discussed in detail in \Cref{sec:mainresults}. 

\vst 
\noindent\textit{Example 3: public-goods problems \citep{goeree2022geometric}}. Consider a public goods problem with $I$ agents and $N$ alternatives. A unit is an alternative-agent pair. The ex-post polytope is given by the set of $\rho: I\times N \rightarrow \mathbb{R}_+$ such that 
\begin{enumerate}[i.]
    \item $\rho(i,n) = \rho(j,n) $ for all $n\in N$ and $i,j \in I$.
    \item $\sum_{(i,n) \in I\times N} \rho(i,n) \leq 1$. 
\end{enumerate}

\vst
Given an allocation rule, $q$, we can obtain an \textit{interim allocation} rule $Q: \mathcal{T}^* \rightarrow \R_+$ by averaging each unit's allocation over the types of other units: 
\begin{equation}\label{eq:general_interim_def}
    Q(u,\tau) := E_{t_{-u}\sim \mu_u(\cdot|\tau)}[q(u,t_{-u},\tau)] = \sum_{t_{-u} \in T_{-u}} q(u,t_{-u},\tau) \mu_u(t_{-u}|\tau).
\end{equation}
Conversely, given a function $Q:\mathcal{T}^* \rightarrow \R_+$, we say that $Q$ is \textit{realizable} if there exists some allocation $q$ that induces it, i.e. such that \cref{eq:general_interim_def} holds. Let $\mathcal{I}$ be the set of realizable interim allocations. The goal is to obtain a more convenient characterization of $\mathcal{I}$, in particular one that does not include an existential qualifier. Moreover, we would like this characterization to be simple, in the sense that it is relatively easy to check whether a given $Q$ is in $\mathcal{I}$. Ideally, the characterization should also facilitate optimization over $\mathcal{I}$. 

\subsection{Preliminary observations}

The set of realizable interim allocations, $\mathcal{I}$, is a polytope. This is immediate from the fact that $Q \in \mathcal{I}$ is a linear function of allocation rule $q$ that realizes it, and $q(\cdot,t)$ is constrained to a polytope for all $t \in T$. The strategy for obtaining simple characterizations of $\mathcal{I}$ makes use of three basic facts about polytopes. 

\vst
\noindent\textbf{Fact 1.} Every linear function $\lambda : \mathcal{I} \rightarrow \mathbbm{R}$ obtains its maximum on an extreme point of $\mathcal{I}$.

\vst
\noindent\textbf{Fact 2.} Every extreme point of $\mathcal{I}$ is the unique maximizer of some linear function on $\mathcal{I}$.

\vst
\noindent\textbf{Fact 3.} $Q \in \mathcal{I}$ if and only if for any linear function $f$ on $\mathcal{I}$ there is an extreme point $Q'$ of $\mathcal{I}$ such that $\lambda(Q) \leq \lambda(Q')$. (Separating hyperplane theorem). 

\vst
Identify each linear function on $\mathcal{I}$ with a function on $\mathcal{T}^*$: for a linear function $\lambda$ on $\mathcal{T}^*$, I abuse notation and write $\lambda(Q) = \sum_{(u,\tau) \in \mathcal{T}^*} \lambda(u,\tau)Q(u,\tau)$. Let $\Lambda$ be the space real functions on $\mathcal{T}^*$, normalized so that $\left|\frac{\lambda(u,\tau)}{\mu_u^{\bullet}(\tau)}\right| \leq 1$ for all $(u,\tau) \in \mathcal{T}^*$ (normalization is without loss of generality, and this one happens to be convenient). 

\vst
\begin{observation}\label{remark:trivial}
Facts 3 is just Fact 1 plus the separating hyperplane theorem. In other words, $\mathcal{I}$ is equal to the intersection of the halfspaces that contain it. This yields a trivial characterization of $\mathcal{I}$
\begin{equation}\tag{trivial characterization}
    Q \in \mathcal{I} \text{ iff } \lambda(Q) \leq \max_{Q' \in \mathcal{I}}\lambda(Q') \text{ for all } \ \lambda \in \Lambda. 
\end{equation}
The (convex) function $\lambda \mapsto \max_{Q' \in \mathcal{I}} \lambda(Q')$ is know as the \textit{support function}.
\end{observation}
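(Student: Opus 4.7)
The plan is to establish the biconditional in the displayed trivial characterization; the other two sentences of the observation are then immediate reformulations of the same content.

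The forward implication is essentially tautological: if $Q \in \mathcal{I}$, then $Q$ itself is a feasible point in the maximization on the right-hand side, so $\lambda(Q) \le \max_{Q' \in \mathcal{I}} \lambda(Q')$ for every $\lambda \in \Lambda$, with Fact 1 ensuring that the maximum is attained (without Fact 1 the weak inequality would still hold with $\sup$ in place of $\max$).

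For the reverse implication I would argue by contraposition. Suppose $Q \notin \mathcal{I}$. As noted in the preliminary observations, $\mathcal{I}$ is a polytope, hence a closed convex subset of the finite-dimensional ambient space $\mathbb{R}^{\mathcal{T}^*}$. The separating hyperplane theorem therefore produces a nonzero linear functional $\tilde\lambda$ on this space together with a scalar $c$ satisfying $\tilde\lambda(Q) > c \ge \tilde\lambda(Q')$ for every $Q' \in \mathcal{I}$. Identifying $\tilde\lambda$ with a function on $\mathcal{T}^*$ as in the setup, and applying Fact 1 to the linear function $\tilde\lambda$, the supremum on the right is attained, so $\tilde\lambda(Q) > \max_{Q' \in \mathcal{I}} \tilde\lambda(Q')$. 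Finally, rescaling by the positive constant $1/\max_{(u,\tau) \in \mathcal{T}^*} |\tilde\lambda(u,\tau)/\mu_u^\bullet(\tau)|$ (positive because $\tilde\lambda \ne 0$ and $\mu_u^\bullet(\tau) > 0$ on $\mathcal{T}^*$) yields a $\lambda \in \Lambda$ that witnesses the failure of the characterization, the strict inequality being preserved since both sides scale identically.

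There is no genuine obstacle; the statement is just the standard H-representation of a polytope, repackaged in the paper's normalized notation. The first sentence of the observation is a structural summary of the argument above, namely that separating hyperplane yields some separating $Q' \in \mathcal{I}$, which Fact 1 then lets us take to be an extreme point. The substantive work, deferred to later sections, is obtaining tractable closed-form expressions for the support function $\lambda \mapsto \max_{Q' \in \mathcal{I}} \lambda(Q')$ in the settings of interest.
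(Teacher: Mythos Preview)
Your proof is correct and follows exactly the approach the paper intends: the observation is stated without proof in the paper, merely noting that Fact 3 is Fact 1 plus the separating hyperplane theorem, and you have simply filled in the standard details (including the rescaling step needed to land in the normalized set $\Lambda$).
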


The objective is to simplify the trivial characterization, or baring this, obtain a simple but approximate characterization of $\mathcal{I}$ by exploiting the convex structure of $\mathcal{I}$. 

\section{The simple geometry of interim realizability}\label{sec:geometry}

I begin by outlining the high level approach to to obtaining exact and approximate characterizations of $\mathcal{I}$. This makes clear the steps involved in going from the trivial characterization of $\mathcal{I}$ in \Cref{remark:trivial} to a more parsimonious characterization. This approach helps clarify why existing characterizations, such as \cite{border2007reduced} and \cite{che2013generalized}, take the form that they do (we will be able to give simple proofs of these results). More importantly, this framework makes it clear how existing results can be extended, why parsimonious characterizations remain elusive in some settings (such as the one-to-one matching problem), and how to go about finding approximate characterizations in these cases.

\subsection{Exact characterization}\label{sec:exact_outline}

An $h$-representation of a polytope $I$ consists of a set of half-spaces the intersection of which is exactly $I$.\footnote{The general insight I exploit in this section, that the polytope $\mathcal{I}$ can be understood via its support function, is not new. See for example \cite{vohra2011mechanism} and \cite{goeree2022geometric}, the latter of which is most similar to the current treatment. The approach to characterizing the $h$-representation (or support function) of $\mathcal{I}$ differs from \cite{goeree2022geometric} however. One way to understand this difference is that, by studying equivalence covers (see below), I characterize the sections of $\Lambda$ over which the support function is linear. The value of formulating the exact characterization in this way is that the approach extends naturally to approximate characterizations.} The trivial characterization in \Cref{remark:trivial} is an $h$-representation, however it is unsatisfying as a characterization for two reasons
\begin{enumerate}
    \item It requires checking infinitely many $\lambda$'s.
    \item For each $\lambda$, it requires maximizing over $\mathcal{I}$. 
\end{enumerate}
Ideally, we would like a characterization via an $h$-representation of the form
\begin{equation}\label{eq:border_type}
    Q \in \mathcal{I} \Longleftrightarrow \lambda(Q) \leq b (\lambda) \quad \forall \ \lambda \in \Lambda^*.
\end{equation}
for some ``small'' set $\Lambda^* \subset \Lambda$ and some known function $b:\Lambda^* \rightarrow \mathbb{R}$. In other words, we want a parsimonious $h$-representation. 

It turns out that the key to obtaining such a representation is to first identify the extreme points of $\mathcal{I}$. This yields the so-called \textit{v-representation} of $\mathcal{I}$ (a polytope is the convex hull of its extreme points). We then use the $v$-representation to achieve a parsimonious $h$-representation by
\begin{enumerate}
    \item Identifying a finite set $\Lambda^*$ of normal vectors for an $h$-representation of $\mathcal{I}$.
    \item Characterizing the function $b(\lambda) := \max_{Q' \in \mathcal{I}} \lambda(Q')$.
\end{enumerate}

As will become clear, the theorems of \cite{border1991implementation}, \cite{border2007reduced}, and \cite{che2013generalized}, among others, are precisely about obtaining such an $h$-representation of $\mathcal{I}$.

To begin, assume that we have characterized the extreme points of $\mathcal{I}$. An exact characterization is easy to obtain precisely when the extreme points of $\mathcal{I}$ admit a simple description. In \Cref{sec:P_I} I then discuss how the structure of interim allocations simplifies the problem of characterizing extreme points of $\mathcal{I}$: this problem reduces to that of characterizing extreme points of $P^{XP}$. 

We now use the characterization of $ext(\mathcal{I})$ to identify a subset of $\Lambda$ which identifies all supporting hyperplanes of $\mathcal{I}$. 

\begin{definition}
An \textit{equivalence cover} of $\Lambda$ is collection $\{E(Q^*)\}_{Q^* \in ext(\mathcal{I})}$ such that
\begin{enumerate}
    \item $\{E(Q^*)\}_{Q^* \in ext(\mathcal{I})}$ covers $\Lambda$, i.e. $\Lambda \subset \cup_{Q^* \in ext(\mathcal{I})} E(Q^*)$.
    \item For all $Q^* \in ext(\mathcal{I})$
    \begin{equation}\label{eq:equivalence_cone}
   E(Q^*)\subset \left\{ \lambda \in \Lambda : \max_{Q' \in \mathcal{I}} \lambda(Q') = \lambda(Q^*) \right\}
\end{equation}
\end{enumerate}
\end{definition}

Given an equivalence cover $\{E(Q^*)\}_{Q^* \in ext(\mathcal{I})}$, refer to $E^*(Q)$ as the \textit{equivalence set} of $Q^*$. In other words, $E(Q^*)$ is a subset of the normal vectors corresponding to hyperplanes that bind at $Q^*$.\footnote{One way to obtain an equivalence cover of $\Lambda$ is to let $E(Q^*) = \left\{ \lambda \in \Lambda : \max_{Q' \in \mathcal{I}} \lambda(Q') = \lambda(Q^*) \right\}$. In this case the collection $\{ E(Q^*)\}$ will cover $\Lambda$ by \Cref{remark:trivial}. However given that there may be significant overlap in the set of binding constraints across extreme points, it is convenient to allow for smaller covers.}

Return now to the trivial characterization of $\mathcal{I}$, which can be written as follows: $Q \in \mathcal{I}$ iff
\begin{equation*}
    \min_{\lambda \in \Lambda} \max_{Q' \in \mathcal{I}} \lambda(Q') - \lambda(Q) \geq 0.
\end{equation*}
Since $\lambda \mapsto \max_{Q' \in \mathcal{I}} \lambda(Q') - \lambda(Q)$ is the upper envelope of linear functions, and thus convex, the minimizing $\lambda$ will generally be interior. However once we know the set of extreme points $ext(\mathcal{I})$ and identify an equivalence cover, we just need to check that for each extreme point $Q^* \in ext(\mathcal{I})$ and each $\lambda \in E(Q^*)$, we have $\lambda(Q) \leq \lambda(Q^*)$. In other words, for each $Q^*$ we need to check
\begin{align}
    \min_{\lambda \in E(Q^*)} &\Big\{ \max_{Q' \in \mathcal{I}}\big\{\lambda(Q') - \lambda(Q)\big\}\Big\} \geq 0 \label{eq:affine_on_cones1}\\
    & \Longleftrightarrow  \min_{\lambda \in E(Q^*)} \Big\{ \lambda(Q^*) - \lambda(Q)\Big\} \geq 0 \label{eq:affine_on_cones2}
\end{align}The benefit of having identified the equivalence set associated with each extreme point is illustrated by the equivalence in \Cref{eq:affine_on_cones2}. The objective $\lambda \mapsto \max_{Q' \in \mathcal{I}}\big\{\lambda(Q') - \lambda(Q)\big\}$ is convex on $\Lambda$, but it is affine on each equivalence cone $E(Q^*)$. As a result, for each equivalence cone, $E(Q^*)$, we only need to check $\lambda(Q) \leq \lambda(Q^*)$ for $\lambda \in ext(E(Q^*))$, the extreme points of $E(Q^*)$. 

Let $\Lambda^* := \cup_{Q^*\in ext(\mathcal{I})} ext(E(Q^*))$ be the union of the extreme points of the equivalence sets. For any $\lambda \in \Lambda^*$, let $b^*(\lambda) = \{\lambda(Q^*) : \lambda \in E(Q^*)\}$. Then by construction, $b^*$ is a real-valued function. 

\begin{lemma}\label{lem:equivalencecover_char}
Given an equivalence cover $\{E(Q^*)\}_{Q^* \in ext(\mathcal{I})}$, define $\Lambda^*$ and $b^*$ as above. Then $Q \in \mathcal{I}$ if and only if
\begin{equation*}
    \lambda(Q) \leq b^*(\lambda) \quad \forall \ \lambda \in \Lambda^*.
\end{equation*}
\end{lemma}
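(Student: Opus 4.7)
The plan is to prove the two implications separately; the forward direction (necessity) is essentially a restatement of the trivial characterization applied to normal vectors selected from the cover, while the reverse direction (sufficiency) exploits the affinity pointed out in \eqref{eq:affine_on_cones2} to reduce an infinite family of inequalities in $\Lambda$ to the finite-looking family indexed by $\Lambda^*$. Before starting, I would verify that $b^*$ is well-defined: if $\lambda \in E(Q^*) \cap E(Q^{**})$ for two distinct extreme points, then property 2 of the cover forces $\lambda(Q^*) = \max_{Q'\in\mathcal{I}}\lambda(Q') = \lambda(Q^{**})$, so the assignment $b^*(\lambda)=\lambda(Q^*)$ does not depend on which $E(Q^*)$ containing $\lambda$ is chosen.

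For necessity, I would take any $Q \in \mathcal{I}$ and any $\lambda \in \Lambda^*$. By definition of $\Lambda^*$ there is some extreme point $Q^*$ with $\lambda \in \mathrm{ext}(E(Q^*)) \subseteq E(Q^*)$. Property 2 of the equivalence cover gives $\max_{Q' \in \mathcal{I}}\lambda(Q') = \lambda(Q^*) = b^*(\lambda)$, and the trivial characterization in Observation~\ref{remark:trivial} gives $\lambda(Q) \leq \max_{Q' \in \mathcal{I}}\lambda(Q')$, chaining to $\lambda(Q) \leq b^*(\lambda)$.

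For sufficiency, assume $\lambda(Q) \leq b^*(\lambda)$ for every $\lambda \in \Lambda^*$. By the trivial characterization it suffices to show $\lambda(Q) \leq \max_{Q' \in \mathcal{I}}\lambda(Q')$ for an arbitrary $\lambda \in \Lambda$. Fix such a $\lambda$; by property 1 of the equivalence cover, $\lambda \in E(Q^*)$ for some $Q^* \in \mathrm{ext}(\mathcal{I})$. Writing $\lambda$ as a convex combination $\lambda = \sum_k \alpha_k \lambda_k$ with $\lambda_k \in \mathrm{ext}(E(Q^*)) \subseteq \Lambda^*$ (available because $E(Q^*)$ is a convex set presented via its extreme points in the definition of $\Lambda^*$), I would compute
\begin{equation*}
\lambda(Q) \;=\; \sum_k \alpha_k \lambda_k(Q) \;\leq\; \sum_k \alpha_k\, b^*(\lambda_k) \;=\; \sum_k \alpha_k \lambda_k(Q^*) \;=\; \lambda(Q^*) \;=\; \max_{Q' \in \mathcal{I}}\lambda(Q'),
\end{equation*}
where the first inequality uses the hypothesis applied to each $\lambda_k \in \Lambda^*$, the next equality uses $b^*(\lambda_k) = \lambda_k(Q^*)$ (valid since each $\lambda_k \in E(Q^*)$), the penultimate equality is linearity of the pairing in $\lambda$, and the last equality is property 2 of the cover at $\lambda \in E(Q^*)$. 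This is exactly the trivial characterization inequality at $\lambda$, giving $Q \in \mathcal{I}$.

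The only substantive step is the convex-combination decomposition in the sufficiency argument, and it is where the definition of the equivalence cover is doing the real work: the point is precisely that the convex function $\lambda \mapsto \max_{Q' \in \mathcal{I}} \lambda(Q') - \lambda(Q)$ restricts to an affine function on each $E(Q^*)$, so its nonnegativity on the extreme points of $E(Q^*)$ propagates to all of $E(Q^*)$. I anticipate no genuine obstacle beyond tracking well-definedness of $b^*$ and making sure the equivalence sets are convex sets presented by their extreme points (which is built into the hypotheses via the definition of $\Lambda^*$).
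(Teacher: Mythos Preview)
Your proposal is correct and follows the same approach the paper sketches in the paragraph preceding the lemma (the paper does not give a separate formal proof): both arguments exploit that $\lambda \mapsto \max_{Q'\in\mathcal{I}}\lambda(Q') - \lambda(Q)$ is affine on each $E(Q^*)$, so checking the inequality at the extreme points of $E(Q^*)$ suffices. Your additional verification that $b^*$ is well-defined on overlaps is a useful clarification the paper omits.
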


\begin{remark}
\Cref{lem:equivalencecover_char} can be understood as a way to characterize the support function of $\mathcal{I}$, which is defined as the function $\lambda \rightarrow \max \{ \lambda(Q) : Q \in \mathcal{I}\}$. The ``modular'' approach to characterizing the support function, whereby we first identify the extreme points of $\mathcal{I}$ and an equivalence cover, is a convenient way to decompose the problem by identifying subsets of $\Lambda$ over which the support function is linear. Moreover, this approach can be adapted to obtain a parsimonious approximate characterization in case where the support function is complicated and/or not easily characterized. This is shown in \Cref{sec:approx_outline}. 
\end{remark}

The approach to characterizing $\mathcal{I}$ by identifying an equivalence cover is illustrated in \Cref{fig:exact}. The set $\mathcal{I}$ is the grey shaded region (the shape of this polytope is not important, the figure is just meant to illustrate the general procedure). For an extreme point of $\mathcal{I}$, say $Q_1$, there are three normal vectors illustrated for which $Q_1$ is maximal in $\mathcal{I}$: $\lambda_1,\lambda_2$ and $\lambda_4$. For any $j \in \{1,2,4\}$, and any $Q$, we have $\lambda_j(Q) \leq \max_{Q' \in \mathcal{I}}\lambda_{j}(Q') \Longleftrightarrow \lambda_j(Q) \leq \lambda_{j}(Q_1)$. The same holds for any $\lambda \in co(\{\lambda_1,\lambda_2,\lambda_4)$. Since in $co(\{\lambda_1,\lambda_2)$, it suffices to check only $\lambda_1$ and $\lambda_2$.

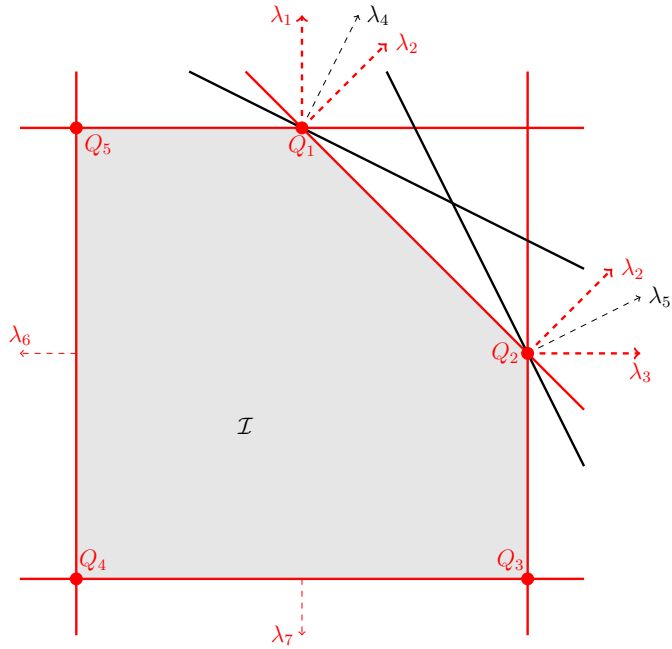
\begin{figure}
    \centering
    \scalebox{.75}{
    \begin{tikzpicture}
    
    \draw[fill = gray!20] (1,1) -- (1,9) -- (5,9) -- (9,5) -- (9,1) -- (1,1);
    
    \draw[very thick, red] (1,0) -- (1,10);
        \draw[dashed, red] [->] (1,5) -- (0,5) node[anchor = south] {$\lambda_6$};
    \draw[very thick, red] (0,1) -- (10,1);
        \draw[dashed, red] [->] (5,1) -- (5,0) node[anchor = east] {$\lambda_7$};
    \draw[very thick, red] (0,9) -- (10,9);
        \draw[very thick, red, dashed] [->] (5,9) -- (5,11) node[anchor = east] {$\lambda_1$};
    \draw[very thick, red] (9,0) -- (9,10);
        \draw[very thick, red, dashed] [->] (9,5) -- (11,5) node[anchor = north] {$\lambda_3$};
    \draw[very thick, red] (4,10) -- (10,4);
        \draw[very thick, red, dashed] [->] (5,9) -- (6.5,10.5) node[anchor = west] {$\lambda_2$};
        \draw[very thick, red, dashed] [->] (9,5) -- (10.5, 6.5) node[anchor = west] {$\lambda_2$};
    \draw[very thick] (3,10) -- (10,6.5);
        \draw[dashed] [->] (5,9) -- (6,11) node[anchor = west] {$\lambda_4$};
    \draw[very thick] (10,3) -- (6.5,10);
        \draw[dashed] [->] (9,5) -- (11,6) node[anchor = west] {$\lambda_5$};
        
    \draw[fill = gray!20, opacity = .1] (1,1) -- (1,9) -- (5,9) -- (9,5) -- (9,1) -- (1,1);
    \node [below] at (4,4) {$\mathcal{I}$};
    
    \filldraw[red] (5,9) circle (3pt) node[anchor = north] {$Q_1$};
    \filldraw[red] (9,5) circle (3pt) node[anchor = east] {$Q_2$};
    \filldraw[red] (9,1) circle (3pt);
        \node[red, above] at (8.7,1)  {$Q_3$};
    \filldraw[red] (1,1) circle (3pt);
        \node[red, above] at (1.3,1)  {$Q_4$};    
    \filldraw[red] (1,9) circle (3pt);
        \node[red, right] at (1,8.7)  {$Q_5$};        
    \end{tikzpicture}
    }
    
    \caption{Geometry of exact characterization}
    \label{fig:exact}
\end{figure}

I observe below (\Cref{lem:P_XP}) that the problem of characterizing the extreme points of $\mathcal{I}$ boils down to characterizing the extreme points of $P^{XP}$. For each $x^* \in ext(P^{XP})$, let $e(x^*) := \{ y \in [0,1]^U : y(x^*) = \argmax_{x \in P^{XP}} u(x) \}$.  Thus the simplicity of the characterization of $\mathcal{I}$ depends on 

\begin{enumerate}
    \item[S1.] How many extreme points $P^{XP}$ has (in other words, how large are the equivalence classes) and how easy are they to describe.
    \item[S2.] How many extreme points does each $e(x^*)$ have, and how easy are they to describe.\footnote{Equivalence classes for different extreme points may intersect, and so what really matters is the number of elements in the union of the extreme points of the equivalence classes.} 
\end{enumerate}

A parsimonious characterization of interim realizability holds only when conditions S1 and S2 are satisfied. When these conditions do not hold, it we face a trade-off between tractable and approximate characterizations. Before illustrating the technique for approximate characterization, I illustrate how the approach outlined here can be used to derive exact characterizations.  

\subsection{Approximate characterization}\label{sec:approx_outline}

A simple characterization of interim realizability only holds when conditions S1 and S2 are satisfied. If these conditions are do not hold, then it may be useful to look for a simple but approximate characterization. The key idea is to replicate as closely as possible the strategy detailed in \Cref{sec:exact_outline}, with modifications to account for the fact that extreme points of $\mathcal{I}$ do not admit a simple characterization. To do this, we relax the definition of equivalence cover.

\begin{definition}
A set of pairs $\{(Q^j, \hat{E}(Q^j))\}$, where each $Q^j \in \mathcal{I}$ and $\hat{E}(Q^j) \subset \Lambda$ is a polytope, is called an $\alpha$\textit{-approximation cover} if 
\begin{enumerate}
    \item $\{\hat{E}(Q^j)\}$ covers $\Lambda$
    \item For all $Q^j$,
    \begin{equation}\label{eq:approximation_cones}
        \lambda(Q^j) \geq \alpha \max_{Q'\in \mathcal{I}} \lambda(Q') \quad \forall \ \lambda \in \hat{E}(Q^j)
    \end{equation}
\end{enumerate}
\end{definition}
Given an $\alpha$-approximation cover $\{(Q^j, \hat{E}(Q^j)\}$, define the $\alpha$-\textit{approximation polytope} $\mathcal{I}^{\alpha} := co(\{Q^j\})$. Clearly $\mathcal{I}^{\alpha} \subset \mathcal{I}$.  

An equivalence cover $\{ E(Q^*) \}_{Q^ \in ext(\mathcal{I})}$ is a $1$-approximation cover, where we let $\{Q^j\} = ext(\mathcal{I})$ and $\hat{E}(Q^*j = E(Q^j)$. However for an equivalence cover, we know that $Q \in \mathcal{I}$ iff for all $Q^* \in ext(\mathcal{I})$
\begin{equation*}
    \min_{\lambda \in ext(E(Q^*))} \max_{Q' \in \mathcal{I}} \left\{ \lambda(Q') - \lambda(Q) \geq 0\right\}.
\end{equation*}
However given an $\alpha$-approximation cover with $\alpha < 1$, it is \textit{not} the case that
\begin{equation}\label{eq:approximate_characterization}
    \min_{\lambda \in ext(\hat{E}(Q^j))} \max_{Q' \in \mathcal{I}} \left\{ \lambda(Q') - \lambda(Q)\right\} \geq 0
\end{equation}
for all $Q^j$ implies that $Q \in \mathcal{I}$. The intersection of the half-spaces with normal vectors in $ext(\hat{E}(Q^j))$ we may be strictly larger than $\mathcal{I}$. However, the $\alpha$-approximation condition in \cref{eq:approximation_cones} guarantees that the condition in \cref{eq:approximate_characterization} is not ``too far'' from characterizing $\mathcal{I}$.

\begin{theorem}\label{theorem:approximation_cover_char}
Let $\{(Q^j, \hat{E}(Q^j)\}$ be an $\alpha$-approximation cover. If $Q \in \mathcal{I}$ then
\begin{equation}\label{eq:approximation_cover_char}
    \min_{\lambda \in ext(\hat{E}(Q^j))} \max_{Q' \in \mathcal{I}} \left\{\lambda(Q') - \lambda(Q) \right\} \geq 0 \quad \forall \ Q^j.
\end{equation}
Conversely, if \cref{eq:approximation_cover_char} holds then $\alpha Q \in \mathcal{I}^{\alpha} \subseteq \mathcal{I}$.
\end{theorem}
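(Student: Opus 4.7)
The necessity direction is essentially immediate from the definition of $\mathcal{I}$, while the approximate sufficiency direction uses the separating hyperplane theorem together with the two properties of an $\alpha$-approximation cover.

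For necessity, suppose $Q \in \mathcal{I}$. Then for every $\lambda \in \Lambda$, in particular for every $\lambda \in ext(\hat{E}(Q^j))$, we have $\lambda(Q) \leq \max_{Q' \in \mathcal{I}} \lambda(Q')$, which is exactly \eqref{eq:approximation_cover_char}. This is one line.

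For sufficiency, the goal is to show $\alpha Q \in \mathcal{I}^{\alpha} = co(\{Q^j\})$. By the separation theorem, it suffices to verify that for every linear functional $\lambda$,
\begin{equation*}
    \lambda(\alpha Q) \leq \max_{j} \lambda(Q^j).
\end{equation*}
By positive scaling, which preserves both sides, it is enough to check this for $\lambda \in \Lambda$. The argument then proceeds in three steps. First, by the covering property of $\{\hat{E}(Q^j)\}$, pick $j$ such that $\lambda \in \hat{E}(Q^j)$, and write $\lambda = \sum_k \beta_k \lambda_k$ as a convex combination of extreme points $\lambda_k \in ext(\hat{E}(Q^j))$. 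Second, for each such $\lambda_k$, combine the hypothesis \eqref{eq:approximation_cover_char}, which gives $\lambda_k(Q) \leq \max_{Q' \in \mathcal{I}} \lambda_k(Q')$, with the approximation condition \eqref{eq:approximation_cones}, which gives $\lambda_k(Q^j) \geq \alpha \max_{Q' \in \mathcal{I}} \lambda_k(Q')$, to conclude $\alpha \lambda_k(Q) \leq \lambda_k(Q^j)$. Third, take the corresponding convex combination:
\begin{equation*}
    \lambda(\alpha Q) = \sum_k \beta_k \alpha \lambda_k(Q) \leq \sum_k \beta_k \lambda_k(Q^j) = \lambda(Q^j) \leq \max_{j'} \lambda(Q^{j'}).
\end{equation*}
This gives $\alpha Q \in \mathcal{I}^{\alpha}$; the inclusion $\mathcal{I}^{\alpha} \subseteq \mathcal{I}$ is immediate because each $Q^j \in \mathcal{I}$ and $\mathcal{I}$ is convex.

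The only conceptually nontrivial step is the second one, where I combine the two inequalities to transfer from the unknown support value $\max_{Q' \in \mathcal{I}} \lambda_k(Q')$ to the known value $\lambda_k(Q^j)$; this is exactly the role the factor $\alpha$ plays, and is what forces the conclusion to be about $\alpha Q$ rather than $Q$. The potential annoyances are purely bookkeeping: verifying that the argument is sign-agnostic (the inequality $\alpha \lambda_k(Q) \leq \lambda_k(Q^j)$ follows from the chain above regardless of the sign of $\max_{Q' \in \mathcal{I}} \lambda_k(Q')$, since $\alpha > 0$), and checking that restricting to $\lambda \in \Lambda$ loses no generality because the separation inequality is invariant under positive rescaling of $\lambda$ (and trivial for $\lambda = 0$). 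No further structural hypothesis on $\mathcal{I}$ or on the cover beyond the two defining properties is needed.
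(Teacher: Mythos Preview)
Your proof is correct and follows essentially the same route as the paper's. Both arguments handle necessity trivially and, for sufficiency, pick $Q^j$ via the covering property, establish $\alpha\lambda_k(Q)\le\lambda_k(Q^j)$ at each extreme point $\lambda_k$ of $\hat{E}(Q^j)$ by chaining \eqref{eq:approximation_cover_char} with \eqref{eq:approximation_cones}, and then extend to all of $\hat{E}(Q^j)$ by convexity; you simply make the convex-combination step explicit where the paper leaves it implicit.
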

\begin{proof}
Necessity is obvious, since \cref{eq:approximation_cover_char} is implied by the trivial characterization of \Cref{remark:trivial}. We need to show that if $Q$ satisfies \cref{eq:approximation_cover_char} then for any $\lambda \in \Lambda$ there exists a vertex $Q^j$ of the $\alpha$-approximation polytope such that $\lambda(\alpha Q) \leq \lambda(Q^j)$. 

Let $\lambda'$ be arbitrary, and let $Q^j$ be such that $\lambda' \in \hat{E}(Q^j)$ (which exists since $\{\hat{E}(Q^j)\}$ covers $\Lambda$). Then we know $\lambda(Q^j) \geq \alpha \max_{Q'\in \mathcal{I}}\lambda(Q') \geq \alpha \lambda(Q)$ for all $\lambda \in ext(\hat{E}(Q^j))$, where the first inequality follows from definition of the $\alpha$-approximation cover, and the second by \cref{eq:approximation_cover_char}. Since $\lambda' \in \hat{E}(Q^j)$, we then have $\lambda'(Q^j) \geq \alpha \lambda'(Q) = \lambda'(\alpha Q)$, as desired. 
\end{proof}

The remaining questions are \textit{a}) whether one can identify an $\alpha$-approximation cover (ideally for $\alpha$ close to 1), and \textit{b)} if the function $\lambda \mapsto \max_{Q' \in \mathcal{I}} \lambda(Q')$ is easily characterized for $\lambda \in ext(\hat{E}(Q^j))$. In the next section I discuss how the structure of interim allocations simplifies this problem. However even without making use of this structure we can simplify the task of identifying an $\alpha$-approximation cover. To do this, re-write the condition in \cref{eq:approximation_cones} as
\begin{equation*}
    \min_{\lambda \in \hat{E}(Q^j)} \left\{ \lambda(Q^j) - \alpha \max_{Q' \in \mathcal{I}}\lambda (Q') \right\} \geq 0
\end{equation*}

Since $\lambda \rightarrow \left\{ \lambda(Q^j) - \alpha \max_{Q' \in \mathcal{I}}\lambda (Q') \right\}$ is concave, it suffices to check only the extreme points of $\hat{E}(Q^j)$.

\begin{lemma}\label{lem:alt_approximationcover}
Set of pairs $\{(Q^j, \hat{E}(Q^j))\}$ is an $\alpha$\textit{-approximation cover} if and only if $\{\hat{E}(Q^j)\}$ covers $\Lambda$, and for all $Q^j$
\begin{equation*}
    \lambda(Q^j) \geq \alpha \max_{Q' \in \mathcal{I}} \lambda(Q') \quad \ \lambda \in ext(\hat{E}(Q^j))
\end{equation*}
\end{lemma}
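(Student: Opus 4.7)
The forward direction is immediate: if $\{(Q^j, \hat{E}(Q^j))\}$ is an $\alpha$-approximation cover, then by definition the inequality $\lambda(Q^j) \geq \alpha \max_{Q' \in \mathcal{I}} \lambda(Q')$ holds for every $\lambda \in \hat{E}(Q^j)$, hence in particular on the subset $ext(\hat{E}(Q^j))$. No work is needed here beyond writing it down.

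The substantive direction is the converse, and the plan is to exploit a concavity argument, essentially the one hinted at just before the lemma statement. Fix $Q^j$ and define $g : \Lambda \to \mathbb{R}$ by
\begin{equation*}
    g(\lambda) := \lambda(Q^j) - \alpha \max_{Q' \in \mathcal{I}} \lambda(Q').
\end{equation*}
The first term is linear in $\lambda$, and the support function $\lambda \mapsto \max_{Q' \in \mathcal{I}} \lambda(Q')$ is a pointwise maximum of linear functions, hence convex; thus $g$ is concave (linear minus $\alpha$ times convex, with $\alpha \geq 0$). The assumption of the lemma is exactly that $g(\lambda) \geq 0$ for every $\lambda \in ext(\hat{E}(Q^j))$.

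The key step is then the standard fact that a concave function on a polytope attains its minimum at an extreme point. Since $\hat{E}(Q^j)$ is a polytope by hypothesis, any $\lambda \in \hat{E}(Q^j)$ can be written as a convex combination $\lambda = \sum_k \theta_k \lambda_k$ with $\lambda_k \in ext(\hat{E}(Q^j))$, and concavity gives $g(\lambda) \geq \sum_k \theta_k g(\lambda_k) \geq 0$. This promotes the extreme-point inequality to the whole of $\hat{E}(Q^j)$, which is precisely condition (2) in the definition of an $\alpha$-approximation cover. Combined with the covering property $\Lambda \subset \bigcup_j \hat{E}(Q^j)$, which is assumed on both sides of the equivalence, this finishes the argument.

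I do not anticipate any real obstacle: the proof is essentially a one-line application of the min-at-extreme-points principle for concave functions, and the bookkeeping for the two cover conditions is symmetric between the two directions. The only thing to be slightly careful about is noting that $\alpha \in (0,1)$ (or at any rate $\alpha \geq 0$) so that $-\alpha \cdot (\text{convex})$ is indeed concave; this is built into the framework of the preceding definition, so no additional hypothesis is needed.
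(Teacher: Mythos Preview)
Your proposal is correct and follows exactly the approach the paper intends: the paper states, immediately before the lemma, that $\lambda \mapsto \lambda(Q^j) - \alpha \max_{Q' \in \mathcal{I}} \lambda(Q')$ is concave and hence it suffices to check extreme points of $\hat{E}(Q^j)$, which is precisely the argument you spell out. The paper does not give a separate proof environment for this lemma, so you have simply filled in the one-line justification it leaves implicit.
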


The approach outlined here to approximately characterizing $\mathcal{I}$ is illustrated in \Cref{fig:approx}. Here $Q_1$ such that $\lambda_j(Q_1) \geq \alpha \max_{Q'\in \mathcal{I}} \lambda_j(Q')$ for $j \in \{1,2,3,4,5\}$ and some $\alpha < 1$. Then if $\lambda_j(Q) \leq \max_{Q'\in \mathcal{I}} \lambda_j(Q')$ for $j \in \{1,2,3,4,5\}$ then $\lambda_j(\alpha Q) \leq \lambda_j(Q_1)$ for $j \in \{1,2,3,4,5\}$. In fact, the same conclusion holds if $\lambda_j(Q) \leq \max_{Q'\in \mathcal{I}} \lambda_j(Q')$ for $j \in \{1,4\}$, since these are the extreme points of $\{\lambda_1,\dots,\lambda_5\}$. Note that this set of normal vectors defines a larger polytope than $\mathcal{I}$ (the large blue square as opposed to the grey area). However we know that this polytope scaled by $\alpha$ is a subset of $co(\{Q_1,\dots,Q_5\})$, and thus a subset of $\mathcal{I}$. 

\begin{figure}
    \centering
    
    \scalebox{.75}{
    \begin{tikzpicture}
    
    \draw[thin, dashed, fill = gray!20] (1,1) -- (1,9) -- (5,9) -- (9,5) -- (9,1) -- (1,1);
    
    \draw[very thick, blue] (1,0) -- (1,1);
    \draw[very thick, dashed, blue] (1,1) -- (1,9);
    \draw[very thick, blue] (1,9) -- (1,10);
        \draw[dashed, blue] [->] (1,5) -- (0,5) node[anchor = south] {$\lambda_6$};
    \draw[very thick, blue] (0,1) -- (1,1);
    \draw[very thick, dashed, blue] (1,1) -- (9,1);
    \draw[very thick, blue] (9,1) -- (10,1);
        \draw[dashed, blue] [->] (5,1) -- (5,0) node[anchor = east] {$\lambda_7$};
    \draw[very thick, blue] (0,9) -- (10,9);
        \draw[very thick, blue, dashed] [->] (5,9) -- (5,11) node[anchor = east] {$\lambda_1$};
    \draw[very thick, blue] (9,0) -- (9,10);
        \draw[very thick, blue, dashed] [->] (9,5) -- (11,5) node[anchor = north] {$\lambda_3$};
    \draw[very thick, red] (4,10) -- (10,4);
        \draw[very thick, red, dashed] [->] (5,9) -- (6.5,10.5) node[anchor = west] {$\lambda_2$};
        \draw[very thick, red, dashed] [->] (9,5) -- (10.5, 6.5) node[anchor = west] {$\lambda_2$};
    \draw[very thick] (3,10) -- (10,6.5);
        \draw[dashed] [->] (5,9) -- (6,11) node[anchor = west] {$\lambda_4$};
    \draw[very thick] (10,3) -- (6.5,10);
        \draw[dashed] [->] (9,5) -- (11,6) node[anchor = west] {$\lambda_5$};
        
    \draw[fill = gray!20, opacity = .1] (1,1) -- (1,9) -- (5,9) -- (9,5) -- (9,1) -- (1,1);
    
    \filldraw[blue] (6.5,6.5) circle (3pt);
        \node[blue, below] at (6.4,6.4) {$Q_1$};
    \filldraw[blue] (9,1) circle (3pt);
        \node[blue, above] at (8.6,1)  {$Q_3$};
    \filldraw[blue] (1,1) circle (3pt);
        \node[blue, above] at (1.3,1)  {$Q_4$};    
    \filldraw[blue] (1,9) circle (3pt);
        \node[blue, right] at (1,8.6)  {$Q_5$}; 
        
    \draw[very thick, dashed, blue] (1,9) -- (6.5,6.5) -- (9,1);
    
    \draw[very thick, blue] (1.05,1.05) -- (1.05,5.5) -- (5.5,5.5) -- (5.5,1.05) -- (1.05,1.05);
    \end{tikzpicture}
    }
    
    \caption{Geometry of approximate characterization}
    \label{fig:approx}
\end{figure}
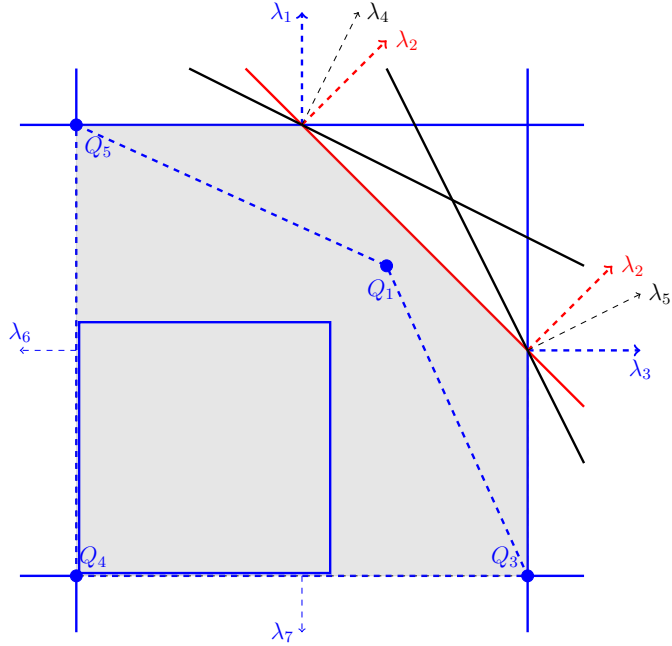

\section{Exploiting the structure of \texorpdfstring{$\mathcal{I}$}{}: state separability}\label{sec:P_I}

In this section I make a simple observation that the structure of interim allocations simplifies the problem of identifying equivalence covers and $\alpha$-approximation covers. The key observation is that maximizing linear functions on $\mathcal{I}$ reduces to maximizing linear functions on $P^{XP}(t)$. I refer to this property as \textit{state separability}.\footnote{This property is also observed by \cite{goeree2022geometric}.}

To see this, consider the program of maximizing a a linear function $\lambda$ on $\mathcal{I}$, given by
    \begin{align}\label{eq:P_I}
 \max_{Q \in \mathcal{I}} \lambda(Q) =  \max_{q} \sum_{(u,\tau) \in \mathcal{T}^*} \lambda(u,\tau)& \underbrace{\sum_{t_u \in T_{-u}} q(u,t_{-u},\tau)\mu_u(t_{-u}|\tau)}_{Q(u,\tau)} \\
        &s.t. \quad q(\cdot,t) \in P^{XP}(t) \quad \forall \ t \in T. \notag
    \end{align}
    We can rewrite the objective as
    \begin{equation}\label{eq:P_XP}
                \sum_{t \in T} \sum_{i \in I} \lambda(u,t_u) q(u,t) \mu_u(t_{-u}|\tau) 
        = \sum_{t \in T}  \mu(t) \sum_{i \in I} \frac{\lambda(u,t_u)}{\mu_u^{\bullet}(t_u)} q(u,t) 
    \end{equation}
Since the objective in \cref{eq:P_XP} separates across $t$, and the constraint $q(\cdot,t) \in P^{XP}(t)$, is also defined separately for each $t$, we just need to solve pointwise for each $t$. 

\begin{lemma}\label{lem:state_sep_exact}
$Q \in \argmax_{Q' \in \mathcal{I}} \lambda(Q')$ iff $Q$ is realized by a $q$ such that 
\begin{equation*}
    q(\cdot,t) \in \argmax_{\rho \in P^{XP}(t)} \sum_{u \in U} \frac{\lambda(u,t_u)}{\mu_u^{\bullet}(t_u)} \rho(i) 
\end{equation*}
for all $t\in T$. 
\end{lemma}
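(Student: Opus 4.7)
The plan is to leverage the algebraic rearrangement already displayed in \cref{eq:P_XP}, which converts the objective into a sum across states of per-state linear objectives. The crucial input is Bayes' rule: for each $(u,\tau)\in\mathcal{T}^*$ and $t_{-u}\in T_{-u}$, one has $\mu_u(t_{-u}|\tau)\,\mu_u^{\bullet}(\tau)=\mu(t)$; since $\mu_u^{\bullet}(\tau)>0$ by construction of $\mathcal{T}^*$, the division in \cref{eq:P_XP} is well defined. With this substitution the objective takes the separable form $\sum_t \mu(t)\,L_t(q(\cdot,t))$, where $L_t(\rho):=\sum_{u\in U}\frac{\lambda(u,t_u)}{\mu_u^{\bullet}(t_u)}\rho(u)$, and the constraint $q(\cdot,t)\in P^{XP}(t)$ is imposed independently across $t$.

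For the ``if'' direction, assume $q$ realizes $Q$ and $q(\cdot,t)\in\argmax_{\rho\in P^{XP}(t)} L_t(\rho)$ for every $t\in T$. Given any $Q'\in\mathcal{I}$ realized by some $q'$, rewriting both objectives via \cref{eq:P_XP} yields
\begin{equation*}
\lambda(Q)-\lambda(Q') \;=\; \sum_{t\in T}\mu(t)\bigl[L_t(q(\cdot,t))-L_t(q'(\cdot,t))\bigr]\;\geq\;0,
\end{equation*}
since each bracketed term is non-negative by pointwise optimality of $q(\cdot,t)$. Hence $Q\in\argmax_{Q'\in\mathcal{I}}\lambda(Q')$.

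For the ``only if'' direction, suppose $Q$ is optimal and let $q$ be any allocation realizing it. Separability gives $\max_{Q'\in\mathcal{I}}\lambda(Q')=\sum_t\mu(t)\max_{\rho\in P^{XP}(t)}L_t(\rho)$, a value a pointwise maximizer attains by compactness of each $P^{XP}(t)$. Since $\lambda(Q)=\sum_t\mu(t)L_t(q(\cdot,t))$ equals this maximum and every summand obeys $L_t(q(\cdot,t))\leq\max_\rho L_t(\rho)$, equality must hold term by term for every $t$ with $\mu(t)>0$. For null states $\mu(t)=0$, replace $q(\cdot,t)$ with any pointwise maximizer in $P^{XP}(t)$; because $\mu_u(t_{-u}|t_u)=\mu(t)/\mu_u^{\bullet}(t_u)=0$ whenever $(u,t_u)\in\mathcal{T}^*$ and $\mu(t)=0$, this modification does not alter $Q$.

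The main subtlety is the treatment of $\mu$-null states in the ``only if'' direction: optimality of $Q$ constrains $q(\cdot,t)$ only on the support of $\mu$, so one must verify that $q$ can always be adjusted at null states without changing the induced $Q$. This is precisely where restricting attention to $\mathcal{T}^*$ (excluding types $\tau$ with $\mu_u^{\bullet}(\tau)=0$) makes the modification legitimate; beyond this bookkeeping, the argument is a direct formalization of the pointwise separability observation already displayed in the text.
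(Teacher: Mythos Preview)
Your proposal is correct and follows exactly the approach the paper indicates: the separability displayed in \cref{eq:P_XP} reduces the interim maximization to state-by-state maximization, and you have simply formalized both directions of the equivalence. The paper treats this as immediate from the rearrangement and does not spell out a proof, so your added care about $\mu$-null states in the ``only if'' direction is more than the paper provides, but it is the natural bookkeeping needed to make the existential claim about $q$ literally true.
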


\subsection{Extreme points of \texorpdfstring{$\mathcal{I}$}{}}\label{sec:ex_post_extreme}

From \Cref{eq:P_XP} we can see that the problem of characterizing the extreme points of $\mathcal{I}$ reduces to the much simpler one of characterizing the extreme points of each es-post polytope $P^{XP}(t)$. Formally, for $\gamma \in \Gamma := [-1,1]^{U}$ define 
\begin{equation*}
    x^*(\gamma|t) := \argmax_{\rho \in P^{XP}(t)} \sum_{u \in U} \gamma(u) \rho(u).
\end{equation*}
Given $\lambda \in \Lambda$ and $t\in T$ let $\lambda(\cdot|t):= \left(\frac{\lambda(1,t_1)}{\mu_1^{\bullet}(t_1)},\frac{\lambda(2,t_2)}{\mu_2^{\bullet}(t_2)} ,\dots ,\frac{\lambda(|U|,t_{|U|})}{\mu_{|U|}^{\bullet}(t_{|U|})}\right)$ be the weights induced by $\lambda$ in state $t$. Given the maintained normalization of $\Lambda$, $\lambda(\cdot|t) \in \Gamma$. 

\begin{lemma}\label{lem:P_XP}\hphantom{.}
\begin{enumerate}
    \item $q$ solves the program in (\ref{eq:P_I}) if and only if $q(\cdot,t) \in x^*(\lambda(\cdot|t))$ for all $t$.
    \item If $Q$ is an extreme point of $\mathcal{I}$ then $Q$ is realized by a $q$ such that $q(\cdot,t)$ is an extreme point of $P^{XP}(t)$ for all $t$.\footnote{The converse need not be true: for $t \neq t'$ the vectors $(\lambda(u,t_u))_{u \in U}$ and $(\lambda(u,t'_i))_{u\in U}$ cannot be set independently when there is some $i$ such that $t_u = t'_i$. In any case, \Cref{lem:P_XP} is sufficient for our purposes.}
\end{enumerate}
\end{lemma}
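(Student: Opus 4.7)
The plan is to derive both parts from the state-separable form of the objective in \eqref{eq:P_I}, which was already exhibited in \eqref{eq:P_XP}, combined with standard linear programming facts about the structure of the argmax face of a polytope.

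For part 1, I would observe that in \eqref{eq:P_XP} the objective is a sum over $t \in T$ of terms depending only on $q(\cdot,t)$, and the feasibility constraint $q(\cdot,t) \in P^{XP}(t)$ also decouples across $t$. Hence the program \eqref{eq:P_I} is equivalent to solving, independently for each $t$, the pointwise problem $\max_{\rho \in P^{XP}(t)} \sum_{u \in U} \frac{\lambda(u,t_u)}{\mu_u^\bullet(t_u)} \rho(u)$, whose solution set is by definition $x^*(\lambda(\cdot|t))$. This yields the ``if and only if'' immediately. (For $t$ at which some $\mu_u^\bullet(t_u)$ vanishes, the division in the definition of $\lambda(\cdot|t)$ should be read away from those coordinates; such states contribute nothing to the interim allocation, so we are free to insist the pointwise condition holds there too.)

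For part 2, I would invoke Fact 2 to pick $\lambda \in \Lambda$ such that the extreme point $Q$ is the \emph{unique} maximizer of $\lambda$ on $\mathcal{I}$. Then every $q$ realizing $Q$ solves \eqref{eq:P_I}, and, conversely, every solution of \eqref{eq:P_I} realizes a maximizer of $\lambda$, which by uniqueness must be $Q$. Combining with part 1, the set of $q$'s realizing $Q$ coincides exactly with the set of $q$'s satisfying $q(\cdot,t) \in x^*(\lambda(\cdot|t))$ for every $t$.

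To finish, I would use that, since $x^*(\lambda(\cdot|t))$ is the argmax of a linear functional on the polytope $P^{XP}(t)$, it is itself a (possibly lower-dimensional) face of $P^{XP}(t)$ and therefore contains at least one extreme point of $P^{XP}(t)$. I can thus pick, independently for each $t$, some $q(\cdot,t) \in x^*(\lambda(\cdot|t))$ that is an extreme point of $P^{XP}(t)$; the resulting $q$ realizes $Q$ and takes extreme-point values in every state. The argument is essentially bookkeeping rather than technically difficult: the real content is the interchange between ``extreme point of $\mathcal{I}$'' and ``unique maximizer of a linear functional'' (Fact 2), combined with the observation that the argmax face of a linear program on a polytope always contains a vertex of that polytope. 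The only minor wrinkle is null states, which are harmless because they have no bearing on the interim allocation.
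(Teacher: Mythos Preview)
Your proposal is correct and is exactly the natural argument; the paper in fact states this lemma without proof, treating it as immediate from the state-separability rewriting in \eqref{eq:P_XP} and \Cref{lem:state_sep_exact}, and your argument supplies precisely the details one would fill in (separability for part 1, Fact 2 plus ``the argmax face of a polytope contains a vertex'' for part 2).
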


In other words, once we understand the function $x^*$, the problem of characterizing the extreme points of $\mathcal{I}$ becomes trivial. 

\subsection{Approximation covers}

The state separability property observed in \Cref{lem:state_sep_exact} also simplifies the task of identifying $\alpha$-approximation covers: it is sufficient to identify $\alpha$-approximation covers in the ex-post polytope. 

I abuse notation and for $\gamma\in \Gamma$ write $\gamma(\rho) = \sum_{u\in U}\gamma(u)\rho(u)$. Say that $\{(\rho^j,\hat{e}(\rho^j)\}$ is an \textit{ex-post} $\alpha$-\textit{approximation cover} if

\begin{definition}
A set of pairs $\{(\rho^j,\hat{e}(\rho^j)\}$, where each $\rho^j \in P^{XP}$ and $\hat{e}(\rho^j) \subset \gamma$ is a polytope, is called an \textit{ex-post} $\alpha$\textit{-approximation cover} if 
\begin{enumerate}
    \item $\{\hat{e}(\rho^j)\}$ covers $\Gamma$, and
    \item for all $q^j$,
    \begin{equation}\label{eq:expost_approximation_cones}
        \gamma\rho^j) \geq \alpha \max_{\rho' \in P^{XP}} \gamma(\rho') \quad \forall \ \gamma \in \hat{e}(\rho^j)
    \end{equation}
\end{enumerate}
\end{definition}

\begin{lemma}\label{lem:expost_approximationcover}
Given an ex-post $\alpha$-approximation cover $\{(\rho^j,\hat{e}(\rho^j)\}$, we can create an (ex-ante) $\alpha$-approximation cover $\{ (Q^j, \hat{E}(Q^j))\}$ as follows:
\begin{enumerate}
    \item Let $\{Q^j\}$ be the set of $Q \in \mathcal{Q}$ induced by allocations $q$ such that $q(t,\cdot) \in \{\rho^j\}$ for all $t \in T$.
    \item For any such $Q^j$, induced by $q^j$, let $\hat{E}(Q^j)$ be the set of $\lambda \in \Lambda$ such that
    \begin{equation*}
        \left(\frac{\lambda(u,t_u)}{\mu_u^{\bullet}(\tau)} \right)_{u\in U} \in \hat{e}(q^j(\cdot,t) )
    \end{equation*}
    for all $t \in T$.
\end{enumerate}
\end{lemma}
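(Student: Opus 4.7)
The plan is to verify, in order, the two defining conditions of an $\alpha$-approximation cover for the collection $\{(Q^j, \hat{E}(Q^j))\}$, using \Cref{lem:state_sep_exact} to translate between ex-post and ex-ante quantities. The whole argument reduces to applying the ex-post approximation property state-by-state: state separability makes both $\max_{Q' \in \mathcal{I}}\lambda(Q')$ and $\lambda(Q^j)$ decompose into $\mu$-averages of quantities that live in the ex-post polytope.

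First I would check that $\{\hat{E}(Q^j)\}$ covers $\Lambda$. Given an arbitrary $\lambda \in \Lambda$, the normalization of $\Lambda$ ensures that the state-$t$ weight vector $\lambda(\cdot \mid t) := \left(\frac{\lambda(u, t_u)}{\mu_u^{\bullet}(t_u)}\right)_{u \in U}$ lies in $\Gamma$ for every $t$. Since $\{\hat{e}(\rho^j)\}$ covers $\Gamma$, for each $t$ I can pick an index $j(t)$ with $\lambda(\cdot \mid t) \in \hat{e}(\rho^{j(t)})$. Setting $q(\cdot, t) := \rho^{j(t)}$ produces a valid allocation (feasibility at each $t$ follows from $\rho^{j(t)} \in P^{XP}$), and the induced interim allocation $Q^j$ lies in the constructed collection. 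By the very definition of $\hat{E}(Q^j)$, $\lambda \in \hat{E}(Q^j)$.

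Next I would verify the $\alpha$-approximation inequality. Fix any $Q^j$ induced by $q^j$ and any $\lambda \in \hat{E}(Q^j)$. Rearranging as in \cref{eq:P_XP} and invoking \Cref{lem:state_sep_exact},
\[
\max_{Q' \in \mathcal{I}} \lambda(Q') = \sum_{t \in T} \mu(t) \max_{\rho \in P^{XP}} \lambda(\cdot \mid t)(\rho),
\]
while the same rearrangement applied to $q^j$ gives $\lambda(Q^j) = \sum_{t \in T} \mu(t)\,\lambda(\cdot \mid t)\bigl(q^j(\cdot, t)\bigr)$. Membership $\lambda \in \hat{E}(Q^j)$ means $\lambda(\cdot \mid t) \in \hat{e}\bigl(q^j(\cdot, t)\bigr)$ for every $t$, so the ex-post condition \cref{eq:expost_approximation_cones} yields $\lambda(\cdot \mid t)\bigl(q^j(\cdot, t)\bigr) \geq \alpha \max_{\rho \in P^{XP}} \lambda(\cdot \mid t)(\rho)$ pointwise in $t$. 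Averaging against $\mu(t)$ gives $\lambda(Q^j) \geq \alpha \max_{Q' \in \mathcal{I}} \lambda(Q')$, as required.

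The main obstacle, such as it is, is bookkeeping: one must keep clear which objects live in $\Gamma$ versus $\Lambda$, and use state separability in both directions --- to decompose the ex-ante support value as a sum of per-state maxima, and to decompose $\lambda(Q^j)$ as the corresponding sum of per-state values. Once those decompositions are in place, the argument is purely pointwise in $t$, with no cross-state compatibility condition needed because the constructed allocation is free to pick the ex-post vertex $\rho^{j(t)}$ independently at each state.
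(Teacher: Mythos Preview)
Your proof is correct and follows exactly the argument the paper has in mind: the paper's entire proof consists of the single sentence ``Clearly $\{(Q^j,\hat{E}(Q^j))\}$ constructed in this way is an $\alpha$-approximation cover,'' and you have supplied the state-by-state verification via \Cref{lem:state_sep_exact} that makes this clear. The only minor point you leave implicit is that each $\hat{E}(Q^j)$ is a polytope (as required by the definition), but this is immediate since it is the intersection of $\Lambda$ with preimages of the polytopes $\hat{e}(\rho^j)$ under the linear maps $\lambda \mapsto \lambda(\cdot\mid t)$.
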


Clearly $\{(Q^j,\hat{E}(Q^j))\}$ constructed in this way is an $\alpha$-approximation cover. Of course, it is it is not necessary that an $\alpha$-approximation cover be generated in this way by an ex-post $\alpha$-approximation cover. However it is generally much easier to identify the ex-post variety. I show how simple algorithms which deliver approximate solutions to maximizing linear functions on $P^{XP}$ can be used to generate ex-post $\alpha$-approximation covers. 

\section{Exact characterizations: polymatroid constraints}\label{sec:exact_polymatroid}

The classic Border's Theorem is nice precisely because the extreme points of $P^{XP}$ are easily characterized by a greedy algorithm, and the equivalence sets are simple and large.

For now, consider only upper bounds: set $L(A) = 0$ for all $A \subset I$. (A similar argument is used to incorporate lower bounds.)

To characterize the extreme points of $\mathcal{I}$, for every $t$ we just need to solve problems of the form
\begin{equation}\label{eq:ex_post_assignment}
  \max_{q\geq 0}  \sum_{i \in I} \frac{\lambda(u,t_u)}{\mu_u^{\bullet}(t_u)} q(u,t)  \quad s.t. \sum_{u\in A} q(u,t) \leq C(A,t) \quad  \forall \ A \subset I
\end{equation}

The complexity of the characterization of interim realizability is determined by the complexity of the solution to this problem. This complexity is determined by the nature of the function $C$.

One very simple instance is the case we only need to know two coarse statistics about $\lambda(u,\tau)$
\begin{enumerate}
    \item The set $A$ of $(u,\tau)$ such that $\lambda(u,\tau) \geq 0$.
    \item The order on $\{(u,\tau) : \lambda(u,\tau) \geq 0\}$ induced by $(u,\tau) \mapsto \frac{\lambda(u,\tau)}{\mu_u^{\bullet}(\tau)}$. That is, the ordering of $A = \{(i_1,\tau_1),(i_2,\tau_2),\dots,(i_K,\tau_K)\}$ such that $k \mapsto \frac{\lambda(u_k,\tau_{k})}{\mu_{i_k}^{\bullet}(\tau_{k})}$ is decreasing.
\end{enumerate}
Say that any $\lambda,\lambda'$ which are equivalent in terms of properties $1.$ and $2.$ are \textit{ordinally equivalent}. (There may be multiple orders consistent with each $\lambda$; we just require a non-empty intersection for property $2.$) 

\begin{definition}
Say that $C$ is \textit{ordinally simple} if the solution to \cref{eq:ex_post_assignment} is the same for any ordinally equivalent $\lambda, \lambda'$.
\end{definition}

\begin{definition}
Say that a function $f$ from $\Lambda$ to the space of allocation rules is \textit{ordinally simple} if it produces the same allocation rule for any ordinally equivalent $\lambda,\lambda'$. 
\end{definition}

One ordinally simple function is defined by a greedy algorithm. Order $U$ in decreasing order of $\frac{\lambda(u,t_u)}{\mu_u^{\bullet}(t_u)}$. The greedy algorithm proceeds as follows
\begin{itemize}
    \item Set $i_1 = C(\{i_1\},t)$
    \item For $k > 1$, set $i_k = C(\{i_1, \dots, i_k \},t) - C(\{i_1, \dots, i_{k-1}\},t)$. 
\end{itemize}

When $C$ is submodular, the greedy algorithm solves the program in \cref{eq:ex_post_assignment}.

\begin{proposition}[\cite{dunstan1973greedy}]\label{thm:submodularC}
The greedy algorithm solves \cref{eq:ex_post_assignment} for every $\lambda$ if and only if $C(\cdot,t)$ is submodular for all $t$. 
\end{proposition}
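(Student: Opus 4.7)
The plan is to prove both directions via LP duality and a single counter-example, following the classical polymatroid argument. Observe first that (\ref{eq:ex_post_assignment}) is a linear program whose dual has a nonnegative variable $y(A)$ for every $A \subseteq I$, minimizes $\sum_A y(A) C(A,t)$, and has constraints $\sum_{A \ni u} y(A) \geq \gamma(u)$ for every $u \in I$, where I write $\gamma(u) := \lambda(u,t_u)/\mu_u^{\bullet}(t_u)$. The greedy algorithm is naturally read as stopping (or zeroing out) once $\gamma(u) \le 0$; order the units with $\gamma(u) > 0$ as $u_1,u_2,\ldots,u_K$ by strictly decreasing $\gamma$, and set $U_k := \{u_1,\ldots,u_k\}$ with $U_0 := \emptyset$. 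The greedy output is $q^*(u_k) = C(U_k,t) - C(U_{k-1},t)$ for $k\le K$ and $0$ elsewhere.

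For the ``if'' direction, I would first verify primal feasibility of $q^*$ using submodularity: for every $A \subseteq I$ and every $u_k \in A$, submodularity gives
\[
C(U_k,t) - C(U_{k-1},t) \le C(A \cap U_k,t) - C(A \cap U_{k-1},t),
\]
and summing telescopes to $\sum_{u_k \in A} q^*(u_k) \le C(A,t)$. For optimality, I would construct the dual certificate $y^*(U_k) := \gamma(u_k) - \gamma(u_{k+1})$ for $k < K$, $y^*(U_K) := \gamma(u_K)$, and $y^*(A) := 0$ on all non-prefix sets. Dual feasibility at $u_j$ (with $j \le K$) follows by telescoping $\sum_{k \ge j} y^*(U_k) = \gamma(u_j)$; at units with $\gamma(u) \le 0$ the left side is $0$ and the constraint is slack. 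A summation by parts then shows $\sum_A y^*(A) C(A,t) = \sum_k \gamma(u_k) q^*(u_k)$, matching the primal value, so weak duality delivers optimality.

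For the ``only if'' direction, suppose $C(\cdot,t)$ is not submodular at some $t$. By the equivalent marginal-return characterization, there exist nested sets $S' \subseteq S$ and $v \notin S$ with
\[
C(S \cup \{v\},t) - C(S,t) \;>\; C(S' \cup \{v\},t) - C(S',t).
\]
Pick a small $\varepsilon > 0$ and choose $\lambda$ so that $\gamma(u) = 1+\varepsilon$ on $S'$, $\gamma(u) = 1$ on $S\setminus S'$, $\gamma(v) = 1/2$, and $\gamma = 0$ elsewhere; these weights are admissible given the normalization of $\Lambda$. Greedy then processes $S'$, then $S\setminus S'$, then $v$, yielding $\sum_{u \in S'} q(u) = C(S',t)$ and $q(v) = C(S \cup \{v\},t) - C(S,t)$. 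The constraint on $S' \cup \{v\}$ then reads
\[
C(S',t) + C(S\cup\{v\},t) - C(S,t) \;\le\; C(S' \cup \{v\},t),
\]
which is violated by the failure of submodularity; hence the greedy output is infeasible and does not solve (\ref{eq:ex_post_assignment}) for this $\lambda$.

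The main obstacle is essentially bookkeeping rather than any deep idea: handling zero or negative entries of $\gamma$ cleanly in the dual construction (by restricting the ordering to positive weights and noting the remaining dual constraints are slack), and invoking the equivalence between the subset form and the single-element marginal form of submodularity in the necessity step. Both are routine once the primal/dual structure on the nested family $\emptyset = U_0 \subset U_1 \subset \cdots \subset U_K$ is in place.
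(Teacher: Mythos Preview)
The paper does not supply its own proof of this proposition; it is stated as a citation to \cite{dunstan1973greedy} and used as a black box. So there is nothing to compare against, and your proposal should be judged on its own merits.

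Your argument is the classical one and is essentially correct. Two small points of bookkeeping are worth flagging. First, in the primal-feasibility step, the telescoping sum on the right actually yields $C(A\cap U_K,t)$, not $C(A,t)$; to pass to $C(A,t)$ you are implicitly using monotonicity of $C(\cdot,t)$. This is standard in the polymatroid setting (and can be assumed without loss by replacing $C$ with its monotone closure, which preserves submodularity and the induced polytope), but you should say so. Second, in the necessity step you set $\gamma(u)=1+\varepsilon$ on $S'$, which violates the paper's normalization $|\gamma(u)|\le 1$; simply rescale (e.g.\ use $1$, $1-\varepsilon$, $1/2$) or note that the proposition quantifies over all $\lambda$ so the normalization is immaterial. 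Neither issue affects the substance of the argument.
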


A set $P^{XP} := \{ \rho \in \mathbb{R}_+^{U} : 0 \leq \sum_{u\in A} \rho(u) \leq C(A) \ \ \forall \ A \subset U \}$ is called a \textit{polymatroid} if and only if $C$ is submodular. 

\begin{observation}
The constraint $C(A) = 1$ for all $A$ defines the ex-post polytope for the classic interim allocation setting of \cite{border2007reduced}. This constraint is submodular. 
\end{observation}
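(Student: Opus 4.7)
The plan is to verify the two assertions of the observation directly from the relevant definitions. Both reduce to short case checks, and there is essentially no obstacle to overcome; the content is just recognizing that Border's setting is the simplest possible instance of the polymatroid framework.

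First, for the claim that $C(A) = 1$ recovers the Border ex-post polytope, I would recall that in the classic single-item setting there is a unit mass of an indivisible good, so the ex-post polytope is $\{\rho \in \mathbb{R}_+^U : \sum_{u \in U} \rho(u) \leq 1\}$. Substituting $L \equiv 0$ and $C(A) = 1$ for every non-empty $A$ into the general constraint system of \cref{eq:constrained_allocation} yields $\sum_{u \in A} \rho(u) \leq 1$ for all non-empty $A \subseteq U$. Because $\rho \geq 0$, each inequality for a proper subset $A \subsetneq U$ is implied by the one for $A = U$, so the polytope defined by the $C \equiv 1$ family of constraints coincides exactly with Border's polytope.

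Second, for submodularity, the inequality $C(A \cup B) + C(A \cap B) \leq C(A) + C(B)$ only needs to be checked across the possible configurations of emptiness. Using $C(\varnothing) = 0$ and $C(A) = 1$ for $A \neq \varnothing$, the inequality becomes $2 \leq 2$ when $A, B, A \cap B$ are all non-empty, $1 \leq 2$ when $A, B$ are non-empty but disjoint, and reduces to $C(A \cup B) \leq C(A) + C(B)$ in the remaining cases where at least one of $A$, $B$ is empty, which all hold trivially. Since no case fails, $C$ is submodular and \Cref{thm:submodularC} applies, giving the classical Border conclusion as a corollary. The only point worth flagging is the interpretive convention that ``$C(A) = 1$ for all $A$'' is meant to exclude $\varnothing$, consistent with the normalization $C(\varnothing) = 0$ imposed throughout.
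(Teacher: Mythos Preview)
Your proposal is correct and is essentially the natural direct verification; the paper itself states this observation without proof, treating both claims as self-evident, so your case check for submodularity and your remark that the proper-subset constraints are redundant under $\rho \geq 0$ simply fill in the obvious details.
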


Using the framework of \Cref{sec:geometry}, \Cref{thm:submodularC} tells us that if $C$ is submodular, the set of extreme points of $P^{XP}$ can be generated by varying the order on $U$ and the cut-off $k$, and applying the greedy algorithm.

Given $A \subset \mathcal{T}^*$ and an order $R$, let $\lambda \in \Lambda$ be such that \textit{a)} $\lambda$ is monotone with respect to order $R$, and \textit{b)} $\lambda(u,\tau) \geq 0$ iff $(u,\tau) \in A$. Let $\Lambda^{(A,R)}$ be the set of all such $\lambda$. Then the greedy algorithm applied to $\lambda \in \Lambda^{(A,R)}$ produces an allocation $q^{(A,R)}$, which I call a \textit{truncated greedy allocation}, and induces an interim allocation $Q^{(A,R)}$, which I call a \textit{truncated greedy interim allocation}. Let $\mathcal{Q}$ be the set of all truncated greedy interim allocations, found by varying the order $R$ and set $A$. Then \Cref{thm:submodularC} implies the following.

\begin{lemma}
If $P^{XP}(t)$ is a polymatroid (equivalently, $C(\cdot,t)$ is submodular) for all $t$, then $ext(\mathcal{I}) = \mathcal{Q}$. 
\end{lemma}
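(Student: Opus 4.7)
The plan is to establish the two inclusions $\mathcal{Q} \subseteq \text{ext}(\mathcal{I})$ and $\text{ext}(\mathcal{I}) \subseteq \mathcal{Q}$ using the state-separability reduction (Lemma \ref{lem:state_sep_exact} and Lemma \ref{lem:P_XP}) together with Proposition \ref{thm:submodularC}. The underlying idea is that both extreme points of $\mathcal{I}$ and truncated greedy interim allocations are, by state separability, completely determined by pointwise maximizers of a linear objective on $P^{XP}(t)$; submodularity of $C(\cdot,t)$ makes the greedy algorithm the canonical way to produce such maximizers.

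For the inclusion $\mathcal{Q} \subseteq \text{ext}(\mathcal{I})$, fix a pair $(A,R)$ and take $Q^{(A,R)}$ induced by the greedy allocation $q^{(A,R)}$. I would pick a specific $\lambda \in \Lambda^{(A,R)}$ whose pointwise weights $\lambda(\cdot|t)$ are \emph{strictly} decreasing in the order $R$ and strictly negative outside $A$; under submodularity the greedy output on a polymatroid is then the \emph{unique} maximizer of the linear functional on each $P^{XP}(t)$. By Lemma \ref{lem:state_sep_exact} this forces $Q^{(A,R)}$ to be the unique maximizer of $\lambda$ on $\mathcal{I}$: any other $Q' \in \argmax_{\mathcal{I}} \lambda$ would be induced by an alternative pointwise maximizer, contradicting uniqueness in each state. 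A linear functional with a unique maximizer on a polytope pins down an extreme point (standard convexity: if $Q = \tfrac{1}{2}(Q_1 + Q_2)$ with $Q_1,Q_2 \in \mathcal{I}$, then $\lambda(Q_j)=\lambda(Q)$ forces $Q_j = Q$), hence $Q^{(A,R)} \in \text{ext}(\mathcal{I})$.

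For the converse inclusion, let $Q \in \text{ext}(\mathcal{I})$. By Fact 2 there is some $\lambda \in \Lambda$ such that $Q = \argmax_{Q' \in \mathcal{I}} \lambda(Q')$ uniquely. Let $A := \{(u,\tau) : \lambda(u,\tau) \geq 0\}$ and let $R$ be any linear order on $A$ consistent with the ratio $\lambda(u,\tau)/\mu_u^{\bullet}(\tau)$, so $\lambda \in \Lambda^{(A,R)}$. By Proposition \ref{thm:submodularC}, the greedy allocation $q^{(A,R)}(\cdot,t)$ solves the pointwise program in \eqref{eq:ex_post_assignment} for every $t$; by Lemma \ref{lem:state_sep_exact} the interim allocation $Q^{(A,R)}$ it induces lies in $\argmax_{\mathcal{I}} \lambda$. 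Uniqueness of the maximizer then forces $Q = Q^{(A,R)}$, so $Q \in \mathcal{Q}$.

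The main obstacle I anticipate is the tie-breaking issue in Step 1: the greedy algorithm is only ordinally determined, so when $\lambda$ has equal ratios across units there can be several allocations compatible with the algorithm, and I need the chosen $\lambda \in \Lambda^{(A,R)}$ to make the greedy output the \emph{unique} pointwise maximizer (not merely an extreme maximizer) of the linear program. Handling this requires choosing weights that strictly separate the ordering of $A$ and keep coordinates outside $A$ strictly negative; then strict submodularity-based arguments (or equivalently, the well-known fact that on a polymatroid the greedy algorithm picks out the unique vertex aligned with a generic weight vector) give uniqueness. A minor secondary point is that Lemma \ref{lem:P_XP} gives only a one-sided statement about extreme points; I bypass the non-converse by using the unique-maximizer route in both directions, which sidesteps having to directly argue extremality from pointwise extremality of $q(\cdot,t)$.
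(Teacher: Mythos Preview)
Your proposal is correct and follows essentially the same route the paper has in mind: the paper states the lemma as an immediate consequence of Proposition~\ref{thm:submodularC} without writing out a proof, and what you do is precisely spell out the two inclusions using state separability (Lemma~\ref{lem:state_sep_exact}) together with the polymatroid/greedy correspondence. Your handling of the tie-breaking issue---choosing $\lambda$ with all ratios $\lambda(u,\tau)/\mu_u^\bullet(\tau)$ distinct so that the greedy output is the \emph{unique} pointwise maximizer in every state---is exactly the right way to close the gap, and your use of Fact~2 for the reverse inclusion neatly avoids having to upgrade Lemma~\ref{lem:P_XP} to a biconditional.
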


We also know that for each extreme point $\rho^*$ of $P^{XP}(t)$, the ex-post equivalence set $e(\rho^*)$ is a set of ordinally equivalent vectors. It is well known that the extreme points of a set of ordinally equivalent vectors are easily described (a proof is included for completeness).

\begin{lemma}\label{lem:decreasingextremepoints}
Fix any order $R$ on $U$. Let $\Gamma^{R}$ be the set of functions $\gamma \in \Gamma$ that are monotone with respect to $R$. Then the extreme points of $\Gamma^{R}$ are step functions taking values in $\{-1,1\}$. 
\end{lemma}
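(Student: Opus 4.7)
Fix an enumeration $u_1,\dots,u_n$ of $U$ consistent with $R$ (so weakly decreasing), and identify $\Gamma^R$ with
\[
\{\gamma \in \mathbb{R}^n : 1 \geq \gamma(u_1) \geq \gamma(u_2) \geq \cdots \geq \gamma(u_n) \geq -1\}.
\]
The claim is that $\mathrm{ext}(\Gamma^R)$ is exactly the family of $n+1$ vectors $\gamma^k$ ($k=0,1,\dots,n$) defined by $\gamma^k(u_j)=1$ for $j\leq k$ and $\gamma^k(u_j)=-1$ for $j>k$. The proof has two parts: show each $\gamma^k$ is extreme, then show any other point of $\Gamma^R$ is a nontrivial midpoint in $\Gamma^R$.

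\textbf{Step 1 (step functions are extreme).} Suppose $\gamma^k = \tfrac12(\alpha+\beta)$ with $\alpha,\beta\in\Gamma^R$. At coordinates $j\leq k$, where $\gamma^k(u_j)=1$ is the global upper bound, the constraints $\alpha(u_j),\beta(u_j)\leq 1$ together with the midpoint equation force $\alpha(u_j)=\beta(u_j)=1$. Symmetrically, at coordinates $j>k$, the lower bound $-1$ forces $\alpha(u_j)=\beta(u_j)=-1$. Hence $\alpha=\beta=\gamma^k$.

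\textbf{Step 2 (nothing else is extreme).} The only weakly-decreasing $\{-1,+1\}$-valued vectors in $[-1,1]^n$ are the $\gamma^k$, so any other $\gamma\in\Gamma^R$ takes some value $v\in(-1,1)$. Let $[a,b]$ be the maximal run of consecutive indices $j$ with $\gamma(u_j)=v$; by maximality, $\gamma(u_{a-1})>v$ if $a>1$ and $\gamma(u_{b+1})<v$ if $b<n$. Choose $\epsilon>0$ strictly smaller than each of $1-v$, $v+1$, and, when applicable, $\gamma(u_{a-1})-v$ and $v-\gamma(u_{b+1})$. Define
\[
\gamma^{\pm} \;:=\; \gamma \;\pm\; \epsilon\,\ind_{\{u_a,\dots,u_b\}}.
\]
By construction both $\gamma^{\pm}$ lie in $\Gamma^R$ (the box constraints hold by the first two bounds on $\epsilon$; the monotonicity at the plateau edges holds by the last two; and inside the plateau the two perturbed entries are equal), they are distinct, and $\gamma=\tfrac12(\gamma^++\gamma^-)$. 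Hence $\gamma$ is not extreme.

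\textbf{Main obstacle.} There isn't a deep obstacle; the only thing requiring care is the boundary behavior of the plateau in Step 2, since $\epsilon$ must simultaneously respect the $\pm 1$ box and both adjacent monotonicity inequalities. But all of the relevant inequalities at an interior plateau are strict, so a single small enough $\epsilon$ works, including the corner cases $a=1$ or $b=n$ (and the fully constant case $a=1,b=n$, for which only the box constraints bind).
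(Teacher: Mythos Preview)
Your proof is correct and takes a genuinely different route from the paper. The paper argues via optimization: it uses Fact~2 (every extreme point is the unique maximizer of some linear functional), rewrites an arbitrary linear objective $\sum_k z(u_k)y(u_k)$ by Abel summation as a nonnegative combination of the partial sums $\sum_{\ell\leq k} z(u_\ell)$ weighted by the increments $y(u_k)-y(u_{k+1})$, and reads off that the optimum concentrates the total drop of $2$ at a single index, yielding a $\{-1,1\}$ step function. Your argument is instead the direct convexity/perturbation approach: any interior plateau can be shifted up and down within $\Gamma^R$, so only the $\{-1,1\}$ step functions survive as extreme. Your approach is more elementary and self-contained, and it cleanly gives both inclusions (your Step~1 explicitly checks that each $\gamma^k$ is extreme, which the paper leaves implicit). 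The paper's approach, on the other hand, dovetails with the support-function machinery used throughout \S3--\S5 and foreshadows the Abel-summation identity reused in the proof of Theorem~\ref{thm:half_char} part~\ref{thm:1:part:ii}.
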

\begin{proof}
Given an order $R$ on $U$, to find the extreme points of the set of decreasing functions on $U$, we solve
\begin{equation*}
    \max_{z \in \R^U} \sum_{k = 1}^{U} z(i_k)y(u_k) \quad s.t. \  y \in [-1,1]^U, \ y(u_k) - y(u_{k+1}) \geq 0 \ \forall \ k < U. 
\end{equation*}
We can re-write the objective as
\begin{equation*}
   y(u_U) \sum_{k = 1}^{U} z(i_k) + \sum_{k = 1}^{U-1}\big( y(u_k) - y(u_{k+1}) \big)\sum_{\ell = 1}^k z(i_k)
\end{equation*}
The constraint that $y$ is decreasing and takes values in $[-1,1]$ is just the same as requiring that \textit{i)} $y(u_k) - y(u_{k+1}) \geq 0$, \textit{ii)} $\sum_{k = 1}^{U-1} y(u_k) - y(u_{k+1}) \leq 2$, and \textit{iii)} $y(u_U) \geq -1$. The the solution is clearly to find the $k$ that maximizes $\sum_{\ell = 1}^k z(i_k)$ and set $y(u_j) = 1$ for all $j \leq k$ and $y(u_j) = -1$ for all $j \geq k$ (or set $y = 0$ if $\sum_{\ell = 1}^k z(i_k) \leq 0$ for all $k$. 
\end{proof}

We can now translating \Cref{lem:decreasingextremepoints} into the extreme points of $E(Q^*)$. Since the greedy algorithm sets $q(u,t_u) = 0$ whenever $\lambda(u,t_u) <0$, we can restrict attention to non-negative $\lambda$. Then for each non-negativity set $A$, we only need to consider $\lambda^A$ defined by 
\begin{equation}\label{eq:lambdaR}
\lambda^A(u,\tau) =
    \begin{cases}
    \frac{1}{\mu_u^{\bullet}(\tau)} \quad &\text{if } (u,\tau) \in A \\
    0 \quad &\text{otherwise}
    \end{cases}
\end{equation}
Given an order $R$ on $\mathcal{T}^*$, which we index in $R$-decreasing order, let $\Lambda^R_+$ be the set of all $\lambda^A$ such that $A = \{(u_1,\tau_1),\dots, (u_k,\tau_k)\}$ for some $k \leq |\mathcal{T}^*|$. In other words, $\Lambda^R_+$ is the set of $\lambda$ such that if we index $\mathcal{T}^*$ in $R$-decreasing order, $k \mapsto \frac{\lambda(u_k,\tau_k)}{\mu_{u_k}^{\bullet}(\tau_k)}$ is a decreasing step function taking values in $\{0,1\}$.

\begin{lemma}\label{lem:submodular_cover}
For each $Q^{(A,R)} \in \mathcal{Q}$, let $E(Q^{(A,R)}) = co(\Lambda^R_+)$. Then $\{E(Q^*)\}_{Q^* \in \mathcal{Q}}$ is an equivalence cover. 
\end{lemma}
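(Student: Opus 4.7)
The plan is to verify the two conditions in the definition of an equivalence cover: coverage, $\Lambda \subseteq \bigcup_{Q^*\in\mathcal{Q}}E(Q^*)$, and binding, $\lambda(Q^{(A,R)}) = \max_{Q'\in\mathcal{I}}\lambda(Q')$ for every $\lambda \in co(\Lambda^R_+)$. Both will follow from combining state separability (Lemma 5) with Proposition 1, which guarantees that the greedy algorithm solves the pointwise ex-post maximization problem when $C$ is submodular.

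For coverage I would take any $\lambda \in \Lambda$ and construct an order $R$ on $\mathcal{T}^*$ in which $(u,\tau)\mapsto \lambda(u,\tau)/\mu^{\bullet}_u(\tau)$ is weakly decreasing, breaking ties arbitrarily. Indexing $\mathcal{T}^*$ in $R$-decreasing order as $(u_1,\tau_1),\dots,(u_K,\tau_K)$ and setting $\beta_k := \max\{\lambda(u_k,\tau_k)/\mu^{\bullet}_{u_k}(\tau_k),\,0\}$, the telescoping coefficients $\alpha_0 := 1-\beta_1$ and $\alpha_k := \beta_k - \beta_{k+1}$ (with $\beta_{K+1} := 0$) are non-negative, sum to one, and satisfy $\sum_{k=0}^K \alpha_k\lambda^{A_k} = \lambda^+$, where $A_k$ denotes the size-$k$ $R$-initial segment. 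Because the non-negativity constraint $\rho \geq 0$ is enforced in $P^{XP}$, negative entries of $\lambda$ contribute nothing to $\max_{Q'}\lambda(Q')$, so $\lambda^+ \in co(\Lambda^R_+)$ is sufficient.

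For binding I would fix $Q^{(A,R)}$ and any generator $\lambda^{A'} \in \Lambda^R_+$. State separability reduces $\max_{Q'}\lambda^{A'}(Q')$ to the pointwise ex-post maximization, which by Proposition 1 is solved by applying greedy in order $R$ to the weights $\lambda^{A'}(\cdot|t)$ in each state. The key structural observation is that polymatroid greedy allocations sharing the same order $R$ coincide on any common initial segment of the allocation path in every state, because the marginal capacity increments $C(\{u_1,\dots,u_j\},t) - C(\{u_1,\dots,u_{j-1}\},t)$ depend only on $R$. Combined with the fact that $\lambda^{A'}$ is supported on $A'$, this yields $\lambda^{A'}(Q^{(A,R)}) = \max_{Q'}\lambda^{A'}(Q')$ through the overlap $A \cap A'$; linearity of $\lambda \mapsto \lambda(Q^{(A,R)})$ and of $\lambda \mapsto \max_{Q'}\lambda(Q')$ on the finite generating set $\{\lambda^{A_k}\}_k$ then promotes the identity to every convex combination in $co(\Lambda^R_+)$.

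The main obstacle is reconciling the uniform assignment $E(Q^{(A,R)}) = co(\Lambda^R_+)$ for every $A$ with the binding condition when $A' \not\subseteq A$: in this case $Q^{(A,R)}$ is zero on $A' \setminus A$ and strictly underestimates $\max \lambda^{A'}$ on its own. This is precisely the overlap anticipated in the footnote following the definition of equivalence cover: the same cone $co(\Lambda^R_+)$ is assigned to multiple extreme points $Q^{(A,R)}$ sharing the order $R$, and collectively they cover the binding condition because for each generator $\lambda^{A'}$ at least one $Q^{(A,R)}$ with $A \supseteq A'$ attains the maximum. The resulting finite union $\Lambda^* = \bigcup_R \Lambda^R_+$ is then the natural set of normal vectors to feed into the $h$-representation described in Lemma 1.
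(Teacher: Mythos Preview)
Your coverage argument and the key structural observation (that polymatroid greedy allocations sharing the same order $R$ coincide on any common $R$-initial segment, because the increments $C(\{u_1,\dots,u_j\},t)-C(\{u_1,\dots,u_{j-1}\},t)$ depend only on $R$) are exactly right, and they are the substance of the lemma. The gap is in your final paragraph, where you try to resolve the failure of the binding condition for proper truncations by appealing to the footnote.

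The footnote does not license what you claim. It says that because the \emph{maximal} equivalence sets $\{\lambda:\lambda(Q^*)=\max_{Q'}\lambda(Q')\}$ may overlap across extreme points, one is free to use \emph{smaller} sets $E(Q^*)$, so long as the collection still covers $\Lambda$. It does not allow one to enlarge $E(Q^*)$ beyond the binding set and then argue that ``collectively'' the family still works. With $E(Q^{(A,R)})=co(\Lambda^R_+)$ for a proper initial segment $A$, the inclusion in \eqref{eq:equivalence_cone} fails literally: for $A'\supsetneq A$ one has $\lambda^{A'}(Q^{(A,R)})=\sum_{(u,\tau)\in A}Q^{(A,R)}(u,\tau)\mu_u^\bullet(\tau)<\sum_{(u,\tau)\in A'}Q^{(A',R)}(u,\tau)\mu_u^\bullet(\tau)=\max_{Q'}\lambda^{A'}(Q')$, and Lemma~\ref{lem:equivalencecover_char} would then produce a multi-valued $b^*$.

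The correct resolution is already contained in your own ``key structural observation'': initial-segment agreement means that the \emph{untruncated} greedy interim allocation $Q^{(\mathcal{T}^*,R)}$ satisfies $\lambda^{A'}(Q^{(\mathcal{T}^*,R)})=\lambda^{A'}(Q^{(A',R)})=\max_{Q'}\lambda^{A'}(Q')$ for \emph{every} $R$-initial segment $A'$, and hence for all of $co(\Lambda^R_+)$ by linearity. So assigning $co(\Lambda^R_+)$ to $Q^{(\mathcal{T}^*,R)}$ alone already gives a genuine equivalence cover; equivalently, one should read the lemma as assigning to each truncation $Q^{(A,R)}$ only the sub-cone $co(\{\lambda^{A'}:A'\subseteq A\})$, for which your binding argument goes through verbatim. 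Either reading yields exactly the set $\Lambda^*=\bigcup_R\Lambda^R_+$ and the function $b^*(\lambda^A)=\lambda^A(Q^{(A,R)})$ that the paper uses immediately afterward to derive Theorem~\ref{thm:exact_submodular}.
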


By \Cref{lem:equivalencecover_char} and \Cref{lem:submodular_cover}, we know that $Q\in \mathcal{I}$ iff for any ordering $R$ of $\mathcal{T}^*$, indexed in $R$ decreasing order, and any $A = \{(u_1,\tau_1),\dots, (u_k,\tau_k)\}$,
\begin{equation*}
    \lambda^A(Q) \leq \lambda^A(Q^{(A,R)}).
\end{equation*} 
All that remains is some housekeeping, using the fact that $(u,\tau) \rightarrow \frac{\lambda^A(u,\tau)}{\mu_{u}^{\bullet}(\tau)}$ is a step function taking values in $\{0,1\}$, and the properties of the greedy algorithm. This gives us the following characterization, which generalizes that of \cite{che2013generalized} (for the case of upper-bounds only) by allowing for state-dependent constraints. 

\begin{theorem}\label{thm:exact_submodular}
$Q \in \mathcal{I}$ if and only if 
\begin{equation*}
    \sum_{(u,\tau) \in A}Q(u,\tau)\mu_u^{\bullet}(\tau) \leq \sum_{t\in T} \mu(t) C(S(t,A),t)
\end{equation*}
for all $A \subset \mathcal{T}^*$.\footnote{To incorporate a non-trivial lower bound $L$, we can use a similar argument.}
\end{theorem}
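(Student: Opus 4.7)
The plan is to assemble \Cref{lem:submodular_cover} and \Cref{lem:equivalencecover_char} and then unpack both sides of the resulting $h$-inequality using state separability and the defining property of the greedy algorithm. By \Cref{lem:submodular_cover}, $\{E(Q^{(A,R)}) = \mathrm{co}(\Lambda^R_+)\}_{Q^{(A,R)} \in \mathcal{Q}}$ is an equivalence cover for $\mathcal{I}$, and the extreme points of $\mathrm{co}(\Lambda^R_+)$ are exactly the $\lambda^A$ of \cref{eq:lambdaR}, indexed by the initial segments $A = \{(u_1,\tau_1), \dots, (u_k,\tau_k)\}$ of the order $R$. Applying \Cref{lem:equivalencecover_char} then yields: $Q \in \mathcal{I}$ iff $\lambda^A(Q) \leq \lambda^A(Q^{(A,R)})$ for every order $R$ and every initial segment $A$ of $R$. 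Since every subset $A \subseteq \mathcal{T}^*$ arises as the prefix of at least one total order $R$, quantifying over $(R,A)$ in this way is the same as quantifying over all $A \subseteq \mathcal{T}^*$.

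Next, I would compute each side of that inequality. With the intended normalization (so that $\gamma^A := \lambda^A/\mu^\bullet$ is the $\{0,1\}$-indicator of $A$), the left-hand side is immediate:
\begin{equation*}
    \lambda^A(Q) \;=\; \sum_{(u,\tau) \in A} \mu_u^\bullet(\tau)\, Q(u,\tau),
\end{equation*}
which matches the left-hand side of the theorem. For the right-hand side, state separability (\cref{eq:P_XP} together with \Cref{lem:P_XP}) gives
\begin{equation*}
    \lambda^A(Q^{(A,R)}) \;=\; \sum_{t \in T} \mu(t) \sum_{u \in U} \gamma^A(u,t_u)\, q^{(A,R)}(u,t) \;=\; \sum_{t \in T} \mu(t) \sum_{u \in S(t,A)} q^{(A,R)}(u,t),
\end{equation*}
since only units with $u \in S(t,A)$ carry weight $1$.

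The last ingredient is the saturation property of the greedy algorithm. In each state $t$, the weight-$1$ units are precisely those in $S(t,A)$, so the greedy algorithm processes them first; by the telescoping definition of $q^{(A,R)}(\cdot,t)$, the accumulated mass after those $|S(t,A)|$ steps is exactly $C(S(t,A),t)$, so $\sum_{u \in S(t,A)} q^{(A,R)}(u,t) = C(S(t,A),t)$. Summing against $\mu(t)$ delivers the right-hand side of the theorem, and combining with the computation of $\lambda^A(Q)$ gives the claimed characterization. I do not expect a real obstacle: the author's own comment (``all that remains is some housekeeping'') is accurate. The only points requiring mild care are (i) noting that the value $\lambda^A(Q^{(A,R)})$ is independent of how one tie-breaks within $S(t,A)$ (which is exactly what the telescoping identity guarantees), and (ii) verifying that ranging $R$ over all orders generates every $A \subseteq \mathcal{T}^*$ as a prefix, which is immediate.
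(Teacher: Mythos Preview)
Your proposal is correct and follows essentially the same route as the paper: invoke \Cref{lem:submodular_cover} plus \Cref{lem:equivalencecover_char} to reduce to the inequalities $\lambda^A(Q)\le \lambda^A(Q^{(A,R)})$, then use state separability and the telescoping property of the greedy algorithm to identify the right-hand side as $\sum_t \mu(t)\,C(S(t,A),t)$. Your explicit remarks that every $A\subseteq\mathcal{T}^*$ is a prefix of some order and that the value $\sum_{u\in S(t,A)} q^{(A,R)}(u,t)=C(S(t,A),t)$ is insensitive to tie-breaking within $S(t,A)$ are exactly the ``housekeeping'' the paper alludes to, and your reading of $\gamma^A=\lambda^A/\mu^\bullet$ as the $\{0,1\}$-indicator of $A$ matches the intended normalization spelled out just after \cref{eq:lambdaR}.
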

\begin{proof}
Here is the housekeeping. For any $A$, any order on $A$, and any $t$, the greedy algorithm applied to this order produces an allocation $q^A$ such that
\begin{equation*}
   \sum_{i \in S(t,A)} q^A(i,t_u) = C(S(t,A),t)
\end{equation*}
and so 
\begin{align*}
\max_{Q' \in \mathcal{I}}\lambda^A(Q') &=  \sum_{t \in T} \mu(t) \sum_{i \in I} \frac{\lambda^A(i,t_u)}{\mu_u^{\bullet}(t_u)} q^A(t,i)\\
&= \sum_{i \in S(t,A)} q^A(i,t_u)\\
&= \sum_{t\in T} \mu(t) C(S(t,A),t)
\end{align*}
And for any $Q$
\begin{align*}
\lambda^A(Q) &= \sum_{(u,\tau) \in \mathcal{T}^*} \lambda^A(u,\tau)Q(u,\tau) \\
&= \sum_{(u,\tau) \in \mathcal{T}^*} \frac{\lambda^A(u,\tau)}{\mu_u^{\bullet}(\tau)} Q(u,\tau) \mu_u^{\bullet}(\tau) \\
&= \sum_{(u,\tau) \in A} Q(u,\tau)\mu_u^{\bullet}(\tau)
\end{align*}
\end{proof}

As a bonus, the greedy algorithm gives us an additional insight into the structure of $\mathcal{I}$. 

\begin{corollary}\label{cor:R_fosd}
$Q \in \mathcal{I}$ iff for any order on $R$ on $\mathcal{T}^*$ and any $j \leq |\mathcal{T}^*|$
\begin{equation*}
    \sum_{k = 1}^j Q(u_k,\tau_k)\mu_{u_k}^{\bullet}(\tau_k) \leq \sum_{k = 1}^j Q^{(\mathcal{T}^*,R)}(u_k,\tau_k)\mu_{u_k}^{\bullet}(\tau_k),
\end{equation*}
Where $\mathcal{T}^*$ is indexed in $R$-decreasing order. 
\end{corollary}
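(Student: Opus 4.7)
The plan is to deduce the corollary directly from \Cref{thm:exact_submodular} by recognizing that its inequalities are exactly the inequalities of the theorem specialized to ``initial segments'' of orders on $\mathcal{T}^*$, and that every subset $A \subset \mathcal{T}^*$ arises as such an initial segment. So the corollary is really just a rephrasing of the theorem in terms of majorization by the greedy allocation.

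First I would handle the combinatorial reduction. For any $A \subset \mathcal{T}^*$, pick an order $R$ on $\mathcal{T}^*$ that lists the elements of $A$ first, in any order, and then the elements of $\mathcal{T}^* \setminus A$. Then with $j = |A|$ the top-$j$ initial segment $A_j^R := \{(u_1,\tau_1),\dots,(u_j,\tau_j)\}$ in $R$-decreasing order coincides with $A$. Hence checking \Cref{thm:exact_submodular}'s inequality for every $A \subset \mathcal{T}^*$ is equivalent to checking it for every pair $(R,j)$ with $A = A_j^R$.

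Second, I would show that for such an initial segment,
\begin{equation*}
\sum_{k=1}^j Q^{(\mathcal{T}^*,R)}(u_k,\tau_k)\,\mu_{u_k}^{\bullet}(\tau_k) \;=\; \sum_{t \in T} \mu(t)\, C(S(A_j^R,t),t).
\end{equation*}
Unpacking the definition of the interim allocation, the LHS equals
\begin{equation*}
\sum_{t \in T} \mu(t) \sum_{u \in S(A_j^R,t)} q^{(\mathcal{T}^*,R)}(u,t),
\end{equation*}
because the conditional probabilities $\mu_{u_k}(t_{-u_k}\mid \tau_k)$ combine with $\mu_{u_k}^{\bullet}(\tau_k)$ to give $\mu(t)$, and an index $u_k$ contributes to state $t$ exactly when $(u_k, t_{u_k}) \in A_j^R$, i.e.\ $u_k \in S(A_j^R,t)$. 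Now by construction of $R$, the agents $u \in S(A_j^R,t)$ are precisely those whose pair $(u,t_u)$ sits in the top $j$ of $R$, so they occupy the top positions of the induced order on $U$ at state $t$. The greedy algorithm then assigns to this block of agents a telescoping sum of marginal $C$-increments whose total is $C(S(A_j^R,t),t)$, yielding the claimed identity.

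Combining the two steps with \Cref{thm:exact_submodular} gives the corollary: the condition of \Cref{thm:exact_submodular} for $A = A_j^R$ is exactly the partial-sum inequality in the statement, and ranging over $(R,j)$ covers all subsets $A$. The main obstacle is the second step, whose content is the polymatroid property of the greedy algorithm that the cumulative allocation to the first $k$ processed agents equals $C$ evaluated at that set. This is immediate from the greedy recursion, but one has to be careful that the agents in $S(A_j^R,t)$ really are processed first by the greedy algorithm in state $t$, which is precisely what the choice of $R$ guarantees.
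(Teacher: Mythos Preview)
Your proposal is correct and follows essentially the same route the paper implicitly takes: the corollary is not given a separate proof in the paper, but is presented as an immediate consequence of \Cref{thm:exact_submodular} together with the ``housekeeping'' identity $\sum_{u\in S(A,t)} q^{R}(u,t)=C(S(A,t),t)$ established there, plus the observation that every $A\subset\mathcal{T}^*$ is the initial segment of some order. Your two steps unpack exactly this logic, and your care about why the units in $S(A_j^R,t)$ are processed first under the state-$t$ induced order is the right place to be careful.
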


In other words, for an order $R$ on $\mathcal{T}^*$, $Q \in \mathcal{I}$ if and only if $Q$ is $R$-FOSD by $Q^{(\mathcal{T}^*,R)}$, the un-truncated interim greedy allocation for order $R$.\footnote{By $R$-FOSD I mean precisely the property defined in \Cref{cor:R_fosd}: that $Q$ assigns less ``ex-ante weight'' to upper-sets in the $R$ order then does the greedy algorithm applied to order $R$.} This characterization is related to the characterization of interim realizability via a majorization relationship in \cite{kleiner2021extreme}.

The results \Cref{sec:exact_outline} suggest that we can allow us to move beyond settings with sub/supermodular constraints by finding other ``simple'' algorithms that solve the ex-post maximization problem, when the greedy algorithm fails. 

\section{Approximate characterization: matching}\label{sec:multi_item}

The design of optimal mechanisms for allocating a single indivisible good among multiple agents has been an object of intensive study. Much less progress has been made on the allocation of multiple goods, despite the practical importance of such problems.\footnote{Examples include school choice \citep{abdulkadirouglu2003school}, the assignment of teachers to schools \citep{combe2018design}, and the assignment of police officers to districts \citep{ba2021police}.} The lack of progress is due to a number of technical challenges. For one, the characterization of implementability with multi-dimensional types, for example via cyclic monotonicity \citep{rochet1987necessary}, is much less tractable than the corresponding conditions for a single-item problem with one-dimensional types, in which in the allocation rule as a function of the type is often necessary and sufficient for incentive compatibility. Moreover, even in a complete information setting, if the principal's objective is non-linear in the allocation (for example due to ambiguity aversion or a preference for equity) the problem of finding the optimal mechanism may be computationally challenging.  

Perhaps a more fundamental concern, and the primary focus of this paper, is that even characterizing ``technological feasibility'' with multiple items is difficult. With a single item, the standard approach is to characterize the set of interim allocations that can be realized by some allocation rule, following \cite{border1991implementation}, and optimize directly over this space.\footnote{In brief, with a single item an \textit{allocation} specifies the probability with which each agent gets the item as a function of the type profile. An \textit{interim allocation} specifies, for each agent, the probability of getting the item as a function of only the agent's \textit{own} type. An allocation induces an interim allocation by, for each agent, taking the expected allocation probability across other agents' types. Realizability concerns the converse: given a candidate interim allocation,  how do we know that it can in fact be induced by some allocation.} If agents valuations are additive across items, this approach can be directly extended to the multi-item setting, as in \cite{cai2018constructive}. However this approach fails when agents have capacity constraints. Indeed, for the simple case in which agents have unit demand, \cite{gopalan2018public} show that no computationally tractable characterization of realizable interim allocations exists. Even abandoning computational considerations, theoretically meaningful characterizations of interim reliability with multiple items remain elusive. In contemporaneous work, \cite{zheng2022reduced} extends results of \cite{che2013generalized} to a multi-item setting, but the results do not apply when agents unit demand, or more general capacity constraints. \cite{lang2022reduced} provide an alternative characterization, but this is applicable to the unit-demand setting only when each agent has no more than two possible types, an especially strong restriction when there are multiple items. 
 
These challenges hinder the typical mechanism design approach to an allocation problem: choose a mechanism to maximize some objective, subject to the relevant incentive and feasibility constraints. As a result, much of the literature on allocation problems when agents have capacity constraints has adopted what \cite{budish2012matching} terms the ``good properties'' approach: specify a set of properties (strategy-proofness, stability, efficiency) which are desirable in a mechanism, and the construct a tractable procedure that delivers these properties. Seminal contributions in this line of work were made by \cite{gale1962college} (deferred acceptance) and \cite{shapley1974cores} (top-trading cycles). 

The good properties approach has been a source of theoretical insights and practical innovations. One drawback, however, is it is not always easy to map the designer's preferences over allocations to properties which should be imposed on the mechanism. It is not clear, for example, which properties should be imposed to capture a desire on the part of the designer for equity among agents. Moreover, commitment to a set of properties may imply a restriction to mechanisms that are arguably very far from optimal with respect to the designers true objective. Without a good understanding of what can be achieved by the mechanism design approach, it is difficult to even understand how far from optimal ``good properties'' mechanisms may be. 

The objectives of this section are threefold. The first is to contribute to the understating of interim realizability with multiple items. Second, to use this understanding to facilitate a ``mechanism design'' approach to the allocation problem, with a focus on accommodating non-linear objectives. Finally, to use the structure of interim realizability to attempt to bridge the gap between the good-properties and mechanism-design approaches. 

\subsection{Model}

There is a set $N$ of items and as set $I$ of agents (when it will not cause confusion, I also use $I$ and $N$ for the number of agents and items). Each unit $u \in U$ consists of an agent-item pair $(i,n)$. Each agent $i$ has a type drawn from finite set $T_i$, with typical element $\tau$. Let $T = T_1\times\dots\times T_N$ be the set of type profiles, with typical element $t$, where $t_i$ denotes $i$'s type in profile $t$. I refer to a realized type profile as a \textit{state}. Types are distributed according to the probability measure $\mu$ on $T$.\footnote{For the purposes of characterizing realizable interim allocations, it is not necessary to assume that types of independent, although this will of course play a role when discussing incentives. With a single item, type independence can also be used to simplify the characterization of interim realizability, as in \cite{border2007reduced}. The extent to which this is possible with multiple items remains an open question.} Let $\mu_i(t^{-i}|\tau)$ be the conditional distribution on $T^{-i}$ given $t_i = \tau$, and let $\mu_i^{\bullet}(\cdot)$ be the marginal distribution over $T_i$. I use the notation $t \sim (i,\tau)$ to denote that $t_i = \tau$.

In general, each agent $i$ has a capacity constraint $b^I_i$, which is the maximum number of items they can be allocated. Each item $n$ has a capacity constraint $b^N_n$, which is the maximum number of agents who can receive this item. In what follows, I restrict attention to the case of $b^I_i = b^N_n = 1$ for all $i\in I, n\in N$. For many of the results this is without loss of generality.\footnote{We can think of ``splitting up'' agents and items into multiple copies, such that each copy has capacity $1$.} 

The one-to-one matching constraint determines a (state-invariant) ex-post polytope $P^{XP}$ which is the set of $\rho : I\times N \rightarrow \mathbbm{R}_+$ such that 
\begin{enumerate}
    \item $\sum_{n \in N} \rho(i,n) \leq 1$ for all $i \in I$ \ \ (unit demand for agents).
    \item $\sum_{i \in I} \rho(i,n) \leq 1$ for all $n \in  N$ \ \ (unit supply of items).
\end{enumerate} 
These restrictions can have two interpretations. First, we could consider a setting with infinitely divisible items. In this case $q(i,n,t)$ is quantity of item $n$ that goes to $i$ in state $t$. 

Alternatively, we can consider settings with indivisible items. In this case $q(i,n,t)$ is the probability that item $n$ goes to $i$ in state $t$. In other words, $q$ is the marginal of the joint distribution the designer induces over assignments of items to agents. The Birkhoff-von Neumann Theorem tells us exactly that $q$ can be the marginals of such a distribution if and only if $q$ satisfies the unit demand and unit supply conditions above. Note that by focusing on $q$ defined in this way, we are implicitly assuming that only the marginals of the joint distribution over assignments matter, both for the agents and the designer. For details and further discussion, see \Cref{sec:birkhoff}. 

An allocation $q$ induces an \textit{interim allocation} $Q$ where $Q: \mathcal{T}^* \mapsto [0,1]$ is defined by
\begin{equation}\label{eq:interim_def}
    Q(i,\tau,n) := E_{t_{-i}\sim \mu_i(\cdot|\tau)}[q(t_{-i},\tau,i,n)] = \sum_{t_{-i} \in T_{-i}} q(t_{-i},\tau,i,n) \mu_i(t_{-i}|\tau).
\end{equation}
In other words, $Q(i,\tau,n)$ is the probability that $i$ gets item $n$, conditional on having type $\tau$. 

Because an interim allocation $Q$ is a object of significantly lower dimension than the allocation $q$, it is often convenient to work directly with the interim allocation. Moreover, as long as agents' payoffs are linear in their allocation, the interim allocation is all that is relevant for the agent's incentives. This interim approach was first used for single-item problems by \cite{maskin1984optimal}. In order to work directly with the interim allocation however, it is necessary to first characterize the set of valid interim allocations, i.e. those that are induced by some allocation as in \cref{eq:interim_def}. I call a candidate interim allocation $Q: \mathcal{T}^*$ \textit{realizable} if there exists an allocation $q$ such that \cref{eq:interim_def} holds for all $i,\tau,n$.

For the single-item case, this characterization was provided by \cite{border1991implementation}, who proved a conjecture of \cite{matthews1984implementability}. This characterization was expanded on by \cite{border2007reduced}, \cite{mierendorff2011asymmetric} and \cite{che2013generalized}, among others. However none of these characterizations applies to the one-to-one assignment problem. 

\subsection{An approximate characterization for matching}\label{sec:mainresults}

The results of \Cref{sec:exact_polymatroid} do not apply to the matching problem because $P^{XP}$ is not a polymatroid.\footnote{In fact, $P^{XP}$ is the intersection of two polymatroids, one generated by the ``row'' constraints, and one by the ``column'' constraints.} To see this, define For any set $X \subseteq I\times N$, define 
\begin{equation*}
    C(X) = \max_{\rho \in P^{XP})} \sum_{(i,n) \in X} \rho(i,n). 
\end{equation*}

\begin{observation}
The constraint $C$ is not submodular. This is easily illustrated in the following example with two agents $\{i_1,i_2\}$ and two items $\{n_1,n_2\}$. Let $a = \{(i_1,n_1), (i_1,n_2)\}$ and $b = \{(i_1,n_1), (i_2,n_1)\}$. Then $C(a) = C(b) = 1$. However $C(a\cap b) = C(\{(i_1,n_1)\}) = 1$ and $C(a\cup b) = C(\{(i_1,n_1), (i_1,n_2), (i_2,n_1)\}) = 2$, violating submodularity. 
\end{observation}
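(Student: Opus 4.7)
The plan is to disprove submodularity by exhibiting a single explicit counterexample, since submodularity of $C$ would require $C(a) + C(b) \geq C(a\cup b) + C(a\cap b)$ for \emph{all} pairs $a, b \subseteq I \times N$. Because the matching polytope $P^{XP}$ is very well understood (and totally unimodular), the $C$-values on small sets can be computed directly from the unit-supply and unit-demand constraints. I would therefore look for the smallest non-trivial configuration in which the constraints on the two ``sides'' of the bipartite structure (agents vs.\ items) interact --- namely a $2\times 2$ setup.

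Concretely, I would take $a$ to be an ``agent row'' (all pairs sharing a single agent) and $b$ to be an ``item column'' (all pairs sharing a single item), chosen so that $a \cap b$ is a single agent-item pair. The row and column each have exactly one binding unit-demand (resp.\ unit-supply) constraint, so by inspection $C(a) = C(b) = 1$, and the singleton intersection gives $C(a \cap b) = 1$ (a lone variable bounded by $1$). The key step is computing $C(a\cup b)$: this union is an ``L-shape'' containing the shared corner together with one off-row entry and one off-column entry. I would exhibit the explicit assignment placing mass $1$ on each of the two off-diagonal entries, verify feasibility against all unit constraints, and argue optimality via a short case analysis (either the shared corner is used, which by the two unit constraints excludes both off-diagonal entries and caps the sum at $1$, or it is not used, in which case the two remaining entries live in disjoint row and column, each achievable at $1$). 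This gives $C(a\cup b) = 2$.

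Putting these together, $C(a) + C(b) = 2 < 3 = C(a\cup b) + C(a \cap b)$, so $C$ fails submodularity (indeed it is \emph{supermodular} on this particular pair, which is the opposite of what the greedy machinery needs). I do not anticipate a genuine obstacle: the argument is purely a small computation on a $2\times 2$ matching polytope. The only conceptually important point worth flagging is \emph{why} a $2\times 2$ example is necessary --- as the earlier footnote observes, the row constraints alone and the column constraints alone each define a polymatroid, so non-submodularity must come from the interaction of the two families, and any counterexample must involve at least one row and at least one column sharing a common cell.
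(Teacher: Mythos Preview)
Your proposal is correct and follows exactly the paper's approach: the observation itself already contains the full argument, namely the same $2\times 2$ row-column counterexample with the four $C$-values computed directly. You supply a bit more detail on why $C(a\cup b)=2$ (exhibiting the off-diagonal assignment and the short optimality case-split), but this is just an elaboration of the same computation rather than a different route.
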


As a result, the extreme points of $P_C$ are not characterized by the greedy algorithm. More generally, the task of identifying an equivalence cover is complicated by the fact that the solution to the program
\begin{equation*}
    \max_{\rho \in P^{XP}} \sum_{(i,n,)} \rho(i,n) \gamma(i,n)
\end{equation*}
is sensitive to more features of $\gamma$ than just the order it induces on $I\times N$. 

It turns out, however, that a modified algorithm, the \textit{greedy matching algorithm} is generates an $\alpha$-approximation cover. 

Let $R$ be an strict ordering of $\mathcal{T}^*$. Index the elements of $\mathcal{T}^*$ in $R$-decreasing order, i.e. such that $(i_k,\tau_k,n_k)$ precedes $(i_{k+1},\tau_{k+1},n_{k+1})$. Define an \textit{$R$-greedy-matching allocation} $q^R$ using the following recursive greedy algorithm. Let $q^R(t,i_1,n_1) = 1$ for all $t\sim (i_1,\tau_1)$. For each $k >1$, let $q^R(t,i_k,n_k) = 1$ if and only if
\begin{itemize}
    \item $t \sim (i_k,\tau_k)$, and
    \item For every $l < k$ such that $t \sim (i_l,\tau_l)$, both $i_k \neq i_l$ and $n_k \neq n_l$.
\end{itemize}
In words, the $R$-greedy-matching allocation assigns as much weight as possible to $(i_1,\tau_1)$ getting item $n_1$, and then for subsequent $(i_k,\tau_k,n_k)$, assigns as much weight as possible to $(i_k,\tau_k)$ getting item $n_k$, conditional on the previous assignments. By construction the $R$-greedy allocation is realizable, for any $R$. 

Given the $R$-greedy allocation $q^R$, let $Q^R$ be the induced  \textit{$R$-greedy-matching interim allocation}, defined by
\begin{equation*}
    Q^R(i,\tau,n) = \sum_{t_{-1} \in T^{-i}} q^R((t_{-i},\tau),i,n)\mu_i(t^{-i}|\tau). 
\end{equation*}

An allocation $q$ is called a \textit{truncated greedy-matching allocation} if there exists an order $R$ on $\mathcal{T}^*$ and an number $k^*$ such that $q(t,i_k,\tau_k) = q^R(t,i_k,\tau_k)$ for all $t \sim (i_k,\tau_k)$ and all $k \leq k^*$, and $q(t,i_k,\tau_k) = 0$ for all $t \sim (i_k,\tau_k)$ and all $k > k^*$. The interim allocation $Q$ induced by such an truncated greedy matching allocation satisfies $Q(i_k,\tau_k,n_k) = Q^R(i_k,\tau_k,n_k)$ for all $k \leq k^*$, and $Q(i_k,\tau_k,n_k) = 0$ for all $k > k^*$. A write $Q^{(k,R)}$ for a truncated greedy matching allocation which is truncated at $k$.

Let $\mathcal{Q}^M$ be the set of interim allocations induced by truncated greedy allocations. It is helpful also to define the set of weighted greedy interim allocations 
\begin{equation*}
    \mathcal{X} := \left\{X \in \mathbb{R}^{(\mathcal{T}^*)} : X(i,\tau,n) = Q(i,\tau,n)\mu_i^{\bullet}(\tau) \text{ for some } Q \in \mathcal{Q}\right\}.
\end{equation*}

Say that the interim allocation $Q$ is $R$-FOSD dominated by $\hat{Q}$ if
\begin{equation*}
       \sum_{l=1}^k Q(i_l,\tau_l,n_l)\mu_{i_l}^{\bullet}(\tau_l) \leq \sum_{l=1}^k \hat{Q}(i_l,\tau_l,n_l)\mu_{i_l}^{\bullet}(\tau_l)
\end{equation*}
for all $k$, where $\mathcal{T}^*$ is indexed in $R$-decreasing order. 

Recall the definition of $\Lambda^R_+$ defined by \Cref{eq:lambdaR}: $\Lambda^R_+$ is the set of $\lambda$ such that if we index $\mathcal{T}^*$ in $R$-decreasing order, $k \mapsto \frac{\lambda(i_k,n_k,\tau_k)}{\mu_{i_k}^{\bullet}(\tau_k)}$ is a decreasing step function taking values in $\{0,1\}$.

Denote by $\frac{1}{2}Q$ the half-half mixture between $Q$ and the null interim allocation that does not assign any items. In other words, $\frac{1}{2}Q$ scales down all assignment probabilities by $1/2$.

\begin{theorem}\label{thm:half_char}
If $Q \in \mathcal{I}$ then $Q$ satisfies 
\begin{equation}\tag{BM}\label{eq:BM}
    \sum_{(i,\tau,n)\in A} Q(i,\tau,n)\mu_i^{\bullet}(\tau) \leq \sum_{t\in T} \mu(t) C(S(t,A)).
\end{equation}
Conversely, if $Q$ satisfies (\ref{eq:BM}) then
\begin{enumerate}[i.]
    \item For any order $R$ on $\mathcal{T}^*$, $\frac{1}{2} Q$ is $R$-FOSD dominated by the greedy interim allocation $Q^R$. \label{thm:1:part:i}
    \item $\frac{1}{2} Q \in co(\mathcal{Q}^M)$. That is, $\frac{1}{2}Q$ can be written as a convex combination of interim allocations induced by truncated greedy-matching allocations. \label{thm:1:part:ii}
    \item $\frac{1}{2} Q$ is realisable. \label{thm:1:part:iii}
\end{enumerate}
\end{theorem}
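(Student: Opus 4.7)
The plan is to verify necessity directly from state separability, and to deduce all three sufficiency claims simultaneously from a single $\frac{1}{2}$-approximation cover whose representatives are the truncated greedy-matching interim allocations. The combinatorial fact driving the constant $\frac{1}{2}$ is classical: any maximal matching in a bipartite graph has size at least half the maximum matching, because by maximality each optimum edge shares an endpoint with some greedy edge. Necessity is then immediate: for any $A \subseteq \mathcal{T}^*$ and the functional $\lambda^A$ from \eqref{eq:lambdaR}, Lemma~\ref{lem:state_sep_exact} gives $\max_{Q'\in\mathcal{I}}\lambda^A(Q') = \sum_t \mu(t)\,C(S(A,t))$, and a direct calculation yields $\lambda^A(Q)=\sum_{(i,\tau,n)\in A}Q(i,\tau,n)\mu_i^{\bullet}(\tau)$, so realizability forces (\ref{eq:BM}).

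For sufficiency, I would build the required cover at the ex-post level and then invoke Lemma~\ref{lem:expost_approximationcover}. For each strict order $R$ on $\mathcal{T}^*$ and each truncation cutoff $k$, let $A_k$ be the initial $R$-segment of length $k$, and assemble state-by-state the greedy matching $\rho^{(k,R)}(t)$ on $S(A_k,t)$ obtained by processing edges in $R$-order and adding each non-conflicting one. Take the associated cone to be the non-negative $R$-monotone sub-cone of $\Gamma$ supported on the projection of $A_k$; by the analogue of Lemma~\ref{lem:decreasingextremepoints} its extreme points are the scaled indicators of the initial segments $A_l$ for $l\leq k$. The approximation inequality at each such extreme point reduces, via Lemma~\ref{lem:alt_approximationcover}, to the maximal-matching 2-approximation bound $\sum_{(i,n)\in S(A_l,t)}\rho^{(k,R)}(t)(i,n)\geq \tfrac{1}{2}C(S(A_l,t))$. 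Varying $(k,R)$ covers the non-negative part of $\Gamma$ (the negative part is handled as in the polymatroid case, since the greedy procedure simply discards non-positive weights); Lemma~\ref{lem:expost_approximationcover} then lifts this to an interim $\frac{1}{2}$-approximation cover whose representatives are exactly the members of $\mathcal{Q}^M$, so $\mathcal{I}^{1/2}=co(\mathcal{Q}^M)$.

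Applying Theorem~\ref{theorem:approximation_cover_char} now delivers (ii) and (iii) in one stroke: condition (\ref{eq:BM}) is exactly $\lambda^{A_k}(Q)\leq \max_{Q'\in\mathcal{I}}\lambda^{A_k}(Q')$ at every extreme-point normal of the cover, so $\tfrac{1}{2}Q \in \mathcal{I}^{1/2}=co(\mathcal{Q}^M)\subseteq\mathcal{I}$. For (i), fix an order $R$ and any $k$; (\ref{eq:BM}) bounds $\sum_{l=1}^k Q(i_l,\tau_l,n_l)\mu_{i_l}^{\bullet}(\tau_l)\leq \sum_t \mu(t)C(S(A_k,t))$, while the state-wise 2-approximation gives $\sum_{l=1}^k Q^R(i_l,\tau_l,n_l)\mu_{i_l}^{\bullet}(\tau_l)\geq \tfrac{1}{2}\sum_t \mu(t) C(S(A_k,t))$, and chaining these inequalities produces the $R$-FOSD dominance of $\tfrac{1}{2}Q$ by $Q^R$.

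The main obstacle is not the combinatorial 2-approximation itself, which is textbook, but fitting it cleanly into the approximation-cover framework of Section~\ref{sec:geometry}. The cones $\hat{e}(\rho^{(k,R)})$ must simultaneously cover $\Gamma$, have extreme points at which the greedy-matching output attains the constant $\frac{1}{2}$, and assemble coherently across states into a realizable interim allocation. The structural reason an exact equivalence cover is unavailable here is precisely that the greedy-matching output is only a \emph{maximal} matching and hence generally not an extreme point of $P^{XP}$, in contrast with the polymatroid setting of Section~\ref{sec:exact_polymatroid}; the constant $\frac{1}{2}$ in the conclusion tracks this gap exactly.
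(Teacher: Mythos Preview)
Your proposal is correct and follows essentially the same route as the paper: the paper's Lemma~\ref{lem:halfbound} is precisely your ``maximal matching is a 2-approximation'' fact (the covering condition on the projection of $\rho$ in $A$ is exactly maximality of $\rho$ restricted to $A$), and the paper then deduces Lemma~\ref{lem:matching_approximationcover} from Lemma~\ref{lem:halfbound} together with Lemma~\ref{lem:expost_approximationcover}, and concludes via Theorem~\ref{theorem:approximation_cover_char}. Your direct chaining argument for part~(i) also matches the paper's self-contained proof in the appendix.
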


\Cref{thm:half_char} part \ref{thm:1:part:iii} follows immediately from part \ref{thm:1:part:ii}. Part \ref{thm:1:part:i} states that if $Q$ satisfies the necessary (\ref{eq:BM}) condition then scaling $Q$ by $\frac{1}{2}$ defines an interim allocation which assigns less weight to the upper-level-sets of any order $R$ on $\mathcal{T}^*\times N$ than the interim allocation generated by greedy matching algorithm applied to that order. 

\Cref{thm:half_char} is proven via the following intermediate result. 
\begin{lemma}\label{lem:matching_approximationcover}
For $Q^{(k,R)} \in \mathcal{Q}^M$, let $\hat{E}(Q^{(k,R)}) = co(\Lambda^R_+)$. Then $\{Q,\hat{E}(Q)\}_{Q \in \mathcal{Q}^M }$ is a $1/2$-approximation cover. 
\end{lemma}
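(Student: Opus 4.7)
The plan is to verify the two defining properties of a $1/2$-approximation cover: that $\{co(\Lambda^R_+)\}_R$ covers $\Lambda$, and that for each pair $(Q^{(k,R)}, co(\Lambda^R_+))$ the inequality $\lambda(Q^{(k,R)}) \geq \frac{1}{2}\max_{Q' \in \mathcal{I}} \lambda(Q')$ holds for every relevant $\lambda$. By \Cref{lem:alt_approximationcover} it suffices to check this inequality at the extreme points of $co(\Lambda^R_+)$, which by construction are exactly the step functions $\lambda^A \in \Lambda^R_+$ indexed by $R$-upper sets $A=\{(u_1,\tau_1),\dots,(u_{|A|},\tau_{|A|})\}$.

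For coverage I would argue as in the polymatroid case: given any $\lambda \in \Lambda$, pick an order $R$ on $\mathcal{T}^*$ in which $\lambda/\mu^\bullet$ is weakly decreasing (one may restrict to $\lambda \geq 0$, since negative coordinates only slacken the realizability inequality when $Q \geq 0$). Then decompose $\lambda/\mu^\bullet$ as a convex combination of indicators $1_{A_j}$ of the nested $R$-upper sets $A_j$, assigning to $A_j$ the weight equal to the drop of $\lambda/\mu^\bullet$ between positions $j$ and $j+1$. Multiplying through by $\mu^\bullet$ writes $\lambda$ as a convex combination of elements of $\Lambda^R_+$, so $\lambda \in co(\Lambda^R_+)$.

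The substantive step is the approximation inequality at an extreme point $\lambda^A \in \Lambda^R_+$, where I pair $\lambda^A$ with the truncated greedy-matching allocation $Q^{(|A|,R)}$. By state separability (\Cref{lem:state_sep_exact}),
\begin{equation*}
    \max_{Q'\in \mathcal{I}} \lambda^A(Q') = \sum_{t \in T} \mu(t)\, C(S(t,A)),
\end{equation*}
where $C(S(t,A))$ is the size of a maximum matching in the bipartite graph $S(t,A) \subseteq I\times N$. Unpacking the definition of the greedy matching algorithm and the interim average gives
\begin{equation*}
    \lambda^A(Q^{(|A|,R)}) = \sum_{t \in T} \mu(t)\, M(t,A,R),
\end{equation*}
where $M(t,A,R)$ is the number of edges of $S(t,A)$ the greedy matching algorithm selects when processing $\mathcal{T}^*$ in $R$-order; because greedy never revises earlier matches and $A$ is an $R$-upper set, this coincides with the size of the greedy matching applied to $S(t,A)$ alone.

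The core technical input, and the reason the constant $1/2$ appears, is the classical 2-approximation of maximum matching by greedy: if an edge $(i,n)$ of a maximum matching is missed by greedy, then when greedy processed $(i,n)$ one of $i$ or $n$ was already covered by an earlier greedy edge, and each greedy edge can be blamed in this way by at most two edges of the maximum matching. Hence $M(t,A,R) \geq \frac{1}{2}\,C(S(t,A))$ pointwise in $t$; integrating against $\mu$ delivers the required inequality. The hard part is conceptual rather than computational: recognizing that state separability lets a state-by-state ex-post approximation bound for greedy matching lift cleanly to an interim approximation bound, and identifying the natural pairing between $R$-upper sets and truncated greedy-matching allocations that makes the bookkeeping collapse to the classical combinatorial estimate.
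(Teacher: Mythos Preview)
Your proposal is correct and follows essentially the same route as the paper: reduce to an ex-post (state-by-state) inequality via state separability, then invoke a $1/2$-bound for greedy matching versus the maximum matching on $S(t,A)$. The only difference is cosmetic: the paper packages the ex-post bound as \Cref{lem:halfbound} (the ``projection covers'' argument, bounding $C$ by the total row/column capacity touched by greedy) and then lifts via \Cref{lem:expost_approximationcover}, whereas you invoke the textbook charging argument that any maximal matching is a $2$-approximation. These are the same inequality proved two ways; your phrasing is the more standard combinatorial one, while the paper's formulation makes the extension to fractional assignments transparent.
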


\subsubsection{Discussion of \texorpdfstring{\Cref{thm:half_char}}{}}

There are multiple ways in which \Cref{thm:half_char} is useful for understanding optimal mechanisms. Assume that the principal's objective is (weakly) concave in $Q$ and normalized so that the payoff from the null allocation (no agents get any good for any type profile) is zero. Then scaling $Q$ by $\frac{1}{2}$ reduces the principal's payoff by less than $\frac{1}{2}$ (in other words, gives a $2$-approximation to the optimal mechanism). One could then maximize the principal's objective over interim allocations subject to (\ref{eq:BM}), as well as the relevant IC constraints, then scale by $\frac{1}{2}$. By \Cref{thm:half_char} this guarantees at least $\frac{1}{2}$ maximum payoff. Moreover, if agents maximize expected utility then scaling the allocation (and payments when applicable) preserves incentive compatibility. 

An alternative is to maximize directly over the set $co(\mathcal{Q})$. Again, \Cref{thm:half_char} guarantees that this yields at least half of the maximum payoff. This maximization problem may be intractable, given that \Cref{thm:half_char} does not provide a tractable way of checking whether a given interim allocation is in fact in $co(\mathcal{Q})$. However in many settings it suffices to solve a simpler problem. Assume there exists an order $R$ on $\mathcal{T}^*$ such that the principal's payoff is increasing in $R$-FOSD shifts. Since $Q^R$ $R$-FOSD dominates $\frac{1}{2}Q$, $Q^R$ delivers at least $\frac{1}{2}$ of the payoff from any \textit{realizable} interim allocation, regardless of whether such allocations are incentive compatible. It remains to check if $Q^R$ is in fact IC. This boils down to checking whether $R$ satisfies certain conditions. In \Cref{sec:applications} I study settings in which natural conditions on the principal's preferences guarantee that $Q^R$ is IC. The structure of greedy allocations simplifies this exercise. 

\subsubsection{\texorpdfstring{\Cref{thm:half_char}}{} proof}

In order to prove \Cref{thm:half_char} it is necessary to introduce some new objects. Given a pair $(i,n)$, let $\Gamma(i,n)$ be the union of the row and column to which $(i,n)$ belongs. That is,
\begin{equation*}
    \Gamma(i,n) := \{j,n\}_{j\in I} \cup \{i,m\}_{m\in N}.
\end{equation*}
For $\rho \in \mathcal{D}$ and $a \subseteq I\times N$, let $\Gamma^*(\rho,a)$ be \textbf{the projection of $\rho$ in $a$}, defined by 
\begin{equation*}
    \Gamma^*(\rho,a) = \cup_{\{(i,n) \in a : \rho(i,n) =1\}} \Gamma(i,n). 
\end{equation*}
Say that  the projection of $\rho$ in $a$ \textbf{covers} $a$ if $a \subseteq \Gamma^*(\rho,a)$.

\begin{lemma}\label{lem:halfbound}
For any assignment $\rho \in \mathcal{D}$ such that the projection of $\rho$ in $A$ covers $A$
\begin{equation*}
    \sum_{(i,n)\in A} d(i,n) \geq \dfrac{1}{2}c(A).
\end{equation*}
\end{lemma}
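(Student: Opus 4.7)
The plan is to interpret the statement graph-theoretically, viewing $A \subseteq I \times N$ as the edge set of a bipartite graph on the vertex set $I \sqcup N$. The definition of $\Gamma^*(\rho,A)$ only engages $\rho$ on coordinates where $\rho(i,n)=1$, so it is enough to treat $\rho$ as a $0/1$ assignment (for $\rho \in P^{XP}$ the coordinates at which $\rho(i,n)=1$ already form a matching, so nothing is lost). Set $M := \{(i,n) \in A : \rho(i,n) = 1\}$; this is a matching contained in $A$, and $\sum_{(i,n) \in A} \rho(i,n) = |M|$. By the Birkhoff--von Neumann integrality of $P^{XP}$, the value $C(A) = \max_{\rho' \in P^{XP}} \sum_{(i,n) \in A} \rho'(i,n)$ is exactly the size of a maximum matching $M^* \subseteq A$, so it remains to prove $|M| \geq \tfrac{1}{2}|M^*|$.

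Next I would translate the covering hypothesis into standard graph-theoretic language. The assumption that the projection of $\rho$ in $A$ covers $A$ says that every $(i',n') \in A$ lies in $\Gamma(i,n)$ for some $(i,n) \in M$, i.e., every edge of $A$ shares at least one endpoint with some edge of $M$. In other words, $M$ is an \emph{edge-dominating set} of the bipartite graph $A$. In particular, every edge of the optimal matching $M^*$ shares an endpoint with some edge of $M$.

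The key inequality then follows from a straightforward double-counting argument. For each $e^* \in M^*$ choose some $\phi(e^*) \in M$ sharing an endpoint with $e^*$; such a choice exists by edge-domination. For any fixed $e = (i,n) \in M$, every element of $\phi^{-1}(e)$ must be an edge of $M^*$ incident to either $i$ or $n$; but $M^*$ is itself a matching, so at most one edge of $M^*$ uses vertex $i$ and at most one uses vertex $n$. Hence $|\phi^{-1}(e)| \leq 2$, and summing gives $|M^*| = \sum_{e \in M} |\phi^{-1}(e)| \leq 2|M|$, which rearranges to $\sum_{(i,n)\in A} \rho(i,n) = |M| \geq \tfrac{1}{2} C(A)$.

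I do not expect a serious obstacle here: the argument is essentially the folklore fact that in any graph a maximal matching is a $2$-approximation of a maximum matching, applied to $M^*$ restricted to $A$ and the dominating matching $M$. The only care needed is to confirm that the two endpoints of a single edge of $M$ can each absorb at most one edge of $M^*$, which is exactly where the hypothesis $\rho \in \mathcal{D}$ (so $M^*$ is a genuine matching, not a fractional assignment) is used.
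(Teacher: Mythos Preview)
Your proof is correct. The paper's argument is the same idea packaged as a direct LP bound rather than a graph-theoretic one: since $A \subseteq \Gamma^*(\rho,A)$, any doubly substochastic $d$ satisfies $\sum_{(i,n)\in A} d(i,n) \leq \sum_{(i,n)\in\Gamma^*(\rho,A)} d(i,n)$, and $\Gamma^*(\rho,A)$ is the union of $|M|$ rows and $|M|$ columns, each carrying total mass at most $1$, so the right-hand side is at most $2|M|$. This bounds the fractional value $C(A)$ directly and so sidesteps the Birkhoff--von Neumann step you use to convert $C(A)$ into an integer maximum-matching size. Your route through edge domination and the maximal-matching $2$-approximation is equally valid and perhaps more transparent to readers who think in matching-theoretic terms. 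One small slip in your last paragraph: you attribute the fact that $M^*$ is a genuine matching to the hypothesis $\rho \in \mathcal{D}$, but that hypothesis is what makes $M$ a matching; the integrality of $M^*$ is the Birkhoff--von Neumann step you already invoked earlier.
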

\begin{proof}
We have $c(A) \leq \max_{d \in \Delta(D)} \sum_{(i,n) \in \Gamma^*(\rho,A)} d(i,n) \leq 2 \sum_{(i,n)\in A} \rho(i,n)$. The first inequality follows from the fact that the projection of $\rho$ covers $A$. The second follows from the fact that the total mass in each row and column can be no more than 1, and $\Gamma^*(\rho)$ is the union of $\sum_{(i,n)\in A} \rho(i,n)$ rows and $\sum_{(i,n)\in A} \rho(i,n)$ columns.  
\end{proof}

\begin{proof}[Proof of \Cref{thm:half_char}]
\Cref{lem:matching_approximationcover} is immediate from \Cref{lem:halfbound} and \Cref{lem:expost_approximationcover}. Then \Cref{thm:half_char} is implied by \Cref{theorem:approximation_cover_char}. For clarity, I also provide a self-contained proof of \Cref{thm:half_char} in \Cref{app:half_char}. 
\end{proof}

\subsection{Tightening the bound}

Scaling the interim allocation $Q$ by $\frac{1}{2}$, i.e. mixing equally between $Q$ and the null allocation, causes the principal to leave each object unassigned with probability at least $1/2$. This may be a very undesirable outcome. Thus the fact that optimizing over $co(\mathcal{Q})$ produces a mechanism that is better that $\frac{1}{2}Q^*$, where $Q^*$ is the fully optimal mechanism, may not be enough to recommend this approach. However in practice it may be possible to strengthen the payoff approximation by exploiting the ``slack'' introduced by scaling down the interim allocation. 

The key observation is that the 2-approximation to the payoff of $\frac{1}{2}Q$ can be obtained by using the greedy algorithm to only part of $\mathcal{T}^* \times N$. This is an implication of the following useful observation.\footnote{A similar observation is made in \cite{anshelevich2016truthful}, Lemma B.7, of which \Cref{lem:partialfosd} is a generalization.} Given a distribution $F$ on $[0,1]$, let $p_F(\alpha) := \inf\{x : F([0,x]) \geq \alpha \}$. 
\begin{lemma}\label{lem:partialfosd}
Let $F$ and $G$ be measures on $[0,1]$. Assume
\begin{enumerate}[i.]
    \item $F([0,z]) \geq G([0,z])$ for all $z \in [0,1]$, and
    \item $1 = F([0,1]) = \frac{1}{\alpha}G([0,1])$ for some $\alpha \in (0,1]$. 
\end{enumerate}
Define the measure $H$ by $H([0,z]) = F([0,z])$ for $z \leq q_F(\alpha)$ and $H((q_F(\alpha),1]) = 0$. Then $H([0,z]) \geq G([0,z])$ for all $z \in [0,1]$. 
\end{lemma}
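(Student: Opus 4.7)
The plan is to split the verification of $H([0,z]) \geq G([0,z])$ into two cases according to whether $z$ lies below or above the truncation threshold $p_F(\alpha)$. Below the threshold, the claim follows tautologically from the definition of $H$ together with hypothesis \textit{i}; above the threshold, it follows because the truncation is set at exactly the level where $F$ accumulates mass $\alpha$, which is the total mass of $G$.

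More concretely, first I would handle the easy regime $z \leq p_F(\alpha)$. By the definition of $H$ we have $H([0,z]) = F([0,z])$, and hypothesis \textit{i} gives $F([0,z]) \geq G([0,z])$, so the conclusion is immediate. Second, for $z > p_F(\alpha)$ I would use that $H$ places no mass on $(p_F(\alpha), 1]$, so
\begin{equation*}
    H([0,z]) \;=\; H([0,p_F(\alpha)]) \;=\; F([0,p_F(\alpha)]).
\end{equation*}
The key identity to establish is $F([0,p_F(\alpha)]) \geq \alpha$: this is where the definition of $p_F$ as an infimum matters. Take any sequence $x_n \downarrow p_F(\alpha)$ with $F([0,x_n]) \geq \alpha$ (which exists by definition of the infimum). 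Since the sets $[0,x_n]$ decrease to $[0,p_F(\alpha)]$ and $F$ is a finite measure, continuity from above yields $F([0,p_F(\alpha)]) = \lim_n F([0,x_n]) \geq \alpha$. Combining this with $G([0,z]) \leq G([0,1]) = \alpha$ from hypothesis \textit{ii} gives $H([0,z]) \geq \alpha \geq G([0,z])$, completing the second case.

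The two cases together cover $[0,1]$, so the proof is complete. There is no real obstacle here; the one subtle point, and the only thing that is not purely definitional, is the right-continuity step $F([0,p_F(\alpha)]) \geq \alpha$, which is automatic for a finite Borel measure on $[0,1]$ but worth spelling out so that the choice of truncation point is seen to be tight. It is this tightness — truncating $F$ at the quantile $p_F(\alpha)$ retains exactly enough mass to dominate $G$ — that makes the statement more than a triviality and, per the surrounding discussion, what will allow the $\frac{1}{2}$-factor in \Cref{thm:half_char} to be tightened by using greedy matching on only an initial segment of $\mathcal{T}^*$.
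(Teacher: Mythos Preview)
Your proof is correct and follows exactly the paper's approach: split at the threshold $p_F(\alpha)$, invoke condition \textit{i} below and condition \textit{ii} above. The paper's proof is the one-line ``Immediate from condition $i$ for $z \leq p_F(\alpha)$, and from condition $ii$ for $z > p_F(\alpha)$''; your version simply unpacks this, including the right-continuity step $F([0,p_F(\alpha)]) \geq \alpha$ that the paper leaves implicit.
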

\begin{proof}
Immediate from condition $i$ for $z \leq p_F(\alpha)$, and from condition $ii$ for $z > p_F(\alpha)$.
\end{proof}

\Cref{lem:partialfosd} has the following relationship with \Cref{thm:half_char}. Assume the principal's objective is concave and monotone with respect to $R$-FOSD shifts for some order $R$ on $\mathcal{T}^*\times N$. If $Q$ satisfies the (\ref{eq:BM}) condition then \Cref{thm:half_char} says that $\frac{1}{2}Q$ is $R$-FOSD dominated by $Q^R$, the $R$-greedy interim allocation. Thus we have already concluded that $Q^R$ is a $2$-approximation for the principal's problem. \Cref{lem:partialfosd} allows us to say more: in order to obtain a realizable interim allocation $\hat{Q}$ that $R$-FOSD dominates $\frac{1}{2}Q$ it is not necessary to use the greedy algorithm to define $\hat{Q}$ on all of $\mathcal{T}^*\times N$. It is sufficient to find the $\alpha$ such that 
\begin{equation*}
   \frac{1}{2} \sum_{(i,\tau,n) \in \mathcal{T}\times N} Q(i,\tau,n)\mu_i^\bullet(\tau) = \alpha \sum_{(i,\tau,n) \in \mathcal{T}\times N} Q^R(i,\tau,n)\mu_i^\bullet(\tau) 
\end{equation*}
and the smallest $\ell$ such that 
\begin{equation*}
  \sum_{k =1}^{\ell} Q^R(i_k,\tau_k,n_k)\mu_{i_k}^\bullet(\tau_k)  \geq \alpha \sum_{(i,\tau,n) \in \mathcal{T}\times N} Q^R(i,\tau,n)\mu_i^\bullet(\tau) 
\end{equation*}
(where $\mathcal{T}^*\times N$ is indexed in $R$-decreasing order). The we can let $\hat{Q}(i_k,\tau_k,n_k) = Q^R(i_k,\tau_k,n_k)$ for $k \leq \ell$. This is sufficient to guarantee that $\hat{Q}$ that $R$-FOSD dominates $\frac{1}{2}Q$, and we can define $\hat{Q}$ in any way we want below $\ell$.  

If $Q$ is in fact the optimal interim allocation (and so of course realizable) then $\sum_{(i,\tau,n) \in \mathcal{T}\times N} Q(i,\tau,n)\mu_i^\bullet(\tau) \leq \min\{|I|,|N|\}$. Since any greedy allocation assigns every item if $|I| \geq |N|$, and gives every agent an item if $|N| \geq |I|$, we have $\sum_{(i,\tau,n) \in \mathcal{T}\times N} Q^R(i,\tau,n)\mu_i^\bullet(\tau) = \min\{|I|,|N|\}$. Thus $\alpha$ in the above derivation will be less than $1/2$. As a result, only half of the total ex-ante allocation weight needs to come from the greedy algorithm. In certain applications, this may be useful in deriving tighter worst-case bounds.


\section{Matching applications}\label{sec:applications}

\subsection{Principal's preferences}

This section illustrates a few of the forms which the principal's objective may take. We then proceed to applications. 

\vst
\noindent\textit{Utilitarian Welfare}

One commonly studied objective is utilitarian welfare: there exists a weight function $v \in \mathbbm{R}^{\mathcal{T}^*}$ such that the principal's payoff is given by
\begin{equation*}
    V(Q) = \sum_{(i,\tau,n) \in \mathcal{T}^*} Q(i,\tau,n) \mu_i^{\bullet}(\tau) v(i,\tau,n).
\end{equation*}
This form of the objective can also accommodate revenue maximization, using the usual virtual values transformation of \cite{myerson1981optimal}.

\vst
\noindent\textit{Equity preferences}

The utilitarian objective does not embody an explicit preference for equity. Such preferences can be captured by objectives that are concave in the interim allocation. One way to extend the utilitarian welfare objective to incorporate a preference for equity is via the following \textit{rank dependent} preferences. As before, let $v \in \mathbbm{R}^{\mathcal{T}^*}$ be a weight function. Label $\mathcal{T}^*$ such that $k \mapsto v(i_k,\tau_k,n_k)$ decreasing. Then for some increasing function $f: \R \rightarrow \R$ such that $f(0) = 0$, let
    \begin{equation*}
        V(Q) = \sum_{k = 1}^{|\mathcal{T}^*|-1} \big(v(i_k,\tau_k,n_k) - v(i_{k+1},\tau_{k+1},n_{k+1})\big)f\left(\sum_{j=1}^k Q(i,\tau,n) \mu_i^{\bullet}(\tau) \right)
    \end{equation*}
If $f$ is the identify function and the weight function is normalized such that $\min v(i,\tau,n) = 0$ then this is exactly the the utilitarian welfare objective with weight function $v$. However if $f$ is concave then so is $V$. In this case the principal wants to smooth the allocation.

\vst
\noindent\textit{Model uncertainty}

Uncertainty about model fundamentals, such as the ex-post payoff of allocations, can also induce concavity in the principal's objective. Again, the utilitarian model can be extended to accomodate these concerns. Let $W \subset \mathbbm{R}^{\mathcal{T}^*}$ be a set of weight functions which the principal entertains. The principal takes a cautious approach to their uncertainty about the model: they maximize agaist the wost case. The payoff is give by
    \begin{equation*}
        V(Q) = \min_{v \in W} \sum_{(i,\tau,n) \in \mathcal{T}^*} Q(i,\tau,n) \mu_i^{\bullet}(\tau) v(i,\tau,n).
    \end{equation*}

\vst
Throughout the applications I maintain the assumption that that the principal's objective is (weakly) concave in $Q$ and normalized so that the payoff from the null allocation (no agents get any good for any type profile) is zero. All of the above objectives satisfy these assumptions. Moreover, I assume that there exist an order $R$ on $\mathcal{T}^*$ such that the principal's preferences are monotone with respect to $R$-FOSD shifts. This is satisfied by utilitarian and rank-dependent welfare. It is also satisfied by the max-min model if every pair of weight functions in $W$ is comonotone.


\subsection{Relation to Deferred Acceptance}\label{sec:DA}

Consider a standard school-choice setting. Agents $I$ are students and items $N$ are schools. As is standard in this literature, let each student's type set $T_i$ be a set of strict (ordinal) rankings over schools. (I allow for type sets that do not include all such rankings). For $x,y\in N$, write $x\tau y$ to denote that $x$ is above $y$ in the ranking $\tau$. 

The (student proposing) deferred acceptance (DA) algorithm takes as inputs students' reported preferences over schools and schools' rankings over students, and produces the student optimal stable matching.  

The meaning of schools' rankings of students, which following the literature I refer to as \textit{priorities}, is a matter of some debate, and may vary depending on the setting. In many settings, such as the Boston school system, these priorities are generated by some centralized setting. The priorities may reflect whether or not a student is in the walk-zone for a given school, if they have siblings attending the school, their academic records, and various affirmative action policies (e.g \cite{hafalir2013effective}). 

If the priorities are set to reflect the preferences of the principal, in this case the school district, over allocations, it is natural to ask how well DA does. To be precise, how close to maximizing their objective can the principal come by choosing priority rankings over students for each school and running deferred acceptance? \Cref{thm:half_char} allows us to provide a partial answer to this question.

Let $\succ$ represent the principal's priority order on $\mathcal{T}^*$. Given the priority order, I define a notion of relevance of a given realization $(i,\tau,n)$.

\begin{definition}
Given the principal's priority order $\succ$, say that $(i,\tau_i,n)$ is \textbf{blocked} if there is no $t_{-i}$ such the greedy algorithm applied to order $\succ$ gives item $n$ to $i$ in state $(\tau_i,t_{-i})$. Say that $(i,\tau_i,n)$ is unblocked if it is not blocked. 
\end{definition}

\begin{definition}
Say that $\succ$ is \textbf{item-ranking consistent} if for all $n$ and $i\neq i'$, if $(i,\tau,n)$ is unblocked and $(i,\tau,n) \succ (i',\tau',n)$ then $(i,\tau'',n) \succ (i',\tau''',n)$ for all $\tau''$ and all $\tau'''$ such that $(i',\tau''',n)$ is unblocked. 
\end{definition}

The following simple observation says that item-ranking consistency makes it possible to back out item priority orders from $\succ$ in a consistent way.

\begin{lemma}\label{lem:item_rankings}
The following are equivalent
\begin{enumerate}[(i.)]
    \item $\succ$ is item-ranking consistent
    \item There exists a family $\{>_n\}_{n \in N}$ of orderings of $I$ such that for any $n$ and any unblocked $(i,\tau,n)$, if $(i,\tau,n) \succ (j,\tau',n)$ and $i\neq j$ then $i >_n j$.
\end{enumerate}
\end{lemma}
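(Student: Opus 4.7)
The plan is to prove $(i) \Rightarrow (ii)$ constructively, by reading the family $\{>_n\}$ off the order $\succ$, and $(ii) \Rightarrow (i)$ by appealing to trichotomy of $\succ$.

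For $(i) \Rightarrow (ii)$, I would fix $n$ and let $I_n := \{i \in I : \exists\, \tau \text{ with } (i,\tau,n) \text{ unblocked}\}$. Define $>_n$ on $I_n$ by
\begin{equation*}
    i >_n j \ \Longleftrightarrow \ \exists\, \tau,\tau' \text{ such that } (i,\tau,n) \text{ is unblocked and } (i,\tau,n) \succ (j,\tau',n).
\end{equation*}
I would then verify that $>_n$ is a strict total order on $I_n$. \emph{Connectedness:} for $i \neq j$ in $I_n$, pick unblocked $(i,\tau,n)$ and $(j,\tau',n)$; trichotomy of $\succ$ forces $i >_n j$ or $j >_n i$. \emph{Antisymmetry:} if $i >_n j$ and $j >_n i$, witnessed by $(i,\tau,n)\succ (j,\tau',n)$ and $(j,\sigma',n)\succ (i,\sigma,n)$, then item-ranking consistency applied to the first witness (which has unblocked LHS) yields $(i,\sigma,n)\succ(j,\sigma',n)$, contradicting $(j,\sigma',n)\succ(i,\sigma,n)$. \emph{Transitivity:} if $i >_n j$ via $(i,\tau_1,n)\succ(j,\tau_2,n)$ with $(i,\tau_1,n)$ unblocked, and $j >_n k$ via $(j,\sigma_1,n)\succ(k,\sigma_2,n)$ with $(j,\sigma_1,n)$ unblocked, then item-ranking consistency gives $(i,\tau_1,n)\succ(j,\sigma_1,n)$; transitivity of $\succ$ gives $(i,\tau_1,n)\succ(k,\sigma_2,n)$, hence $i >_n k$ (the case $i=k$ is ruled out by antisymmetry just established). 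Finally, extend $>_n$ arbitrarily to the remaining elements of $I\setminus I_n$. The resulting family satisfies (ii) directly by the definition of $>_n$.

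For $(ii) \Rightarrow (i)$, assume the family $\{>_n\}$ exists and suppose $(i,\tau,n)$ is unblocked with $(i,\tau,n)\succ(i',\tau',n)$ and $i\neq i'$. By (ii), $i >_n i'$. Take any $\tau''$ and any $\tau'''$ with $(i',\tau''',n)$ unblocked; I must show $(i,\tau'',n)\succ(i',\tau''',n)$. By trichotomy of $\succ$, the only alternative is $(i',\tau''',n)\succ(i,\tau'',n)$; but then applying (ii) to this unblocked pair yields $i' >_n i$, contradicting strictness of $>_n$.

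I do not expect any serious obstacle: once the definition of $>_n$ is fixed, antisymmetry and transitivity reduce cleanly to one application each of item-ranking consistency plus a use of trichotomy/transitivity of $\succ$. The only subtle point is bookkeeping around agents $i \notin I_n$ (for whom every $(i,\tau,n)$ is blocked), which I sidestep by extending $>_n$ arbitrarily on $I \setminus I_n$, since condition (ii) only constrains the order at unblocked triples.
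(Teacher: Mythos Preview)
Your proof is correct and follows essentially the same approach as the paper: both define $>_n$ on the set of agents with an unblocked $(i,\tau,n)$ via the existence of a witnessing $\succ$-comparison, then extend arbitrarily to the remaining agents, and both prove the converse by contraposition using strictness of $>_n$. The only difference is that you spell out connectedness, antisymmetry, and transitivity of $>_n$ explicitly, whereas the paper compresses this into the single remark ``this is well defined under item-ranking consistency.''
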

\begin{proof}
First, assume $\succ$ is item-ranking consistent. Fix $n$, and define $>_n$ as follows. Let $D \subset I$ be the set of agents such that for any $i\in D$, there exists $\tau$ such that $(i,\tau,n)$ is unblocked. Then for $i\in D$, let $i >_n i'$ if there exists $\tau'$ such that $(i,\tau,n) \succ (i',\tau',n)$. This is well defined under item-ranking consistency. The remaining agents $I\setminus D$ can be ordered in any way at the bottom of $>_n$, below all $i\in D$.

Conversely, suppose condition (ii) holds. If there is a violation of item-ranking consistency then there exists unblocked $(i,\tau, n)$ and $(i',\tau''',n)$ such that for some $\tau',\tau''$, $(i,\tau,n) \succ (i',\tau',n)$, and $(i',\tau''', n) \succ (i,\tau'',n)$. By property (ii), $(i,\tau,n) \succ (i',\tau',n)$ implies $i >_n i'$, and $(i',\tau''', n) \succ (i,\tau'',n)$ implies $i' >_n i$, which is absurd.
\end{proof}

\begin{definition}
Say that $\succ$ is \textbf{welfarist} if for any unblocked $(i,\tau,n)$, if  $(i,\tau,n) \succ (i,\tau,n')$ then $n\tau n'$.
\end{definition}

This property is straightforward: the priority order is welfarist if it respects the preferences of the agents. We require this only for unblocked tuples.

\begin{theorem}\label{thm:DA_guarantee}
Assume that the priority order is \textbf{welfarist} and \textbf{item-ranking consistent}. Then there exists a family of priority rankings for schools such that 
\begin{enumerate}
    \item DA is equivalent to the greedy algorithm applied to $\succ$ (so produces the student optimal stable matching).
    \item DA guarantees half the principal's full-information payoff. 
\end{enumerate}
\end{theorem}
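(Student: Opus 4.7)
The plan has three stages: extract school priorities from $\succ$ using \Cref{lem:item_rankings}, show that greedy applied to $\succ$ equals the DA outcome at each state $t$, and then invoke \Cref{thm:half_char} for the payoff bound. For the first stage, item-ranking consistency gives a family $\{>_n\}_{n\in N}$ of strict orderings on $I$ such that for every $n$ and unblocked $(i,\tau,n)$, if $(i,\tau,n)\succ(i',\tau',n)$ with $i\neq i'$ then $i>_n i'$; I would feed these to DA as school priorities.

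For the second stage, fix $t$ and let $\mu^G$ be the greedy matching at $t$. I would show $\mu^G$ is both stable and student-optimal under preferences $\{t_i\}$ and the priorities $\{>_n\}$. Stability: if $(i,n)$ were blocking, revisit when greedy processed $(i,t_i,n)$. If $i$ was already matched to some $n'$, then $(i,t_i,n')\succ(i,t_i,n)$ and welfarism yields $n' \, t_i \, n$, so $i$ actually prefers $\mu^G(i)=n'$ to $n$, ruling out blocking. If instead $n$ was already matched to $i'=\mu^G(n)$, then $(i',t_{i'},n)\succ(i,t_i,n)$; because greedy matched $i'$ to $n$ at $t$, $(i',t_{i'},n)$ is unblocked and item-ranking consistency gives $i'>_n i$, again ruling out blocking.

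Student-optimality is by a rejection chain. Suppose some stable $\mu'$ has $\mu'(i_0)=n_1 \, t_{i_0} \, \mu^G(i_0)$. The tuple $(i_0,t_{i_0},\mu^G(i_0))$ is unblocked (greedy matched it at $t$), and welfarism's contrapositive then forces $(i_0,t_{i_0},n_1)\succ(i_0,t_{i_0},\mu^G(i_0))$. Hence greedy considered $n_1$ for $i_0$ earlier and found $n_1$ already held by $i_1=\mu^G(n_1)$, and item-ranking consistency gives $i_1>_{n_1}i_0$. Non-blocking of $(i_1,n_1)$ in $\mu'$ then forces $\mu'(i_1)=n_2$ with $n_2 \, t_{i_1} \, n_1$. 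Iterating, the greedy-match tuple strictly climbs in $\succ$: $(i_{k+1},t_{i_{k+1}},n_{k+1})\succ(i_k,t_{i_k},n_{k+1})\succ(i_k,t_{i_k},n_k)$ at every step. Since $\mathcal{T}^*$ is finite, the chain must terminate, and termination can only occur at an $i_k$ unmatched in $\mu'$; but then $(i_k,n_k)$ blocks $\mu'$ because $i_k>_{n_k}i_{k-1}=\mu'(n_k)$ and $n_k$ is acceptable to $i_k$ (greedy matched them), a contradiction.

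For the payoff bound, $\mu^G=\mu^{DA}$ pointwise means the DA interim allocation equals the greedy allocation $Q^{\succ}$. The standing assumptions make $V$ concave with $V(0)=0$ and monotone in $\succ$-FOSD shifts. \Cref{thm:half_char} part \ref{thm:1:part:i} then gives, for any full-information optimum $Q^*\in\mathcal{I}$, that $\tfrac{1}{2}Q^*$ is $\succ$-FOSD dominated by $Q^{\succ}$, whence $V(Q^{\succ})\geq V(\tfrac{1}{2}Q^*)\geq\tfrac{1}{2}V(Q^*)$ using concavity and $V(0)=0$. The main technical obstacle I anticipate is keeping the rejection chain inside unblocked tuples so that item-ranking consistency remains applicable throughout; the anchor is that each $n_k$ is matched in $\mu^G$ at $t$, so $(\mu^G(n_k),t_{\mu^G(n_k)},n_k)$ is unblocked by construction.
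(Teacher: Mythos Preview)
Your proposal is correct and follows the same high-level architecture as the paper: extract school priorities from $\succ$ via \Cref{lem:item_rankings}, prove that the greedy outcome is stable and student-optimal under those priorities (hence equals DA), and then read off the payoff guarantee from \Cref{thm:half_char}. The stability argument is essentially identical to the paper's.

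The one substantive difference is in the student-optimality step. The paper argues by forward induction on the stages of the greedy algorithm, introducing the notion of an agent being \emph{forestalled} from an item and showing inductively that no agent is ever forestalled from an item attainable in some stable matching (mirroring Gale--Shapley's original optimality proof). You instead run a rejection chain: starting from a putative stable improvement for $i_0$, you climb strictly in $\succ$ until a blocking pair appears. Both approaches exploit exactly the same two ingredients---welfarism to move within an agent's tuples and item-ranking consistency (via \Cref{lem:item_rankings}) to compare across agents at a fixed item---so the arguments are close cousins. Your chain argument is a bit more direct and avoids the auxiliary Claim~1/induction scaffolding; the paper's induction makes the parallel to the classical DA optimality proof more transparent.

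Two small points to tighten. First, your claim that ``termination can only occur at an $i_k$ unmatched in $\mu'$'' is not quite right: the chain also terminates if $i_k$ is matched in $\mu'$ but strictly prefers $n_k$ to $\mu'(i_k)$. Either way $(i_k,n_k)$ blocks $\mu'$, so your conclusion is unaffected. Second, your first welfarism step presumes $i_0$ is matched in $\mu^G$; if $i_0$ is unmatched there, skip directly to observing that greedy must have found $n_1$ already taken when processing $(i_0,t_{i_0},n_1)$, and the chain starts all the same. Finally, the paper leaves part~2 of the theorem implicit; your explicit derivation $V(Q^{\succ})\geq V(\tfrac12 Q^*)\geq \tfrac12 V(Q^*)$ from $\succ$-FOSD monotonicity, concavity, and $V(0)=0$ is the intended argument.
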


\cite{roth1985college} shows that in the deferred acceptance mechanism it is a dominant strategy for proposers with unit demand to report truthfully. Thus \Cref{thm:DA_guarantee} has the following incentive implications. 

\begin{corollary}
If $\succ$ is welfarist and item-ranking consistent then the greedy algorithm applied to $\succ$ is DSIC. 
\end{corollary}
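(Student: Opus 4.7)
The plan is to reduce the corollary directly to the deferred acceptance equivalence established in \Cref{thm:DA_guarantee} together with the classical incentive result of \cite{roth1985college}. Since the corollary is essentially an immediate consequence of strategy-proofness of DA plus an outcome-equivalence, the main work is to articulate why the equivalence transfers the DSIC property, rather than to prove anything substantively new.

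First, I would fix students' true types and consider any reported type profile $t \in T$. By part 1 of \Cref{thm:DA_guarantee}, under the hypotheses that $\succ$ is welfarist and item-ranking consistent, there exists a family $\{>_n\}_{n\in N}$ of school priority rankings (obtained via \Cref{lem:item_rankings}) such that the greedy algorithm applied to $\succ$ produces, for \emph{every} type profile $t$, the same matching as the student-proposing deferred acceptance algorithm run with student preferences $t$ and school priorities $\{>_n\}$. In particular, the two mechanisms are outcome-equivalent as direct mechanisms on the domain $T$: the allocation each student $i$ receives under any reported profile coincides under greedy and under DA.

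Second, I would invoke \cite{roth1985college}: in the student-proposing deferred acceptance mechanism with unit demand, truthful reporting is a dominant strategy for each student, regardless of the school priorities $\{>_n\}$ used. Concretely, for every student $i$, every true type $\tau_i \in T_i$, every reported type $\tau_i' \in T_i$, and every profile of reports $t_{-i} \in T_{-i}$ by the other students, the match assigned to $i$ under DA at $(\tau_i, t_{-i})$ is at least as preferred (according to $\tau_i$) as the match assigned to $i$ under DA at $(\tau_i', t_{-i})$.

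Finally, I would combine the two: because the greedy mechanism applied to $\succ$ and the corresponding DA mechanism produce identical allocations at every type profile, the allocation $i$ obtains under the greedy mechanism at $(\tau_i, t_{-i})$ equals that under DA at $(\tau_i, t_{-i})$, and likewise for the deviation $(\tau_i', t_{-i})$. Hence the DSIC inequality for $i$ under DA transfers verbatim to the greedy mechanism. Since this holds for all $i$, all $\tau_i$, and all $t_{-i}$, the greedy algorithm applied to $\succ$ is DSIC. The only conceptual subtlety to flag is that DSIC is a property of the mechanism as a map from reports to outcomes, so once outcome-equivalence is established the result is immediate; there is no genuine obstacle in the argument beyond correctly quoting \Cref{thm:DA_guarantee} and the Roth result.
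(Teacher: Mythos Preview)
Your proposal is correct and follows essentially the same route as the paper: the paper simply observes that by \Cref{thm:DA_guarantee} the greedy mechanism is outcome-equivalent to student-proposing DA, and then invokes \cite{roth1985college} to conclude DSIC. Your write-up just spells out the transfer of DSIC under outcome-equivalence more explicitly than the paper does.
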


\begin{remark}
The set of welfarist and item-ranking consistent principal priority orders is non-empty. For example, assume that all item rankings are identical, given by $>$. Then order $\mathcal{T}^*$ ``lexicographically'': first, rank according to $>$, and then according to the individual's preferences. 
\end{remark}

In \Cref{app:more_on_DA} I explore in more detail the types of allocations generated by the greedy algorithm applied to a priority order that is welfarist and item-ranking consistent. Unsurprisingly, any serial dictatorship can be generated in such a way. However, I show that other types of allocations can also generated by such a priority.    

\vst
\noindent\textit{Proof of \Cref{thm:DA_guarantee}.} Let $\{>_n\}_{n\in N}$ be (one of) the item rankings defined by \Cref{lem:item_rankings}. I first show that the greedy algorithm applied to $\succ$ produces a stable matching for every state $t = \{\tau_i\}_{i\in I}$. Suppose that agent $j$ is allocated item $n$. Then $(j,\tau_j,n)$ is certainly not blocked. Suppose there exists an agent $i$ allocated item $n'$, such that $n \tau_i n'$. Since $\succ$ is welfarist, it must be that $(j,\tau_j,n) \succ (i,\tau_i,n) \succ (i,\tau_i,n')$. Since $\succ$ is item-ranking consistent, it must be that $j >_n i$. Thus the matching is stable. 

I now show that the matching is agent-optimal. Say that an agent $i$ is \textit{forestalled} from item $n$ at stage $k$ of the greedy algorithm if after stage $k-1$: \textit{i}) item $n$ has already been assigned, \textit{ii}) $i$ is unassigned, and \textit{iii}) $i$ prefers $n$ to any object not yet assigned. 

\textit{Claim 1.} If at stage $k$ of the greedy algorithm agent $i$ is assigned item $n$ and $i$ has not been forestalled from any object that $i$ could receive in some stable matching, then $n$ must be $i$'s most preferred item among those that $i$ can receive in any stable matching. 

\textit{Proof of Claim 1.} Let $n'$ be the object assigned to $i$, and let $k$ be the round of the greedy algorithm at which this assignment is made. Suppose $i$ prefers $n$, which $i$ could receive in some stable matching. Since $(i,\tau_i,n')$ is not blocked, it must be that $(i,\tau_i,n) \succ (i,\tau_i,n')$, since otherwise we would have a violation of welfarism of $\succ$. Since $i$ was not assigned $n$, it must be that $n$ was assigned at some stage of the algorithm before $k$. But then $i$ is forestalled from item $n$ at some stage, which we assumed was not the case. 

Given Claim 1, we want to show that at no stage is an agent forestalled from any object that $i$ could receive in some stable matching. The proof is by induction.\footnote{This proof is inspired by that used in \cite{gale1962college} to show that DA is proposer-optimal.}

Assume as the induction hypothesis that up to stage $k$ in the greedy algorithm, no agent has been forestalled from an object that they could receive in some stable matching. Let $(i,\tau_i,n)$ be the $k+1$ highest realization according to $\succ$. We want to show that at this stage $i$ is not forestalled from an object that they could get in another stable matching. Suppose towards a contradiction that $i$ is forestalled from item $n$, which $i$ could receive in some stable matching. This means that there is some unblocked $(j,\tau_j,n) \succ (i,\tau_i,n)$, and so since $\succ$ is item-ranking consistent, by \Cref{lem:item_rankings} it must be that $j >_n i$. By the induction hypothesis, $j$ has not been forestalled at any item that they could get in another stable matching. By Claim 1, this means that $j$ prefers $n$ to any other object that $j$ could get in a stable matching. Consider a stable matching in which $i$ receives $n$. Then $j$ receives some other item that they like less than $n$. Since we have already concluded that $j >_n i$, this matching cannot be stable. 

By induction, no agent is ever forestalled from an item that they could get in any other stable matching. By Claim 1, this means that all agent's receive the best item among those that they could get in any stable matching, i.e. the matching is agent-optimal. The equivalence to the deferred acceptance matching follows from \cite{gale1962college}. 
\hfill\qedsymbol


\subsection{Cardinal preferences and serial dictatorship}

In the previous application, only ordinal preferences of the agents were considered. In some settings, the principal may care about the intensity of agents' preferences. This is especially relevant if the principal is able to use transfers to illicit cardinal preferences. 

Suppose each agent's type is $\tau = (v,h)$, where $v \in \mathbbm{R}_+$ is \textit{vertical type}, representing preference intensity, and $h$ is a permutation of $h^1 > h^2 > \dots h^N$ representing ordinal preferences, referred to as the \textit{horizontal type}. The payoff of an agent with type $\tau = (v_{\tau},h_{\tau})$ who receives item $n$ with probability $q(n)$ and makes payment $p$ is given by
\begin{equation*}
    v_{\tau} \sum_{n \in N} q(n)h_{\tau}(n) - p,
\end{equation*}
or more compactly $v_{\tau} q\cdot h_{\tau} - p$. This model generalizes the ranked-item auction studied in \cite{kleiner2021extreme} in two ways. First, agents may have different ordinal rankings over the items. Second, the ordinal rankings may potentially be agents' private information.

Types are assumed to be independent across agents, with distribution $\mu_i^{\bullet}(v,h)$. Let $F_i(\cdot|h)$ be the CDF of $i$'s vertical type, with mass function $f_i(\cdot|h)$, conditional on horizontal type $h$. Let $\nu_i(v_{\tau},h_{\tau}) := v_{\tau} - \frac{1-F_i(v_{\tau}|h_{\tau})}{f_i(v_{\tau}|h_{\tau})}$. Suppose that horizontal types were known to the principal. Then the expected revenue from interim allocations $Q_i:T_i\times N \rightarrow [0,1]$ is 
\begin{equation*}
 Rev(Q) :=  \sum_{v \in V} \sum_{h \in H} \sum_{n\in N} \sum_{i\in I} \nu_i(v|h) \cdot h(n) \cdot f_i(v)  g_i(h) \cdot Q_i(v,h,n) 
\end{equation*}
Moreover, if horizontal types are known then an interim allocation $Q_i$ is implementable iff $v_{\tau}> v_{\tau'}  \Rightarrow h_{\tau} \cdot \Vec{Q}_i(\tau) \geq h_{\tau'} \cdot \vec{Q}_i(\tau')$. The principal's objective is a weighted sum of revenue and a welfare measure, given by 
\begin{equation*}
V(Q) = U(Q) + \beta Rev(Q).
\end{equation*} 

An allocation is called a \textit{type-specific serial dictatorship} if for each profile of vertical types, there exists a priority order on agents such that for any profile of horizontal types, the agents are awarded their highest-ranked item in order of priority. That is, the highest-priority agent receives their top-ranked item, and each subsequent agent in the priority order receives their top item among those not allocated to higher-priority agents. Importantly, while the priority order is fixed for each profile of vertical types, it may vary across vertical-type profiles. It is easy to see that if an order $R$ on $\mathcal{T}^*$ is welfarist then the $R$-greedy allocation is a type-specific serial dictatorship.

Assume that $v \mapsto \nu_i(v|h)$ is increasing for all $i\in I$ and $h \in H$. Then using the characterization of \Cref{thm:half_char} we can derive a revenue bound via serial dictatorship allocations. 

\begin{proposition}\label{prop:SD_knownh}
If horizontal types are observed, and $v \mapsto v_i(v,h)$ is monotone for all $i,h$ then the $R$-greedy allocation is a $2$-approximation and is DSIC type-specific serial dictatorship (with suitable payments). 
\end{proposition}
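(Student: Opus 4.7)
The plan is to apply \Cref{thm:half_char} with a carefully chosen order $R$, then verify that the induced $R$-greedy allocation has the claimed type-specific serial-dictatorship form and admits DSIC payments. I would take $R$ to rank $(i,v,h,n) \in \mathcal{T}^* \times N$ in decreasing order of the combined weight $w(i,v,h,n) := [v + \beta\, \nu_i(v|h)]\, h(n)$, i.e.\ the sum of a utilitarian welfare weight $v\,h(n)$ and a Myerson-style virtual-revenue weight $\beta\,\nu_i(v|h)\,h(n)$. With this choice, $V(Q) = U(Q) + \beta\,Rev(Q)$ is linear in $Q$ with coefficients arranged in decreasing $R$-order, so $V$ is monotone under $R$-FOSD shifts. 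Crucially, because $w$ factors as $[v + \beta \nu_i(v|h)] \cdot h(n)$, the order $R$ is \emph{welfarist}: holding $(i,v,h)$ fixed, the ranking of $(i,v,h,n)$ tuples over $n$ coincides with $h$, which is $i$'s own preference over items.

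To recover the serial-dictatorship structure, I would fix a realized state $(v,h)$ and trace through the $R$-greedy algorithm. Among the activated tuples $\{(i,(v_i,h_i),n)\}_{i,n}$, the one greedy assigns first has the highest weight; by welfarism, if it corresponds to agent $i_1$, the item assigned is $i_1$'s top item under $h_{i_1}$. All subsequent $i_1$-tuples are then skipped by unit demand. Iterating, whenever a new agent is first assigned they receive their top remaining item, so the algorithm implements a serial dictatorship whose priority order is the sequence in which agents are first assigned -- a deterministic function of the reported $v$ and observed $h$, giving a type-specific serial dictatorship. For the 2-approximation, apply \Cref{thm:half_char} to the optimal realizable $Q^*$: $\tfrac{1}{2}Q^*$ is $R$-FOSD-dominated by $Q^R$, so monotonicity of $V$ combined with concavity and $V(0)=0$ yields $V(Q^R) \geq V(\tfrac{1}{2}Q^*) \geq \tfrac{1}{2}V(Q^*)$.

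Since horizontal types are observed, each agent $i$ only reports $v_i$. Monotonicity of $v \mapsto \nu_i(v|h)$ ensures that $w(i,v,h_i,n)$ is non-decreasing in the reported $v$ for every item $n$, so reporting higher weakly improves $i$'s position throughout the $R$-order and hence weakly increases $i$'s interim probability of being assigned a high-ranked (in $h_i$) item. This monotonicity of the interim allocation in $v_i$ is precisely the hypothesis of Myerson's one-dimensional characterization, so standard envelope-type payments (computed conditional on the observed horizontal profile) make truthful reporting of $v_i$ a dominant strategy. The main obstacle is to rigorously justify the ``block-processing'' step -- that once agent $i$ is first assigned, all remaining $i$-tuples are skipped before any new agent is first assigned -- so that the induced agent-order is a well-defined serial-dictatorship priority; this follows from welfarism combined with unit demand, but needs careful handling of ties in $R$ and of the subtlety that the priority generally depends on the observed $h$-profile as well as $v$.
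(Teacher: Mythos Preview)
Your approach is essentially the paper's: pick an order $R$ whose weight on $(i,v,h,n)$ factors as an agent-score times $h(n)$, invoke \Cref{thm:half_char} for the $2$-approximation via $R$-FOSD monotonicity, argue that welfarism forces the state-by-state greedy assignment to be a serial dictatorship, and use monotonicity of the agent-score in $v$ for DSIC. The one substantive difference is the order itself: you rank by the full objective coefficient $[v+\beta\,\nu_i(v|h)]\,h(n)$, so that $V$ is literally the $R$-weighted sum and $R$-FOSD monotonicity of $V$ is immediate; the paper instead ranks by $\nu_i(v|h)\,h(n)$ alone, which makes $Rev$ directly $R$-monotone and treats the welfare term by a separate remark.

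One correction on the serial-dictatorship step. The ``block-processing'' property you identify as the main obstacle---that once agent $i$ is first assigned, the algorithm exhausts $i$'s remaining tuples before any other agent is first assigned---is false in general: tuples of different agents interleave whenever the agent-scores are close relative to the spread $h^1/h^N$. Fortunately the property is not needed. Welfarism alone guarantees that whenever an agent is first assigned, the item they receive is their top choice among those still available (their own tuples appear in $h$-decreasing order, and the first one whose item is free is the one that fires); ordering agents by the time of their first assignment therefore yields a valid serial-dictatorship priority directly. The paper argues this point differently: since $h^i[j]=h^j$ for every $i$, the comparison between agents' rank-$j$ tuples reduces to comparing agent-scores, from which the priority is read off without tracing the algorithm. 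Your closing caveat that the resulting priority may depend on the observed $h$-profile is well-placed and is a point the paper handles rather briskly.
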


\begin{remark}
In fact, the $R$-greedy allocation is a 2-approximation to both parts of the principal's objective, welfare and revenue. 
\end{remark}

\begin{proof}
First, define an arbitrary order on agents, called the \textit{base priority order}. Order $\mathcal{T}^*$ in decreasing order of $\nu_i(v|h)h(n)$, i.e. such that $\nu_{i_k}(v_{k}
|h_k)h_k(n_k) \geq \nu_{k+1}(v_{k+1}|h_k)h_{k+1}(n_{k+1})$ for all $k$, with ties broken according to the base priority order. Call this the \textit{design order}. Let $y_k = \nu_{i_k}(v_{k}|h_k)h_k(n_k)$. I first show that the greedy algorithm applied to this order delivers higher revenue that $\frac{1}{2} Q$ for any interim allocation $Q$ that satisfies condition (\ref{eq:BM}). Since (\ref{eq:BM}) is necessary for realizability, this implies that the greedy algorithm delivers at least half the revenue that could be obtained with known types. The revenue from $\frac{1}{2}Q$ is given by
\begin{equation*}
   \frac{1}{2} \sum_{k=1}^K y_k \mu_i^{\bullet}(\tau) Q(i,\tau,n)
\end{equation*}
so the fact that the greedy algorithm applied to the design order is higher than that of $\frac{1}{2} P$ is exactly \cref{eq:half_char4}. 

It remains to show that the specified greedy allocation is incentive compatible, and corresponds to a serial dictatorship. Fix a profile of vertical types $\{v^i \}_{i\in I}$. Consider an arbitrary horizontal-type profile $\{h^i \}_{i\in I}$. Let $h^i[j]$ be the $j^{th}$ highest ranked item for agent $i$. By assumption, $h^i[j] = h^l[j]$ for all $i,l \in I$ and $j \in \{1, \dots, |N|\}$. Since $v \mapsto \nu_i(v|h)$ is increasing for all $i\in I$ and $h \in H$ by assumption, if $v_i > v_l$ then $\nu_i(v^i|h^i)h^i[j] \geq \nu_l(v^l|h^l)h^l[j]$ for all $j$. Thus for this fixed vertical type profile, the relative ranking of agents in the design order is the same, regardless of the horizontal type profile. Thus allocating the items according to the design order is equivalent to a serial dictatorship. 

The final piece is to show that this allocation rule is incentive compatible. This holds because for any agent $i$ and any horizontal-type profile, the rank of the item assigned to $i$ is increasing in $i$'s reported type. Thus $v_{\tau}> v_{\tau'}  \Rightarrow h_{\tau} \cdot \Vec{Q}_i(\tau) \geq h_{\tau'} \cdot \vec{Q}_i(\tau')$ holds, and the allocation is incentive compatible. 
\end{proof}

The situation is more delicate if horizontal types are unobserved. Assuming truthful reporting of horizontal types, the allocation quality of an individual is increasing in their vertical type report, and payments can be constructed such that truthful reporting of vertical types is optimal. The issue is that these payments will in general depend on the horizontal types: they are constructed such that downward IC constraints bind, but the value of downward deviations can depend on the horizontal type. Nonetheless, under stronger assumptions it is possible to show that the $R$-greedy allocation remains incentive compatible, albeit at the cost of replacing ex-post with interim incentive compatibility.

Say that horizontal types are \textit{uniform} if $h$ is a permutation of $\{1,\dots,N\}$

\begin{proposition}\label{prop:SD_unknownh}
    Assume 
    \begin{enumerate}[i.]
        \item $v,h$ are independent
        \item horizontal types are uniform and uniformly distributed
        \item and $v \mapsto v_i(v)$ is monotone for all $i$
    \end{enumerate}
    Then the $R$-greedy allocation is a $2$-approximation and is BIC (with suitable payments).  
\end{proposition}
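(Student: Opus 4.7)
The plan is to follow the argument of \Cref{prop:SD_knownh} for the $2$-approximation, then handle BIC via a symmetry property of the greedy allocation under assumption (ii). By assumption (i), $\nu_i(v,h)=\nu_i(v)$, so define the design order $R$ on $\mathcal{T}^*$ by sorting in decreasing order of $\nu_i(v)h(n)$, breaking ties by an arbitrary base priority on agents. The $2$-approximation claim then follows exactly as in \Cref{prop:SD_knownh}: by \Cref{thm:half_char}, for any realizable $Q$ the scaled allocation $\tfrac{1}{2}Q$ is $R$-FOSD dominated by the $R$-greedy interim allocation $Q^R$, and the principal's objective is linear in $Q$ with coefficients arranged in $R$-decreasing order, hence monotone under $R$-FOSD shifts; this gives $V(Q^R)\ge \tfrac{1}{2}V(Q)$ for every realizable $Q$. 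Under independence of $v$ and $h$, the Myerson virtual-surplus formula coincides with expected revenue for any BIC implementation, so the bound applies also to the revenue that a BIC mechanism can extract.

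The heart of the argument is BIC. Fix agent $i$ and assume opponents report truthfully. I claim first that $Q^R_i(v,h,n)=\bar{Q}_i(v,h(n))$ for some function $\bar{Q}_i(v,\cdot):\{1,\dots,N\}\to[0,1]$, and second that $\bar{Q}_i(v,\cdot)$ is weakly increasing. The symmetry follows from assumptions (i) and (ii): opponents' horizontal types are i.i.d.\ uniform permutations of $\{1,\dots,N\}$ independent of their vertical types, so for any permutation $\sigma$ of $N$, relabelling items by $\sigma$ leaves the opponent-type distribution invariant and commutes with the greedy algorithm; this forces $Q^R_i(v,h,n)=Q^R_i(v,h\circ\sigma^{-1},\sigma(n))$, which depends on $h,n$ only through the rank $h(n)$. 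Monotonicity in the rank is where the main obstacle lies: $i$'s pairs are examined in decreasing order of $h(n)$, so the top item is considered first while lower-ranked items are considered later and only if the higher-ranked ones were unavailable. An item-exchangeability coupling (valid because opponents treat items symmetrically) will yield that availability at earlier ticks stochastically dominates availability at later ticks, even after conditioning on $i$ being unmatched through the current tick; the delicate point is that $i$'s own assignment at an earlier tick affects what is available to others later, and it is the exchangeability across item labels that lets this feedback preserve the inequality.

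Given these two properties, truthful horizontal reporting follows from the rearrangement inequality: for any fixed $\hat v$, the expression $\sum_n h(n)\bar{Q}_i(\hat v,\hat h(n))$ is maximized over permutations $\hat h$ at $\hat h=h$ because the sequences $n\mapsto h(n)$ and $n\mapsto \bar{Q}_i(\hat v,\hat h(n))$ are then similarly ordered. Consequently, the expected quality under truthful $h$, $\tilde{Q}_i(\hat v):=\sum_{k=1}^N k\,\bar{Q}_i(\hat v,k)$, depends only on the vertical report. A further coupling argument shows $\tilde{Q}_i$ is weakly increasing in $\hat v$: by assumption (iii), $v'>v$ yields $\nu_i(v')>\nu_i(v)$, which moves every one of $i$'s pairs earlier in the global greedy ordering and stochastically improves the rank of the item $i$ receives. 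Applying the Myerson envelope $p_i(\hat v)=\hat v\,\tilde{Q}_i(\hat v)-\int_0^{\hat v}\tilde{Q}_i(s)\,ds$, with payments depending only on the vertical report, makes truthful $\hat v$ a best response; combined with optimality of $\hat h=h$ at every $\hat v$, this yields BIC.
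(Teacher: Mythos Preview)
Your argument is correct, but it takes a more circuitous path to the BIC conclusion than the paper does. The paper leans directly on the serial-dictatorship structure already established in \Cref{prop:SD_knownh}: for each vertical-type profile the greedy allocation hands agent $i$ her \emph{reported} favorite among the items left by higher-priority agents, so truthful horizontal reporting is optimal realization by realization---you cannot do better than receiving your actual favorite from whatever set remains. Coupled with the observation that under (ii) the expected quality at any serial-dictatorship rank is independent of $h$ (the remaining set is a uniformly random subset, and $h$ is just a permutation of $\{1,\dots,N\}$), payments can be chosen to depend on $\hat v$ alone, after which the vertical-type argument is standard. You instead pass through the interim monotonicity claim $k\mapsto \bar Q_i(v,k)$ and the rearrangement inequality. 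The claim is true---and in fact drops out immediately from the serial-dictatorship picture, since $\bar Q_i(v,k)$ is the probability that the maximum of a uniform size-$m$ subset of $\{1,\dots,N\}$ equals $k$, i.e., $\binom{k-1}{m-1}/\binom{N}{m}$---but your exchangeability-coupling sketch leaves it underspecified, and the step is unnecessary once you use the ex-post argument. What your route buys is an explicit symmetry identity $Q^R_i(v,h,n)=\bar Q_i(v,h(n))$, which is nice structural information; what the paper's route buys is a one-line proof that also upgrades optimality of truthful $h$ from interim to ex post.
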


\begin{proof}
    Assume that agents report horizontal types truthfully. Under the stated assumptions, for any profile of vertical type reports, payments are independent of the horizontal type reports. This is because in expectation, the change in an agents payoff from moving from rank $j$ in the priority order to rank $k$ in the priority order does not depend on their horizontal type. Then the only effect of mis-reporting the horizontal type is to induce a potentially worse assignment. The remainder of the argument is as in \Cref{prop:SD_knownh}.
\end{proof}

\newpage

\appendix
\section*{Appendix}

\section{Details on Birkhoff-von Neumann}\label{sec:birkhoff}

With indivisible items, an (ex-post) \textit{assignment} specifies to which agent, if any, each item goes. An ex-post assignment can be represented as an integer matrix $\rho \in \{0,1\}^{I\times N}$, where $\rho_{in}=1$ if $i$ is assigned item $n$, with the restriction that $\sum_{n\in N} \rho_{in} \leq 1$ for all $i\in I$ (the row constraint) and $\sum_{i\in I}\rho_{in} \leq 1$ for all $n\in N$ (the column constraint). Denote by $\mathcal{D}$ the set of assignments. Let $co(\mathcal{D})$ be the convex hull of the set of assignments. 

The design object is a map from type profiles to distributions over assignments. Denote this set by $\Delta(\mathcal{D})$. I restrict attention to problems in which only the marginals of such distributions matter.\footnote{Alternatively, in settings with fractional assignments, the design object is simply the quantity of each item assigned to each individual in each state.} That is, for each type profile $t$ both agents and the principal care only about the the marginal distribution over the item assigned to each agent.\footnote{In the single item case the distribution over assignments is uniquely identified by the marginal distribution, i.e. the probability that each agent gets the item. However with multiple items there may be multiple distributions over assignments which have the same marginals.} This rules out settings with complementarities. While the literature has largely focused on settings for which the marginal approach is appropriate, it is worth acknowledging that there are many interesting problems with complementarities.\footnote{Complementarities are know to introduce a number of difficulties into matching problems, such as non-existence of stable matches. See for example \cite{echenique2007solution} and \cite{che2019stable} for a discussion of matching with complements.} For example the principal's preference over which agent is assigned item $n$ may depend on which agent is assigned item $n'\neq n$. However, without the assumption that only the marginal assignment distribution matters, we could not adopt the interim approach to the problem, and would instead need to work directly with the allocation, an high-dimensional object. 

Under the assumption described above, the design object is simply an function from $T$ to $co(\mathcal{D})$. By the Birkhoff-von Neumann Theorem, any matrix $\rho \in \R_{+}^{I\times N}$ such that $\sum_{n\in N} \rho_{in} \leq 1$ for all $i\in I$ and $\sum_{i\in I}\rho_{in}\leq 1$ for all $n\in N$ is an element of $co(\mathcal{D})$.\footnote{For the case of general capacity constraints, this conclusion is implied by the generalization of the Birkhoff-von Neumann theorem in \cite{budish2013designing}.} In light of the above discussion then, define an \textit{allocation} as a map $q: T\times I\times N \rightarrow \R_+$ such that for all $t\in T$, $\sum_{i\in I}q(t,i,n) \leq 1$ for all $n\in N$ and $\sum_{n\in N} q(t,i,n) \leq 1$ for all $i\in I$. Here $q(t,i,n)$ is the probability that agent $i$ gets item $n$ when the type profile is $t$.

\section{Greedy algorithm and DA}\label{app:more_on_DA}

The equivalence between the greedy and DA assignments was established under the conditions that the principal's priority order be welfarist and item-ranking consistent. Here, I explore the types of priority orders for which these conditions hold, and the implications for the induced greedy/DA assignments. 

\begin{definition}
	An assignment $q$ is a \textbf{serial dictatorship} if there exists an order $M$ on $I$ such that $q$ is equivalent to the output of the following algorithm: order agents according to $M$, and have them pick their favorite item among those that have not already been chosen. 
\end{definition}

\begin{definition}
An assignment is \textbf{unresponsive} if it is invariant in the type profile. 
\end{definition}

\begin{lemma}
	If $q$ is a serial dictatorship then there exists a welfarist and item-ranking consistent priority order $\succ$ on $\mathcal{T}^*$ such that $q$ is the $\succ$-greedy allocation.
\end{lemma}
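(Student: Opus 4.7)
The approach is constructive. Given the agent ordering $M$ that generates the serial dictatorship $q$, I define $\succ$ lexicographically with position in $M$ as the coarsest key. Concretely, fix any strict total order $\lhd_i$ on each type set $T_i$, and declare $(i,\tau,n) \succ (i',\tau',n')$ whenever either (a) $i$ strictly precedes $i'$ in $M$, or (b) $i=i'$ and $\tau \lhd_i \tau'$, or (c) $i=i'$, $\tau=\tau'$, and $n\tau n'$. This gives a strict total order on $\mathcal{T}^*$; I then verify the three required properties.

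The main task is showing the $\succ$-greedy allocation agrees with $q$ in every state $t$. Under $\succ$, the elements of $\mathcal{T}^*$ split into agent-blocks ordered by $M$. When the greedy algorithm enters agent $M[1]$'s block in state $t$, only tuples with second coordinate $t_{M[1]}$ are ``active''; any $(M[1],\tau,\cdot)$ with $\tau \neq t_{M[1]}$ is skipped because $t \not\sim (M[1],\tau)$. Within the active sub-block the secondary order (c) coincides with $t_{M[1]}$'s preference over items, so the first successful assignment is $M[1]$ receiving their top item under $t_{M[1]}$, after which all remaining $M[1]$-tuples are blocked because $M[1]$ is used. Moving to $M[2]$'s block, the same argument applies, now with the extra caveat that any item already taken by $M[1]$ blocks the corresponding $(M[2],t_{M[2]},\cdot)$ tuple, so $M[2]$ receives their top item under $t_{M[2]}$ among those still available. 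Iterating reproduces the serial dictatorship output $q(t,\cdot,\cdot)$.

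The remaining two properties are read directly off the construction. For welfarism: $(i,\tau,n) \succ (i,\tau,n')$ can hold only via case (c), which explicitly requires $n \tau n'$. For item-ranking consistency: if $i \neq i'$ and $(i,\tau,n) \succ (i',\tau',n)$, cases (b) and (c) are precluded (each requires $i=i'$), so case (a) holds, i.e.\ $i$ precedes $i'$ in $M$; but then $(i,\tau'',n) \succ (i',\tau''',n)$ also by case (a) for every $\tau''$ and $\tau'''$, with no role for unblocking. The only place requiring care is the first verification, where one must track the inactive tuples interleaved by $\lhd_i$ within an agent's block; beyond this book-keeping I anticipate no real obstacle.
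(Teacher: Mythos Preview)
Your proposal is correct and follows essentially the same construction as the paper: a lexicographic order on $\mathcal{T}^*$ with agent position in $M$ as the primary key, an arbitrary order on types as the secondary key, and the agent's own item preference as the tertiary key. The paper's proof is terser---it omits the verification that the $\succ$-greedy allocation coincides with $q$ (treating this as evident from the construction) and checks item-ranking consistency by invoking \Cref{lem:item_rankings} with $>_n = M$ for all $n$ rather than directly from the definition---but your more explicit verifications are sound and arguably more complete.
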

\begin{proof}
	Let agents be ordered according to $M$. We define $\succ$ lexicographically. Let $i_1$ be the first agent according to $M$. For each $\tau \in T_1$, order $N$ in $\tau$-decreasing order. Then arrange these lists, one for each $\tau \in T_1$, in any order. Do the same for $i_2$, the second agent according to $M$, and place this list after that for $i_1$. Proceeding in this way defines $\succ$. It is obviously welfarist. It is also items-ranking consistent by \Cref{lem:item_rankings}: let $>_n = M$ for all $n$.
\end{proof}

Welfarist and item-ranking consistent priority orders can generate assignments that combine elements of unresponsive and serial dictatorship assignments. The following example provides a concrete illustration.

\begin{example}
Let $I = \{i_1,i_2,i_3\}$ and $N = \{n_1,n_2,n_3\}$. Denote by $abc$ the type $\tau$ such that $n_1\tau n_2 \tau n_3$ (so type $123$ ranks $n_1$ first, then $n_2$, then $n_3$). Let $T_1 = \{123, 321 \}$, $T_2 = \{213, 321 \}$, and $T_3 = \{321 \}$. Consider the following priority order $\succ$, listed in decreasing order
\begin{align*}
(i_1, n_1, 123) & \\
\text{Segment } 1 \quad (i_1, n_2, 123) & \quad blocked\\ 
(i_1, n_3, 123) & \quad blocked\\ 
&\\
(i_2, n_2, 213) & \\
\text{Segment } 2 \quad (i_2, n_1, 213) & \quad blocked\\ 
(i_2, n_3, 213) & \quad blocked\\ 
&\\
(i_2, n_3, 321) & \\
\text{Segment } 3 \quad (i_2, n_2, 321) & \quad blocked\\ 
(i_2, n_1, 321) & \quad blocked\\ 
&\\
(i_3, n_3, 321) & \\
\text{Segment } 4 \quad (i_3, n_2, 321) & \\ 
(i_3, n_1, 321) & \quad blocked\\ 
&\\
(i_1, n_3, 321) & \quad blocked\\
\text{Segment } 5 \quad (i_1, n_2, 321) & \quad blocked\\ 
(i_1, n_1, 321) &
\end{align*}

 That $\succ$ is welfarist and item-ranking consistent is easily verified by inspection. The item priority orders $i_1 \succ_1 i_2 \succ_1 i_3$; $i_2 \succ_2 i_2 \succ_2 i_3$; and $i_2 \succ_3 i_3 \succ_3 i_1$ satisfy the conditions in \Cref{lem:item_rankings} for item-ranking consistency. Notice that the greedy algorithm applied to this order assigns $n_1$ to $i_1$ for any type profile. Moreover, $n_2$ and $n_3$ are allocated between $i_2$ and $i_3$ according to a serial dictatorship, with priority given to $i_2$.
\end{example}

\begin{remark}
There may be multiple principal priority orders that induce the same greedy allocation. In the above example, Segment 1 could be placed between Segments 4 and 5 without altering the greedy assignment, or violating welfarism and item-ranking consistency. Thus the restrictions the principal's priority orders implied by the joint assumption of welfarism and item-ranking consistency are less severe than the restrictions on the induced greedy assignments. 
 \end{remark}

\section{Alternative proof of \texorpdfstring{\Cref{thm:half_char}}{}}\label{app:half_char}
 
I first prove part $i$. Given this, I use the separating hyperplane theorem to show part $ii$. Part $iii.$ is immediate from part $ii.$

\vst
\noindent\textit{Proof of part $i.$} Fix an order $R$, and index $\mathcal{T}^*$ in $R$-decreasing order. Since $P$ satisfies condition (\ref{eq:BM}), 
\begin{equation}\label{eq:half_char1}
    \sum_{l = 1}^k P_{i_l}(\tau_l,n_l)\mu_{i_l}^{\bullet}(\tau_l) \leq \sum_{t\in T} \mu(t) c\left(I\left(t,\{(i_l,\tau_l,n_l)\}_{l=1}^k\right)\right)
\end{equation}
for all $k$. By construction of $q^R$, the projection of $q^R$ in the set $\{(i_l,\tau_l,n_l)\}_{l=1}^k$ covers $\{(i_l,\tau_l,n_l)\}_{l=1}^k$. Thus by \Cref{lem:halfbound}
\begin{equation}\label{eq:half_char2}
    \sum_{l = 1}^k Q^R_{i_l}(\tau_l,n_l)\mu_{i_l}^{\bullet}(\tau_l) \geq \frac{1}{2} \sum_{t\in T} \mu(t) c\left(I\left(t,\{(i_l,\tau_l,n_l)\}_{l=1}^k\right)\right).
\end{equation}
Part $i$ of \Cref{thm:half_char} follows by combining \cref{eq:half_char1} and \cref{eq:half_char2}. 

\vst
\noindent\textit{Proof of part $ii.$} By the separating hyperplane theorem, if $\frac{1}{2}P \not\in co(\mathcal{Q})$ then there exists $y \neq 0$ in $\R^{(\mathcal{T}^*\times N)}$ and $b \in \R$ such that 
\begin{equation}\label{eq:half_char3}
        \sum_{(i,\tau) \in \mathcal{T}^*}\sum_{n\in N} y(i,\tau,n)\mu_{i_l}^{\bullet}(\tau) \left(\frac{1}{2} P_i(\tau,n) - Q_i(\tau,n) \right) >0
\end{equation}
for any $Q \in \mathcal{Q}^M$. Order $\mathcal{T}^*$ so that $k \mapsto y(i_k,\tau_k,n_k)$ is decreasing, and let $R$ be the order that this corresponds to. Let $k^* = \max\{ k : y(i_k,\tau_k,n_k) > 0\}$. Note that $k^* < |\mathcal{T}^*\times N|$ since $0 \in \mathcal{Q}^M$. Let $\hat{Q}$ be the interim allocation induced by greedy allocation $q^R$ truncated at $k^*$. Then \cref{eq:half_char3} holds only if 
\begin{equation}\label{eq:half_char4}
\begin{split}
     0 &< \sum_{l=1}^{k^*} y(i_l,\tau_l,n_l)\mu_{i_l}^{\bullet}(\tau) \left(\frac{1}{2} P_{i_l}(\tau_l,n_l) - \hat{Q}_{i_l}(\tau_l,n_l) \right)\\
    & \leq \sum_{l = 1}^{k^*} \big(y(i_l,\tau_l,n_l) - y(i_{l+1},\tau_{l+1},n_{l+1}) \big)   \sum_{j=1}^l \mu_{i_j}^{\bullet}(\tau) \left(\frac{1}{2} P_{i_j}(\tau_j,n_j) - \hat{Q}_{i_j}(\tau_j,n_j)\right)    
\end{split}
\end{equation}
where we obtain the final inequality by rewriting the sum in the first line, and using the fact that  $y(i_{k^*+1},\tau_{k^*+1},n_{k^*+1}) \leq 0$ by assumption and $\frac{1}{2} P_{i_{k^*+1}}(\tau_{k^*+1},n_{k^*+1}) - \hat{Q}_{i_{k^*+1}}(\tau_{k^*+1},n_{k^*+1}) \geq 0$ by definition of $\hat{Q}$. By part $i$ we have that 
\begin{equation*}
    \left(\frac{1}{2} P_{i_j}(\tau_j,n_j) - \hat{Q}_{i_j}(\tau_j,n_j)\right) \leq 0
\end{equation*}
for all $l$, which implies that \cref{eq:half_char4} cannot hold.


\bibliography{references.bib}

\end{document}